\documentclass[11pt, a4paper]{article}
\usepackage{amssymb}
\usepackage[english]{babel}
\usepackage{amsmath,amsthm,amssymb,amsfonts}
\usepackage[colorlinks = true]{hyperref}
\usepackage{makeidx}
\usepackage[pdftex]{graphicx, xcolor}
\usepackage[all]{xy}
\usepackage{extarrows}
\usepackage[new]{old-arrows}
\usepackage{cancel}
\usepackage[top=3.80cm, bottom=2cm, left=2cm, right=2cm]{geometry}
\usepackage[utf8]{inputenc}
\usepackage[T1]{fontenc}
\usepackage{tikz}
\usetikzlibrary{shapes.geometric, arrows.meta, calc, positioning}

\usepackage{needspace}
\allowdisplaybreaks

\theoremstyle{plain} 
\newtheorem{thm}{Theorem}[section] 
 
\newtheorem{lem}[thm]{Lemma} 
\newtheorem{prop}[thm]{Proposition} 
\newtheorem{conj}[thm]{Conjecture}
\newtheorem*{theorem*}{Theorem} 
\newtheorem*{prop*}{Proposition} 

\renewenvironment{proof}{{\noindent\bfseries Proof.

}}{\qed \medskip}

\theoremstyle{definition} 
\newtheorem{defn}[thm]{Definition}

\theoremstyle{definition} 
\newtheorem{oss}[thm]{Remark} 
\newtheorem{ex}[thm]{Example}

\title{\Large{\bfseries A constructive approach to generalized principal connections}}
\author{\normalsize Lorenzo Fatibene
\\ 
{\footnotesize Department of Mathematics, University of Torino}
\\
{\footnotesize via Carlo Alberto 10, 10123 Torino, Italy}
\\
{\footnotesize Istituto Nazionale di Fisica Nucleare (INFN), Sezione di Torino}
\\
{\footnotesize via Pietro Giuria 1, 10125 Torino, Italy}
\\
{\footnotesize e--mail: {\sc lorenzo.fatibene@unito.it}} 
\\
\\
\normalsize Hartwig Winterroth\thanks{Corresponding author.}
\\
{\footnotesize Department of Mathematics, University of Torino}
\\
{\footnotesize via Carlo Alberto 10, 10123 Torino, Italy}
\\
{\footnotesize e--mail: {\sc hartwig.winterroth@unito.it}}
}
\date{}
\begin{document}

\maketitle

\begin{abstract}

\noindent We address the recently introduced notions of {\em generalized principal bundle} and {\em generalized principal connection} by keeping track of global geometric properties through local coordinate transformation laws. This approach leads us to introduce generalized principal bundle coordinates and to find their transformation laws. Besides, we show that any {\em Lie group fiber bundle} (and hence, in particular, any vector bundle) is a generalized principal bundle and we give a proof of the fact that any Lie group fiber bundle with connected typical fiber is an associated bundle to a suitable principal bundle. Moreover, we present a direct way to characterize Lie group fiber bundle connections and generalized principal connections in terms of horizontal lifts and of local conditions. Finally, we recover in our setting some already known results, including that generalized principal connections are associated only to Lie group fiber bundle connections and that they reduce to usual principal connections on standard principal bundles. Our results are needed in order to understand how generalized principal connections might fit in the fiber bundle treatment of classical field theories, aiming towards a notion of generalized gauge theory.
\end{abstract}

\noindent {\bfseries Keywords}: Lie group fiber bundle; generalized principal bundle; generalized principal connection.

\medskip

\noindent {\bfseries 2020 Mathematics Subject Classification}: 53C05; 
53C80; 
81T13. 

\section{Introduction}

Principal bundle theory is the cornerstone of the modern approach to {\em gauge theories}, one of the main breakthroughs of the last century regarding deep links between mathematics and physics. The path which led to understand that the known fundamental interactions in physics are governed by gauge symmetries is usually set to start with the much celebrated article \cite{YANG (1954)} published by Yang and Mills in 1954, although the following words of Yang underline the often neglected fact that fiber bundle theory was not the initially background in which gauge theories were constructed:
\begin{quotation}
\begin{footnotesize} 
\noindent That non-Abelian gauge fields are conceptually identical to ideas in the beautiful theory of fiber bundles, developed by mathematicians {\em without reference to the physical world}, was a great marvel to me. In 1975, I discussed my feelings with Chern [Shiing Shen Chern (1911--2004)], and said, \lq\lq This is both thrilling and puzzling, since you mathematicians dreamed up these concepts out of nowhere.\rq\rq\,\,He immediately protested, \lq\lq No, no, these concepts were not dreamed up. They were natural and real.\rq\rq

\noindent (Chen Ning Yang in {\em Magnetic monopoles, fiber bundles, and gauge fields}, 1977, \cite{YANG (1977)})
\end{footnotesize} 
\end{quotation}
This is even more surprising from the perspective that we have today, since the rigorous approach (shared for example by \cite{BAEZ (1994)}, \cite{BLEECKER (1981)}, \cite{FATIBENE (2003)}, \cite{GIACHETTA (2009)}, \cite{KOLAR (1993)}, \cite{KRUPKA (1973)} and \cite{MARATHE (1992)}) to classical field theories (which comprise gauge theories without quantizations) is the one constituted by fiber bundle theory together with the jet prolongation formalism, i.e.\,\,the relevant field equations are formulated using {\em variational calculus} on fiber bundles and their jet prolongations.

\medskip

Following the spirit behind the answer of Chern to Yang, it would be then quite reasonable to explore the possibility of developing an instance of generalized gauge theories by investigating the notions of {\em generalized principal bundle} and {\em generalized principal connection} as recently introduced by Castrillón López and Rodríguez Abella in \cite{CASTRILLON LOPEZ (2023)}. This idea should be carried out obviously in the present-day mathematical framework for classical field theories and its reasonableness must be retraced in the prospect of finding a potentially unifying language for different kinds of covariant field theories, new geometrical structures or even hints of new physics.

At this stage we are not concerned with applications, but more with whether these concepts are \lq\lq natural and real\rq\rq\,\,and we think that the new contributions appeared in \cite{CASTRILLON LOPEZ (2023)} have this connotation, as this work should also prove.

\medskip

For now, in this paper we shall review the theory of generalized principal bundles and connections through a complementary {\em constructive approach}. By this we mean an approach arising from the {\em fiber bundle construction theorem} (see for example \cite[Chapter 16, Section 13]{DIEUDONNE' (1972)}), which tells us what are the minimal data that help us to {\em construct} a fiber bundle. At length, this approach can be fruitfully reduced to an ample usage of local coordinate techniques and this is how we will proceed in practice. Hence, this will be a rather technical (but nevertheless crucial) result filled preliminary paper, while the forthcoming paper \cite{WINTERROTH (2026)} will actually start the development of generalized gauge theories, in the sense presented above.

\medskip

With respect to the last paragraph, we should undertake a short digression. The general structure of our paper will be the following: to give definitions in global language regarding global objects, to rewrite these definitions in local language and then to prove theorems using local coordinates, keeping track of the relevant transformation laws or, in other words, of the globality present in the background.

We have chosen to use this kind of procedure because working in local coordinates gives in general, on the one hand, the opportunity to easily see the relations between the mathematical objects at the center of the discussion and, on the other hand, the possibility to compare the obtained results with the past literature in a more straightforward way. Nonetheless, this situation can cause a bit of suspicion since it is common sense that in mathematics one should use global languages as long as it is possible, especially in differential geometry. 

\medskip

As a matter of fact, these two approaches are equally right. For example in any classical field theory the usage of the fiber bundle framework (global side) is a mean to trace covariance of the transformation laws of the corresponding fields (local side). By not declaring transformation laws for fields we would obtain just a local field theory which means, at a physical level, that we would get the description of some events from one specific observer without knowing how to relate observations from different laboratories. Hence {\em absolute knowledge}, or {\em globality}, must be found through the transformation laws. In other words, {\em the geometry is in the transformation laws}. The last sentence is the core concept of what we denote by constructive approach.

\medskip

Moreover, since in applications to physics coordinates (and trivializations) are to be identified with observers and their readings, when one is required to theoretically model experiments and observations, coordinates are needed eventually.

\medskip

Returning to the contents of the paper, after recollecting some useful results from the theory of principal bundles in Section \ref{sec 1.1}, we subsequently summarize in Section \ref{sec 1.2} the main results about Lie group fiber bundles and fibered actions appeared in the article by Castrillón López and Rodríguez Abella, adding some new remarks and details. In particular, we carefully review the Generalized Quotient Manifold Theorem (see Theorem \ref{thm 1.12}) because of its significance for the remainder of our paper. For the historical background of Lie group fiber bundles we refer to \cite{CASTRILLON LOPEZ (2023)} and the references therein, noting also that a different approach that addresses these fiber bundles as particular cases of Lie groupoids has appeared in \cite{BLAZQUEZ-SANZ (2022)}.

In Section \ref{sec 1.3} we study generalized principal bundles. Through Example \ref{ex 1.3.2}, we notably prove that any Lie group fiber bundle is a generalized principal bundle and deduce from this result that vector bundles and adjoint bundles are examples of generalized principal bundles. Following our constructive approach to the topic, we introduce generalized principal bundle coordinates, i.e.\,\,fibered coordinates adapted to the peculiar structure of generalized principal bundles, with their transformation laws (see \eqref{11}). Besides, we also prove that any Lie group fiber bundle, with connected typical fiber, is an associated bundle to a certain standard principal bundle (see Proposition \ref{prop 1.3.4}). We can note here that the recent paper \cite{CATTAFI (2026)} provides a notion of principal bundle groupoid, or {\em PB-groupoid}, centered on Lie 2-groupoid actions, which recovers as a particular case the notion of generalized principal bundle. However, a discussion of this framework lies beyond the scope of the present work and will hence not be pursued here.

In Section \ref{sec 2.1} we investigate Lie group fiber bundle connections presenting a straightforward way to characterize Lie group fiber bundle connections in terms of horizontal lifts (see Proposition \ref{prop 6.3}), alternative to the one given in \cite{CASTRILLON LOPEZ (2023)}. We subsequently use this result in order to find the local coordinate characterization of Lie group fiber bundle connections (see Theorem \ref{thm 6.4}). In Appendix \ref{app A} we explicitly check the invariance of these local conditions under change of coordinates. After that, in Example \ref{ex 2.2.2} we give a proof of the fact that the Lie group fiber bundle connections on a fixed vector bundle are exactly the linear connections on it.

In Section \ref{sec 2.3} we examine generalized principal connections (while in Appendix \ref{app B} a heuristic comparison with usual principal connections is given) presenting again a straightforward way to characterize generalized principal connections in terms of horizontal lifts (see Proposition \ref{prop 7.4}). We use this result in order to find as well the local coordinate characterization of generalized principal connections (see Theorem \ref{thm 7.6}). In Appendix \ref{app C} we explicitly check the invariance of these local conditions under change of coordinates. After that, in Example \ref{ex 2.4.2} we give a proof of the fact that the generalized principal connections on a fixed vector bundle are exactly the affine connections on it. In the meantime, we also exhibit a new proof (derived from the aforementioned local conditions) of the fact that generalized principal connections are associated only to Lie group fiber bundle connections (the original proof of this result can be found still in \cite{CASTRILLON LOPEZ (2023)}).

In Section \ref{sec 2.5} we provide a new, direct and short proof of the fact that usual principal bundles and connections are indeed particular cases of the generalized objects (see Theorem \ref{thm 2.5.1}), another key result present in \cite{CASTRILLON LOPEZ (2023)}.

Finally, in Section \ref{sec 2.6} we draw our conclusions, describing with particular attention what are the next steps in the planned path towards generalized gauge theories.

\medskip

Although we will not go along the following lines of study, it is worth noting that the original interests of Castrillón López and Rodríguez Abella in generalized principal connections lay in their application to the formulation of a Lagrangian reduction theory for covariant field theories with gauge symmetries (as they eventually carried out in \cite{CASTRILLON LOPEZ (2024)}).

\medskip

In the following, we suppose that all manifolds are real, of finite dimension and smooth.  We denote as $(P, M, \pi_{P, M}, F)$ a fiber bundle (which will be always locally trivial for us) such that $P$ is the total space, $\mathrm{dim}(P)=m+n$, $M$ is the base space, $\mathrm{dim}(M)=m$, $F$ is the standard fiber, $\mathrm{dim}(F)=n$, and $\pi_{P, M} \colon P \rightarrow M$ is the projection. Moreover, the notation $(x^\mu, y^i)$ will stand for a system of fibered coordinates on $P$, where $\mu \in\{1, \dots, m\}$ and $i \in\{1, \dots, n\}$. For more details on the general theory of fiber bundles we refer for example to \cite[Chapter 1]{SAUNDERS (1989)}. The Einstein summation convention for repeated indices regarding local coordinates is assumed.

We point out as well that, given two manifolds $M$ and $N$, we denote the tangent map of a smooth map $f \colon M \rightarrow N$ at a point $x \in M$ by $T_x(f) \colon T_{x}(M) \rightarrow T_{f(x)}(N)$.

When we need to compute the partial derivatives of a function of several variables, for example $f=f(x, y)$, $\partial_{x}f$ or $\frac{\partial f}{\partial x}$ stands for the partial derivative function with respect to the $x$ argument and $\partial_{x}f(\xi, \eta)$ or $\frac{\partial f}{\partial x}(\xi, \eta)$ stands for its value at the point $(\xi, \eta)$. As a clarification, this means for example that if $f(x, y)=x^2+y$, then $\partial_{x}f(\xi, \eta)=\frac{\partial f}{\partial x}(\xi, \eta)=2\xi$, while $\frac{\partial}{\partial x}f(\xi, \eta)=\frac{\partial}{\partial x}[\xi^2+\eta]=0$, i.e.\,\,the preceding notation is introduced in order to take into account the non-commutativity between the operation of partial derivation and the one of evaluation.

\section{Standard principal bundles} \label{sec 1.1}

In the first place we recall some basic results from the theory of principal bundles that will be useful below. Recall in particular that one of the equivalent ways to define principal bundles, crucially capturing the bare minimum data in order to do so, is the following:
\begin{defn}\label{defn 1.4}
A {\em (standard) principal bundle with structure Lie group} $G$ is a manifold $P$ equipped with a smooth right action of a Lie group $G$:
\begin{align*}
m \colon P \times G \longrightarrow P \colon (p, g) \longmapsto m(p, g)
\end{align*}
such that:
\begin{enumerate}
\item $m$ is {\em free}, which means that if $m(p, g)=p$, for some $p \in P$, then $g=e$, where $e$ is the identity element of $G$.
\item The orbit space $S=P/G$ has a smooth manifold structure with the property that the canonical projection $\pi_{P, S} \colon P \longrightarrow S$ is a smooth submersion {\em or}, equivalently, $m$ is {\em proper}, that is the (smooth) map:
\begin{align*}
\Theta \colon P \times G \longrightarrow P \times P \colon (p, g) \longmapsto \Big(p, m(p, g)\Big)
\end{align*}
is a proper map, meaning that for every compact set $K \subseteq P \times P$, the preimage $\Theta^{-1}(K) \subseteq P \times G$ is compact.
\end{enumerate}
\end{defn}

Using the this definition we get that $(P, S, \pi_{P, S}, G)$ is a fiber bundle (see Theorem \ref{thm 1.2} below) and that the action $m$ is vertical (sends fibers into themselves) and transitive on the fibers (since the fibers are the orbits of $m$).

\medskip

This way to define principal bundles is the more suited one in order to understand the generalized notion of principal bundle that we will review in Section \ref{sec 1.3} and is based on the well known {\em Quotient Manifold Theorem} and also on a partial converse to it:
\begin{thm}[Quotient Manifold Theorem]\label{thm 1.2}
Let $m \colon P \times G \longrightarrow P$ be a smooth, free and proper right action of a Lie group $G$ on a manifold $P$. 
Then we have that:
\begin{enumerate}
\item $S=P/G$ is a topological manifold.
\item $\mathrm{dim}(S)=\mathrm{dim}(P)-\mathrm{dim}(G)$.
\item $S$ has a unique smooth structure with the property that the canonical projection $\pi_{P, S} \colon P \longrightarrow S$ is a smooth submersion.
\item $(P, S, \pi_{P, S}, G)$ is a fiber bundle.
\end{enumerate} 

\end{thm}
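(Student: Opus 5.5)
The approach is the standard slice construction, as in Lee's treatment of the Quotient Manifold Theorem. Fix $p \in P$ and consider the orbit map $\theta_p \colon G \to P$, $g \mapsto m(p,g)$. Freeness together with properness makes $\theta_p$ an injective immersion: its differential at $e$ is injective because the infinitesimal generators vanish only for the zero Lie algebra element, and equivariance propagates injectivity to every $g$. Properness of $\Theta$ then makes $\theta_p$ a proper map, hence a closed embedding, so each orbit $O_p=m(p,G)$ is a closed embedded submanifold of dimension $\mathrm{dim}(G)$. Choose a submanifold $\Sigma \ni p$ of dimension $\mathrm{dim}(P)-\mathrm{dim}(G)$ transverse to $O_p$ at $p$.

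The central step is to show that, after shrinking $\Sigma$, the map
\begin{align*}
\phi \colon \Sigma \times G \longrightarrow P \colon (s,g) \longmapsto m(s,g)
\end{align*}
is a diffeomorphism onto an open $G$-invariant set $U\subseteq P$. By construction $T_{(p,e)}\phi$ is an isomorphism. Equivariance, $\phi(s,gh)=m(\phi(s,g),h)$, carries this to every point $(p,g)$, and by openness of the invertibility condition to all $(s,g)$ with $s$ near $p$. Hence $\phi$ is a local diffeomorphism on $\Sigma \times G$ once $\Sigma$ is small.

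The main obstacle is \emph{injectivity}, i.e.\ that $\Sigma$ meets each orbit at most once. Here properness is used decisively. I would argue by contradiction: suppose there are sequences $s_k\neq s_k'$ in shrinking neighbourhoods of $p$ in $\Sigma$ and elements $g_k$ with $m(s_k,g_k)=s_k'$. Then $(s_k,s_k')\to(p,p)$, so by properness of $\Theta$ a subsequence $g_k\to g$. Passing to the limit gives $m(p,g)=p$, and freeness forces $g=e$. But $\phi$ is injective on a neighbourhood of $(p,e)$ by the inverse function theorem, so eventually $(s_k,g_k)=(s_k',e)$, contradicting $s_k\neq s_k'$.

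With slices in hand, the conclusions follow. The quotient $S$ is Hausdorff because the orbit relation $\Theta(P\times G)$ is closed, being the image of a proper map, and $\pi_{P,S}$ is open. $S$ is second countable since $P$ is. Each $\pi_{P,S}(U)$ is homeomorphic to $\Sigma$, which gives points 1 and 2. Transition maps between slice charts are smooth: if $\Sigma$ and $\Sigma'$ are two slices, the map $\Sigma \to \Sigma'$ is $\mathrm{pr}_{\Sigma'}\circ\phi'^{-1}\circ\iota_\Sigma$, which is smooth. In these charts $\pi_{P,S}$ reads as the projection $\Sigma\times G\to\Sigma$, hence is a submersion. Uniqueness in point 3 follows because a surjective submersion admits local sections, so any two such structures have smooth identity maps in both directions. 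Finally, the maps $\phi$ are local trivializations $\pi_{P,S}^{-1}(\pi_{P,S}(U))=U\cong\pi_{P,S}(U)\times G$, which gives point 4.
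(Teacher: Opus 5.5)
Your proposal is correct and follows the standard slice argument of Lee's \emph{Introduction to Smooth Manifolds}, Chapter 21. That argument runs: the orbit map is a closed embedding, $\phi(s,g)=m(s,g)$ is an equivariant local diffeomorphism, injectivity on a small slice follows from properness plus freeness, and then come Hausdorffness, the charts, uniqueness via local sections, and the trivializations; this is exactly the proof the paper defers to rather than reproducing, and your trivializations $\phi$ coincide with the section-based ones the paper uses in the proof of Theorem \ref{thm 1.12}, with $\sigma=(\pi_{P,S}|_{\Sigma})^{-1}$.
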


\begin{prop}\label{prop 1.3}
Let $m \colon P \times G \longrightarrow P$ be a smooth and free right action of a Lie group $G$ on a manifold $P$ such that $S=P/G$ has a smooth structure with the property that the canonical projection $\pi_{P, S} \colon P \longrightarrow S$ is a smooth submersion. Then the action is proper (observe that in the present hypotheses then Theorem \ref{thm 1.2} holds, therefore the smooth structure on $S$ must be the unique one from the Quotient Manifold Theorem).

\end{prop}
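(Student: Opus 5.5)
The plan is to show directly that $\Theta$ is proper. The route is to first prove that $\pi_{P,S}$ admits smooth local sections, and that over the domain of such a section $P$ is equivariantly diffeomorphic to a product with $G$. Properness will then follow from compactness arguments performed in these local product charts.

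\textbf{Step 1: local sections and trivializations.} Since $\pi_{P,S}$ is a smooth submersion, the local form of submersions gives, around every $s\in S$, an open neighbourhood $U$ and a smooth local section $\sigma\colon U\to P$. Define
\begin{align*}
\phi\colon U\times G\longrightarrow \pi_{P,S}^{-1}(U)\colon (x,g)\longmapsto m(\sigma(x),g).
\end{align*}
This map is smooth. It is bijective because the fibers of $\pi_{P,S}$ are exactly the orbits and the action is free. To get a diffeomorphism it suffices to show that $\phi$ is an immersion, since by dimension count it is then a local diffeomorphism, and a bijective local diffeomorphism is a diffeomorphism. Here I use that the orbits are the fibers, of dimension $\dim P-\dim S$.

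\textbf{The main obstacle} is checking that $\phi$ is an immersion. By equivariance it suffices to work at $g=e$. The image of $T_{(x,e)}\phi$ contains $T\sigma(T_xU)$, which is transverse to the vertical space because $\pi\circ\sigma=\mathrm{id}$. It also contains the fundamental vector fields $\xi_P(\sigma(x))$ for $\xi\in\mathfrak g$. These are injective in $\xi$: the orbit map $g\mapsto m(p,g)$ is an injective equivariant map, hence of constant rank, and an injective constant-rank map is an immersion. Since the fundamental vector fields are vertical, the two contributions form a direct sum and $T\phi$ is injective. Moreover $\dim U+\dim G=\dim P$, because the fiber $\pi^{-1}(s)$ is a submanifold of dimension $\dim P-\dim S$ onto which $G$ maps injectively and immersively. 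That orbit map is then bijective, and being an injective immersion between manifolds of equal dimension it is open, so it is a diffeomorphism and gives $\dim G=\dim P-\dim S$.

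\textbf{Step 2: properness.} Let $K\subseteq P\times P$ be compact and take a sequence $(p_k,g_k)\in\Theta^{-1}(K)$. After passing to a subsequence, $p_k\to p$ and $m(p_k,g_k)\to q$. By continuity of $\pi$ we have $\pi(q)=\pi(p)=s$, using that $\pi(m(p_k,g_k))=\pi(p_k)$. Choose a trivialization $\phi$ as in Step 1 around $s$ and write
\begin{align*}
p_k=m(\sigma(x_k),a_k),\qquad m(p_k,g_k)=m(\sigma(x_k),a_kg_k).
\end{align*}
Because $\phi^{-1}$ is continuous, $a_k\to a$ and $a_kg_k\to b$. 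Hence
\begin{align*}
g_k=a_k^{-1}(a_kg_k)\longrightarrow a^{-1}b .
\end{align*}
The limit lies in $\Theta^{-1}(K)$ since this set is closed. So $\Theta^{-1}(K)$ is sequentially compact, and therefore compact, since manifolds are metrizable.

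The final remark of the statement, that the smooth structure on $S$ coincides with the one from Theorem \ref{thm 1.2}, follows from the uniqueness clause of that theorem. Alternatively, a surjective submersion determines the smooth structure on its target, since smooth maps out of $S$ are characterized by the smoothness of their composites with $\pi_{P,S}$.
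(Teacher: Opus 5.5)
Your overall strategy is sound, and the paper itself gives no proof here; it only refers to Lee, Chapter 21. The strategy is to take local sections of the submersion, use $\phi(x,g)=m(\sigma(x),g)$ as local product charts, and then run a sequential compactness argument in those charts. Step 2 is fine, with two implicit uses worth stating: you need $S$ to be Hausdorff to get $\pi_{P,S}(q)=\pi_{P,S}(p)$ from uniqueness of limits, and $p_k$ and $m(p_k,g_k)$ lie in $\pi_{P,S}^{-1}(U)$ only for large $k$. Both are harmless.

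The genuine gap is the dimension count in Step 1. You deduce $\dim G=\dim P-\dim S$ from the bijective immersion $g\mapsto m(p,g)$ of $G$ onto the fiber by calling it "an injective immersion between manifolds of equal dimension". That assumes exactly what you are trying to prove. An injective immersion only gives $\dim G\le\dim P-\dim S$, and the reverse inequality cannot come from local considerations. Here is a counterexample once second countability is dropped. Let $G=\mathbb{R}$ with the discrete topology (then a $0$-dimensional Lie group) act on $P=\mathbb{R}$ by translations. This action is smooth and free, $S$ is a point, $\pi_{P,S}$ is a submersion, and the orbit map is a bijective immersion. Yet $\dim G<\dim P-\dim S$, and the action is not proper: $\Theta^{-1}([0,1]^2)$ is a disjoint union of uncountably many nonempty intervals, one for each $g\in[-1,1]$.

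The missing ingredient is Sard's theorem, or equivalently a Baire category argument, and it uses second countability of $G$. A smooth map from a second countable manifold into a manifold of larger dimension has image of measure zero. So surjectivity of $g\mapsto m(p,g)$ onto $\pi_{P,S}^{-1}(s)$, or of $\phi$ onto $\pi_{P,S}^{-1}(U)$, forces $\dim G\ge\dim P-\dim S$. With this inserted, $\phi$ is a bijective local diffeomorphism, hence a diffeomorphism, and the rest of your argument goes through unchanged.
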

For a proof of the Quotient Manifold Theorem and to see more details on this particular topic compare with \cite[Chapter 21]{LEE (2013)}. These results also lead to the more common way to define principal bundles as fiber bundles $(P, M, \pi_{P, M}, G)$ equipped with a free smooth right action of $G$ on $P$, with the requests that the orbit space $S=P/G$ has a smooth manifold structure coinciding with $M$ and that the canonical projection $\pi_{P, S} \colon P \rightarrow S$ coincides with the projection $\pi_{P, M} \colon P \rightarrow M$. 

\medskip

Recall also that on a principal bundle $P$ it is possible to consider {\em (standard) principal bundle coordinates} $(x^\mu, g^I)$, which are characterized by the transformation laws:
\begin{align}\label{100}
                 \left\{
		\begin{array}{l}
			x'^\mu=x'^\mu(x)
			\\
			g'^I=\pi^I\Big(\varphi(x), g\Big)
		\end{array}
		\right.	
\end{align}
where $\pi^I$ are the local expressions of the multiplication in the Lie group $G$ and $\varphi$ corresponds to local functions characterizing the principal bundle. For more details on principal bundle coordinates see for example \cite[Chapter 18, Section 1]{FATIBENE (2024)}.

\section{Lie group fiber bundles and fibered actions} \label{sec 1.2}

Starting with this section and going until the end of this work, we will review some aspects of and some theorems about the new geometric tools (that is the theory of generalized principal bundles and connections) introduced by Castrillón López and Rodríguez Abella (with complete exposition in \cite{CASTRILLON LOPEZ (2023)} and quickly resumed in \cite{CASTRILLON LOPEZ (2024)}). For the reader’s benefit, given that the cited articles are relatively new, our account on the already known matter will be exhaustive.  At the same time we will point out some new remarks and examples, add certain details and prove some new results.

\medskip 

Before discussing generalized principal bundles, there is the need to define the notion of {\em Lie group fiber bundle}:
\begin{defn}\label{defn 1.5}
A {\em Lie group fiber bundle with typical fiber} $G$ is a fiber bundle $(\mathcal{G}, M, \pi_{\mathcal{G}, M}, G)$, where $G$ is a Lie group, such that:
\begin{enumerate}
\item $\forall \, x \in M$, the fiber $\mathcal{G}_x=(\pi_{\mathcal{G}, M})^{-1}(x)$ is equipped with a (canonical) Lie group structure.
\item $\forall \, x \in M$, there is a neighbourhood $x \in U \subseteq M$ and a local trivialization:
\begin{align*}
\psi \colon (\pi_{\mathcal{G}, M})^{-1}(U) \longrightarrow U \times G
\end{align*}
(a diffeomorphism with $p_1 \circ \psi=\pi_{\mathcal G, M}$, where $p_1 \colon U \times G \longrightarrow U$ is the projection on the first factor) preserving the Lie group structure fiberwise, that is:
\begin{align*}
\psi|_{\mathcal{G}_x} \colon \mathcal{G}_x \longrightarrow \{x\} \times G \equiv G
\end{align*}
is a Lie group isomorphism (in particular given $e_x$, the identity element of the Lie group $\mathcal{G}_x$, we must have $\psi|_{\mathcal{G}_x}(e_x)=(x, e)$, where $e$ is the identity element of $G$).
\end{enumerate}
\end{defn}
\begin{oss}\label{oss 1.6}
A natural consequence of these assumptions is that the transition functions of a Lie group fiber bundle $(\mathcal{G}, M, \pi_{\mathcal{G}, M}, G)$ take values in $\mathrm{Aut}(G)$, the set of Lie group automorphisms of $G$. Then we can choose fibered coordinates $(x^\mu, g^I)$ on $(\mathcal{G}, M, \pi_{\mathcal{G}, M}, G)$ such that a general change of these fibered coordinates must satisfy:
\begin{align}\label{1}	
		\left\{
		\begin{array}{l}
			x'^\mu=x'^\mu(x)
			\\
			g'^I=G^I(x, g)
		\end{array}
		\right.	
\end{align}
with:
\begin{align}
G^I(x, e)&=e^I \label{2}	
\\
G^I\Big(x, \pi(g, h)\Big)&=\pi^I\Big(G(x, g), G(x, h)\Big) \label{3}	
\end{align}
where $e$ still indicates the identity element of $G$ and $\pi \colon G \times G \rightarrow G$ is the multiplication in the Lie group $G$. The symbol $\cdot$ will also indicate this multiplication. We also set from now on $\mathrm{dim}(G)=l$.

Note that a Lie group fiber bundle is {\em not} an example of principal bundle (just compare conditions \eqref{2} and \eqref{3} with the transformation laws characterizing principal bundles \eqref{100}).
\end{oss}

\begin{ex}\label{ex 1.2.3}
{\em Vector bundles} are examples of Lie group fiber bundles with typical fiber $(\mathbb{R}^l, +)$. Namely if $(E, M, \pi_{E, M}, \mathbb{R}^l)$ is a vector bundle, then by definition:
\begin{enumerate}
\item $\forall \, x \in M$, $E_x=(\pi_{E, M})^{-1}(x)$ is a vector space, so that it is equipped with a Lie group structure.
\item $\forall \, x \in M$, there is a neighbourhood $x \in U \subseteq M$ and a local trivialization:
\begin{align*}
\psi \colon (\pi_{E, M})^{-1}(U) \longrightarrow U \times \mathbb{R}^l
\end{align*}
such that:
\begin{align*}
\psi|_{E_x} \colon E_x \longrightarrow \{x\} \times \mathbb{R}^l \equiv \mathbb{R}^l
\end{align*}
is a vector space isomorphism and, as a consequence, a diffeomorphism. It is then in particular a Lie group isomorphism between $(E_x, +)$ and $(\mathbb{R}^l, +)$.
\end{enumerate}
\end{ex}

\begin{ex}\label{ex 1.2.4}
The {\em adjoint bundle} $(Ad(P), M, \pi_{Ad(P), M}, G)$ of a given standard principal bundle $(P, M, \pi_{P, M}, G)$ is a main instance of Lie group fiber bundle, as already observed in \cite{CASTRILLON LOPEZ (2023)}. We stress that here by \lq\lq adjoint bundle\rq\rq\,\,we mean the associated bundle (to the fixed standard principal bundle) which has sections in a one to one correspondence to the group of {\em gauge transformations} of $(P, M, \pi_{P, M}, G)$, that is principal bundle automorphisms of $P$ projecting on $id_M$, as mentioned for example in \cite[Chapter 6]{MARATHE (1992)}.
\end{ex}

\begin{oss}\label{oss 1.2.5}
It is nevertheless possible to note that {\em not} all Lie group fiber bundles are the adjoint bundle of certain principal bundle. A necessary and sufficient condition for the existence of such a standard principal bundle was found in \cite{MACKENZIE (1989)} by Mackenzie. In the same article it is also simply observed that if we consider an abelian Lie group $G$, then $(Ad(P), M, \pi_{Ad(P), M}, G)$ is just the trivial bundle $(M \times G, M, \pi_{M \times G, M}, G)$. This implies that if we fix a non-trivial Lie group fiber bundle $(\mathcal{G}, M, \pi_{\mathcal{G}, M}, G)$ with abelian typical fiber $G$ (e.g.\,\,the tangent bundle of the $2$-dimensional sphere $S^2$, on the base of Example \ref{ex 1.2.3}), then $(\mathcal{G}, M, \pi_{\mathcal{G}, M}, G)$ cannot be the adjoint bundle of any principal bundle.
\end{oss}

\medskip

The {\em multiplication map}:
\begin{align*}
\mathcal{M} \colon \mathcal{G} \times_M \mathcal{G} \longrightarrow \mathcal{G} \colon (\gamma_1, \gamma_2) \longmapsto \mathcal{M}(\gamma_1, \gamma_2)=\gamma_1 \cdot \gamma_2
\end{align*}
where $\mathcal{G} \times_M \mathcal{G}=\{ (\gamma_1, \gamma_2)  \in \mathcal{G} \times \mathcal{G} \,\,|\,\, \pi_{\mathcal G, M}(\gamma_1)=\pi_{\mathcal G, M}(\gamma_2) \}$ is the fibered product and where we denote this time with $\cdot$ the multiplication in the Lie group $\mathcal{G}_{x}$, $x=\pi_{\mathcal G, M}(\gamma_1)=\pi_{\mathcal G, M}(\gamma_2)$, and {\em the inversion map}:
\begin{align*}
\iota \colon \mathcal{G} \longrightarrow \mathcal{G} \colon \gamma \longmapsto \iota(\gamma)=\gamma^{-1}
\end{align*}
where ${\cdot}^{-1}$ is the inversion in the Lie group $\mathcal{G}_{x}$, $x=\pi_{\mathcal G, M}(\gamma)$, are bundle morphisms over the identity ${id}_M \colon M \rightarrow M$, since $\pi_{\mathcal G, M} \circ \mathcal{M}=\pi_{\mathcal{G} \times_{M} \mathcal{G}, M}$ and $\pi_{\mathcal G, M} \circ \mathcal{\iota}=\pi_{\mathcal{G}, M}$.

\begin{oss}\label{oss 1.7}
Similarly to the case of vector bundles, which always admit the global {\em zero section}, for a Lie group fiber bundle $(\mathcal G, M, \pi_{\mathcal G, M}, G)$ we have the global {\em unit section} defined by the map:
\begin{align*}
1 \colon M \longrightarrow \mathcal{G} \colon x \longmapsto e_x \in \mathcal{G}_x
\end{align*}
which is smooth since in fibered coordinates it is written as $1 \colon x^\mu \mapsto (x^\mu, e^I)$, because $\psi|_{\mathcal G_x}(e_x)=(x, e)$ holds true for the local trivializations of Definition \ref{defn 1.5}.
\end{oss}

Observe also that {\em Lie algebra fiber bundles} $(\mathfrak{a}, M, \pi_{\mathfrak{a}, M}, \mathfrak{g})$ can be constructed exactly as in Definition \ref{defn 1.5} except that $\mathfrak{g}$ is a Lie algebra, the fibers $\mathfrak{a}_x=(\pi_{\mathfrak{a}, M})^{-1}(x)$ have a (canonical) Lie algebra structure, $\psi \colon (\pi_{\mathfrak{a}, M})^{-1}(U) \rightarrow U \times \mathfrak{g}$ and $\psi|_{\mathfrak{a}_x} \colon \mathfrak{a}_x \rightarrow \{x\} \times \mathfrak{g} \equiv \mathfrak{g}$ is a Lie algebra isomorphism. Note that Lie algebra fiber bundles are in particular vector bundles (and are hence a particular instance of Lie group fiber bundles, according to Example \ref{ex 1.2.3}).

To a Lie group fiber bundle $(\mathcal G, M, \pi_{\mathcal G, M}, G)$ is always attached a specific Lie algebra fiber bundle for which $\mathfrak{g}$ is the Lie algebra of $G$, for all $x \in M$ the fiber $\mathfrak{a}_x$ is the Lie algebra of $\mathcal{G}_x$, $\mathfrak{a}=\coprod_{x \in M}\mathfrak{a}_x$ and the local trivializations are of the form:
\begin{align*}
\Psi \colon (\pi_{\mathfrak{a}, M})^{-1}(U) \longrightarrow U \times \mathfrak{g} \colon \xi \longmapsto \Big(\pi_{\mathfrak{a}, M}(\xi)=x, T_{e_x}(\psi|_{\mathcal G_x})(\xi)\Big)
\end{align*}
where $\psi$ is a local trivialization of $(\mathcal{G}, M, \pi_{\mathcal{G}, M}, G)$ (note that $\Psi|_{\mathfrak{a}_x}=T_{e_x}(\psi|_{\mathcal G_x})$ is a Lie algebra isomorphism since $\psi|_{\mathcal G_x}$ is a Lie group isomorphism).
 
\medskip 

It is now possible to give some definitions regarding the crucial notion of {\em fibered action}, adapting to the present case of study similar notions from the standard Lie group theory:
\begin{defn}\label{defn 1.9}
A {\em (right) fibered action} of a Lie group fiber bundle $(\mathcal{G}, M, \pi_{\mathcal{G}, M}, G)$ on a fiber bundle $(P, M, \pi_{P, M}, F)$ is a (smooth) bundle morphism $\Phi$ over the identity ${id}_M \colon M \longrightarrow M$ (this means that $\Phi$ is a {\em vertical} bundle morphism) such that:
\begin{equation}\label{4}
\xymatrixrowsep{0.56in}
\xymatrixcolsep{0.8in}
	\xymatrix{
	   {P \times_M \mathcal{G}} \ar[r]^{\Phi} \ar[d]_{\pi_{P \times_M \mathcal{G}, M}} & {P} \ar[d]^{\pi_{P, M}} \\
	   {M}  \ar[r]_{{id}_M} & {M}  \\}
\end{equation}
is a commutative diagram, where $P \times_M \mathcal{G}=\{ (p, \gamma)  \in P \times \mathcal{G} \,\,|\,\, \pi_{P, M}(p)=\pi_{\mathcal G, M}(\gamma) \}$, with the conditions:
\begin{enumerate}
\item $\forall \, p \in P$, $\Phi(p, e_x)=p$, with $\pi_{P, M}(p)=x$.
\item $\forall \, (p, \gamma_1), (p, \gamma_2) \in P \times_M \mathcal{G}$, $\Phi\Big(p, \mathcal{M}(\gamma_1, \gamma_2)\Big)=\Phi(p, \gamma_1 \cdot \gamma_2)=\Phi\Big(\Phi(p, \gamma_1), \gamma_2\Big)$.
\end{enumerate}
\end{defn}

Note that in the last definition the second condition is well-posed since if $(p, \gamma_1)$ and $(p, \gamma_2)$ belong to $P \times_M \mathcal{G}$, then $\pi_{\mathcal G, M}(\gamma_1)=\pi_{P, M}(p)=\pi_{\mathcal G, M}(\gamma_2)$ and as a consequence $\gamma_1$ and $\gamma_2$ belong to the same fiber of $\mathcal{G}$, that is the product $\mathcal{M}(\gamma_1, \gamma_2)=\gamma_1 \cdot \gamma_2$ exists.

Notice also that a right fibered action $\Phi \colon P \times_M \mathcal{G} \rightarrow P$ induces, once restricted to the fibers of $P \times_M \mathcal{G}$ and since \eqref{4} holds, a smooth right action of the Lie groups $\mathcal{G}_x$ on the respective fibers of $P$, i.e.\,\,$P_x=(\pi_{P, M})^{-1}(x)$:
\begin{align}\label{5}
\Phi_x=\Phi|_{P_x \times \mathcal{G}_x} \colon P_x \times \mathcal{G}_x \longrightarrow P_x
\end{align}

\medskip

We would like to emphasize that $(P, M, \pi_{P, M}, F)$ is assumed to be a generic (not necessarily principal) fiber bundle. 

\begin{defn}\label{defn 1.10}
Let $\Phi \colon P \times_M \mathcal{G} \longrightarrow P$ be a fibered action of a Lie group fiber bundle $(\mathcal{G}, M, \pi_{\mathcal{G}, M}, G)$ on a fiber bundle $(P, M, \pi_{P, M}, F)$. The fibered action $\Phi$ is said to be:
\begin{enumerate}
\item {\em Free} if $\Phi(p, \gamma)=p$, for some $(p, \gamma) \in P \times_M \mathcal{G}$, then $\gamma=e_x$, $\pi_{P, M}(p)=x$.
\item {\em Proper} if the (smooth) map:
\begin{align*}
\Theta \colon P \times_M \mathcal{G} \longrightarrow P \times_M P \colon (p, \gamma) \longmapsto \Big(p, \Phi(p, \gamma)\Big)
\end{align*}
is a proper map ($\Theta$ is well-defined since $\pi_{P, M}\Big(\Phi(p, \gamma)\Big)=\pi_{P \times_M \mathcal{G}, M}(p, \gamma)=\pi_{P, M}(p)$).
\end{enumerate}
\end{defn}

\begin{lem}\label{lem 1.11}
Let $\Phi \colon P \times_M \mathcal{G} \longrightarrow P$ be a free and proper fibered action of a Lie group fiber bundle $(\mathcal{G}, M, \pi_{\mathcal{G}, M}, G)$ on a fiber bundle $(P, M, \pi_{P, M}, F)$. Then each $\Phi_x$, given by \eqref{5} with $x \in M$, is a free and proper smooth right action and hence each $(P_x, \mathcal{S}_x, \pi_{P_x, \mathcal{S}_x}, \mathcal{G}_x)$ is a principal bundle, by setting $\mathcal{S}_x=P_x/ \mathcal{G}_x$.
\end{lem}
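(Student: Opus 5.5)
Fix $x \in M$. The plan is to check the three ingredients of the Quotient Manifold Theorem (Theorem \ref{thm 1.2}) for $\Phi_x$: smoothness, freeness and properness. Then invoke that theorem together with Definition \ref{defn 1.4}.

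\emph{Smoothness and the action axioms.} Since $\pi_{P,M}$ and $\pi_{\mathcal G,M}$ are submersions, the fibers $P_x$ and $\mathcal G_x$ are embedded submanifolds. Moreover $P_x \times \mathcal G_x = (\pi_{P\times_M\mathcal G, M})^{-1}(x)$ is an embedded submanifold of $P \times_M \mathcal G$; this is easiest to see in fibered coordinates, where $P\times_M\mathcal G$ has coordinates $(x^\mu, y^i, g^I)$ and the fiber is $\{x^\mu = \text{const}\}$. The restriction of the smooth map $\Phi$ to an embedded submanifold is smooth. By the commutativity of \eqref{4}, it takes values in the embedded submanifold $P_x$, so it is smooth as a map into $P_x$. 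The right action axioms are conditions 1 and 2 of Definition \ref{defn 1.9}, restricted to $x$.

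\emph{Freeness.} This is immediate from Definition \ref{defn 1.10}. If $\Phi_x(p,\gamma)=p$, then $\gamma = e_x$, which is the identity of $\mathcal G_x$.

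\emph{Properness.} This is the only step needing care. The map $\Theta_x \colon P_x \times \mathcal G_x \to P_x \times P_x$ is the restriction of $\Theta$. It coincides with $\Theta$ composed with the inclusion of the closed subset $P_x\times\mathcal G_x = \Theta^{-1}(P_x \times P_x)$; this set is closed because it is the preimage of the point $x$. Let $K \subseteq P_x\times P_x$ be compact. Then $K$ is compact in $P \times_M P$, since the inclusion is continuous. Hence $\Theta^{-1}(K)$ is compact in $P\times_M\mathcal G$, and it lies inside $P_x \times \mathcal G_x$. Because $P_x\times\mathcal G_x$ carries the subspace topology (it is embedded), $\Theta_x^{-1}(K)=\Theta^{-1}(K)$ is compact there as well. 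The main point to verify is exactly this agreement of topologies, which the embeddedness settles.

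\emph{Conclusion.} Theorem \ref{thm 1.2} now gives that $\mathcal S_x = P_x/\mathcal G_x$ is a smooth manifold and that $\pi_{P_x,\mathcal S_x}$ is a smooth submersion. Together with freeness, this means $P_x$ with the action $\Phi_x$ satisfies Definition \ref{defn 1.4}. Therefore $(P_x,\mathcal S_x,\pi_{P_x,\mathcal S_x},\mathcal G_x)$ is a principal bundle.
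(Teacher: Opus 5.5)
Your proposal is correct and follows the paper's own proof: freeness read off directly from Definition \ref{defn 1.10}, properness by noting that $\Theta_x^{-1}(K)=\Theta^{-1}(K)$ for compact $K\subseteq P_x\times P_x$, and then Definition \ref{defn 1.4}. You also spell out two points the paper leaves implicit: the smoothness of $\Phi_x$ via embedded fibers, and the fact that $\Theta^{-1}(K)$ lies in $P_x\times\mathcal{G}_x$ and is compact in its subspace topology.
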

\begin{proof}

\medskip

\noindent From the assumption that the fibered action $\Phi$ is free, we get that $\Phi_x$ is free since:
\begin{align*}
\Phi_x(p, \gamma)=p \, \Longrightarrow \, \Phi(p, \gamma)=p  \, \Longrightarrow \, \gamma=e_x \in \mathcal{G}_x
\end{align*}
given $(p, \gamma) \in P_x \times \mathcal{G}_x$. Moreover the following map:
\begin{align*}
\Theta_x=\Theta|_{P_x \times \mathcal{G}_x} \colon P_x \times \mathcal{G}_x \longrightarrow P_x \times P_x \colon (p, \gamma) \longmapsto \Big(p, \Phi_x(p, \gamma)\Big)
\end{align*}
is proper since the map $\Theta$ is proper by hypothesis and if $K$ is compact in $P_x \times P_x$, then it is compact also in $P \times_M P$ and hence $\Theta_x^{-1}(K)=\Theta^{-1}(K)$ is compact. Thereby $\Phi_x$  is also proper.

The second part of the statement is obvious in view of Definition \ref{defn 1.4}.

\end{proof}

Once fixed a fibered action of $(\mathcal{G}, M, \pi_{\mathcal{G}, M}, G)$ on $(P, M, \pi_{P, M}, F)$, it is possible to define an equivalence relation $\sim_{\mathcal{G}}$ on $P$ declaring:
\begin{align*}
p_1 \sim_{\mathcal{G}} p_2  \, \Longleftrightarrow \, \exists \, \gamma \in \mathcal{G} \, | \, \Phi(p_1, \gamma)=p_2
\end{align*}
for $p_1, p_2 \in P$. By Definition \ref{defn 1.9}, the relation $\Phi(p_1, \gamma)=p_2$ holds only if one has that $p_1, p_2 \in P_x$ and $\gamma \in \mathcal{G}_x$ for the same $x \in M$, hence we have that:
\begin{align*}
p_1 \sim_{\mathcal{G}} p_2  \, &\Longleftrightarrow \, \exists \, x \in M, \, \exists \, \gamma \in \mathcal{G}_x \, | \, p_1, p_2 \in P_x, \, \Phi(p_1, \gamma)=\Phi_x(p_1, \gamma)=p_2 
\\
&\Longleftrightarrow \, \exists \, x \in M \, | \, p_1, p_2 \in P_x, \, p_1 \sim_{\mathcal{G}_x} p_2
\end{align*}
where $\sim_{\mathcal{G}_x}$ is the equivalence relation on $P_x$ giving rise to the canonical projection $\pi_{P_x, \mathcal{S}_x}$. In this sense the quotient space $\mathcal{S}=P/\mathcal{G}=P/{\sim_{\mathcal{G}}}$ can be regarded as the disjoint union of the quotient spaces $\mathcal{S}_x=P_x/\mathcal{G}_x=P_x/{\sim_{\mathcal{G}_x}}$:
\begin{align*}
\mathcal{S}=\coprod_{x \in M} \mathcal{S}_x=\{ [p]_{\mathcal{G}} \equiv (x, [p]_{\mathcal G_x}) \,\,|\,\, x\in M,\, p\in P_x \} 
\end{align*}

In particular the following diagram is commutative:
\begin{equation}\label{7}
\xymatrixrowsep{0.56in}
\xymatrixcolsep{0.8in}
	\xymatrix{
	   {P} \ar[r]^{\pi_{P, \mathcal{S}}} \ar@/_2.0pc/_{\pi_{P, M}}[rr] & {\mathcal{S}} \ar[r]^{\pi_{\mathcal{S}, M}} & {M} \\}
\end{equation}
where $\pi_{P, \mathcal{S}}(p)=[p]_{\mathcal{G}}$ and $\pi_{\mathcal{S}, M}\Big([p]_{\mathcal{G}} \equiv (x, [p]_{\mathcal G_x})\Big)=x$.

\medskip

\needspace{3\baselineskip}
In \cite{CASTRILLON LOPEZ (2023)} a generalization of the aforementioned Quotient Manifold Theorem is proved, which eventually enables the definition of generalized principal bundles:
\begin{thm}[Generalized Quotient Manifold Theorem]\label{thm 1.12}
Given a free and proper fibered action $\Phi$ of a Lie group fiber bundle $(\mathcal{G}, M, \pi_{\mathcal{G}, M}, G)$ on a fiber bundle $(P, M, \pi_{P, M}, F)$, then $\mathcal{S}=P/\mathcal{G}$ admits a unique smooth structure such that $(P, \mathcal{S}, \pi_{P, \mathcal{S}}, G)$ is a fiber bundle.
\end{thm}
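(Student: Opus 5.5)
The plan is to reduce the statement to the ordinary Quotient Manifold Theorem (Theorem \ref{thm 1.2}) by turning the fibered action $\Phi$ into an honest free and proper action of a single Lie group on a larger manifold. Work locally over an open set $U \subseteq M$ that trivializes both $\mathcal{G}$ (via a Lie-group-preserving $\psi$ as in Definition \ref{defn 1.5}) and $P$. Over $U$ we have $P_U=(\pi_{P,M})^{-1}(U)$ and $\mathcal{G}_U \cong U \times G$. Define
\begin{align*}
m_U \colon P_U \times G \longrightarrow P_U \colon (p, g) \longmapsto \Phi\Big(p, \psi^{-1}(\pi_{P,M}(p), g)\Big).
\end{align*}
This map is smooth. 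By conditions 1 and 2 of Definition \ref{defn 1.9}, together with the fact that $\psi|_{\mathcal{G}_x}$ is a group isomorphism, $m_U$ is a smooth right action of $G$ on $P_U$. Freeness of $m_U$ follows directly from freeness of $\Phi$.

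For properness, observe that $P_U \times_U P_U \subseteq P_U \times P_U$ is closed. The map $(p,g)\mapsto (p,\psi^{-1}(\pi(p),g))$ is a diffeomorphism $P_U\times G \cong P_U\times_U \mathcal{G}_U$. Hence the map $\Theta_U$ of $m_U$ is $\Theta|_{P_U\times_U\mathcal G_U}$ composed with this diffeomorphism. Since $\Theta$ is proper and $P_U\times_U\mathcal{G}_U = \Theta^{-1}(P_U\times_U P_U)$ is a full preimage, the restriction is proper. A compact set in $P_U \times P_U$ meets the closed set $P_U\times_U P_U$ in a compact set, and its preimage under $\Theta_U$ lies in that intersection's preimage. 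So $m_U$ is proper.

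Theorem \ref{thm 1.2} then gives that $\mathcal{S}_U = P_U/G$ has a unique smooth structure making $(P_U, \mathcal{S}_U, \pi, G)$ a fiber bundle. The $G$-orbits of $m_U$ are exactly the $\sim_{\mathcal{G}}$-classes in $P_U$, so $\mathcal{S}_U = \pi_{\mathcal{S},M}^{-1}(U) \subseteq \mathcal{S}$ as sets. The surjection $\pi_{P,M}$ factors through $\mathcal{S}_U$, which makes $\pi_{\mathcal{S},M}$ smooth on $\mathcal{S}_U$ by the submersion property.

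The main step is gluing. Given two such open sets $U,V$, we must show that the two smooth structures on $\pi_{\mathcal{S},M}^{-1}(U\cap V)$ coincide. Over $U\cap V$ the two actions $m_U$ and $m_V$ differ by the reparametrization $g\mapsto G(x,g)$, where $G(x,\cdot)\in\mathrm{Aut}(G)$ as in Remark \ref{oss 1.6}. So they have the same orbits. Restricting each to $P_{U\cap V}$ (an open $G$-invariant set) gives open subsets of $\mathcal S_U$ and $\mathcal S_V$ on which the quotient map is a smooth submersion. The uniqueness clause in Theorem \ref{thm 1.2} (equivalently, the fact that smooth structures making a given surjection a submersion are unique) forces the two structures to agree. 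The topology should also be checked: $\mathcal S$ carries the quotient topology, which restricts to the quotient topology on each open saturated $\mathcal S_U$.

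Hausdorffness and second countability of $\mathcal{S}$ need a global argument. Hausdorff follows from the global properness of $\Theta$: the orbit relation is the image of a proper map, hence closed in $P\times_M P$, which is closed in $P\times P$, and $\pi_{P,\mathcal{S}}$ is open. Second countability follows because $\pi_{P,\mathcal S}$ is an open continuous surjection from a second countable space.

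Local triviality of $(P,\mathcal{S},\pi_{P,\mathcal{S}},G)$ is then inherited from the local bundles $P_U\to\mathcal{S}_U$. Uniqueness of the smooth structure holds because any structure making $\pi_{P,\mathcal{S}}$ a bundle projection makes it a surjective submersion, and such structures are unique.
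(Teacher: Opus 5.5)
Your proof is correct and follows the paper's strategy. Over an open $U\subseteq M$ trivializing $\mathcal{G}$, you turn $\Phi$ into an honest free and proper $G$-action (directly on $P_U$, where the paper transports it to $U\times F$ via $\psi_P$), let the classical Quotient Manifold Theorem give the smooth structure on $\pi^{-1}_{\mathcal{S},M}(U)$, and get uniqueness from the submersion property.

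Your version also supplies steps the paper only asserts: properness of the local action, agreement of the local structures on overlaps via the uniqueness clause, and Hausdorffness and second countability of $\mathcal{S}$. By taking local triviality from the principal bundles $P_U\to\mathcal{S}_U$, you also avoid the paper's claim that a section of $P\to\mathcal{S}$ exists over all of $\pi^{-1}_{\mathcal{S},M}(U_\alpha)$ once $U_\alpha$ is small enough. That claim can fail: for the fiberwise Hopf action of $M\times U(1)$ on $M\times S^3$, such a section would restrict to a section of the Hopf fibration $S^3\to S^2$, which does not exist.
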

\noindent Even though, as mentioned above, it is possible to find the core idea of the proof in the work by Castrillón López and Rodríguez Abella, we prefer to go through some of its details since this will be crucial in order to introduce in the next section the new notion of {\em generalized principal bundle coordinates} (with their transformation laws).

\medskip

\begin{proof}

\medskip

\noindent We firstly fix local trivializations for $\mathcal{G}$ and $P$ on the same open set $U \subseteq M$:
\begin{align*}
\psi_\mathcal{G} \colon (\pi_{\mathcal{G}, M})^{-1}(U) \longrightarrow U \times G, \quad \psi_P \colon (\pi_{P, M})^{-1}(U) \longrightarrow U \times F 
\end{align*}
Recall that given local trivializations as above we get a local trivialization of the fibered product $P \times_M \mathcal{G}$ of the form:
\begin{align*}
\psi_{P \times_M \mathcal{G}} \colon (\pi_{P \times_M \mathcal{G}, M})^{-1}(U) \longrightarrow U \times F \times G \colon (p, \gamma) \longmapsto \psi_{P \times_M \mathcal{G}}(p, \gamma)=(x, f, g)
\end{align*}
where $\psi_P(p)=(x, f)$ and $\psi_\mathcal{G}(\gamma)=(x, g)$. In this way the fibered action $\Phi \colon P \times_M \mathcal{G} \rightarrow P$ takes the local expression:
\begin{align*}
\psi_P \circ \Phi \circ (\psi_{P \times_M \mathcal{G}})^{-1} \colon U \times F \times G \longrightarrow U \times F \colon (x, f, g) \longmapsto \psi_P\Big(\Phi(p, \gamma)\Big)=\Big(x, \hat{f}\Big)
\end{align*}
since $\Phi$ is vertical. Using the fact that $\Phi$ is a free, proper fibered action, it can be checked that this can be seen as a standard free and proper smooth action of the Lie group $G$ on $U \times F$.

From the Quotient Manifold Theorem we get that $(U \times F)/G$ has a unique smooth structure with the property that the canonical projection $\pi_{U \times F, (U \times F)/G} \colon U \times F \rightarrow (U \times F)/G$ is a smooth submersion and in particular that $(U \times F, (U \times F)/G, \pi_{U \times F, (U \times F)/G}, G)$ is a principal bundle.

\medskip

We can now consider the map:
\begin{align*}
\varepsilon \colon (\pi_{\mathcal{S}, M})^{-1}(U) \longrightarrow (U \times F)/G \colon [p]_{\mathcal{G}} \equiv (x, [p]_{\mathcal G_x}) \longmapsto [x, f]_G
\end{align*}
where $\psi_P(p)=(x, f)$. It is a bijection since:
\begin{enumerate}
\item It is well-posed and injective:
\begin{align*}
(x, [p_1]_{\mathcal{G}_x})=(x, [p_2]_{\mathcal{G}_x}) \, &\Longleftrightarrow \, [p_1]_{\mathcal{G}_{x}}=[p_2]_{\mathcal{G}_{x}} \, \Longleftrightarrow \, \Phi(p_1, \gamma)=p_2, \, \gamma \in \mathcal{G}_x 
\\
&\Longleftrightarrow \, \Big(\psi_P \circ \Phi \circ (\psi_{P \times_M \mathcal{G}})^{-1}\Big)(x, f_1, g)=(x, f_2) \, \Longleftrightarrow \, [x, f_1]_G=[x, f_2]_G
\end{align*}
where $\psi_P(p_1)=(x, f_1)$, $\psi_P(p_2)=(x, f_2)$ and $\psi_\mathcal{G}(\gamma)=(x, g)$.
\item It is surjective since, fixing $[x, f]_G \in (U \times F)/G$, it is enough to define $p=\psi_P^{-1}(x, f)$, then $\varepsilon\Big([p]_{\mathcal{G}}\Big)=[x, f]_G$ by definition.
\end{enumerate}
In this way it is possible to transport the smooth charts on $(U \times F)/G$ to $(\pi_{\mathcal{S}, M})^{-1}(U)$ through $\varepsilon$, obtaining a manifold structure on $\mathcal{S}=P/\mathcal{G}$ and making $\varepsilon$ a diffeomorphism.

\medskip

Thanks to \eqref{7} we clearly have the commutative diagram:
\begin{equation*}
\xymatrixrowsep{0.56in}
\xymatrixcolsep{0.8in}
	\xymatrix{
	   {(\pi_{P, M})^{-1}(U)} \ar[r]^{\psi_P} \ar[d]_{\pi_{P, \mathcal{S}}} & {U \times F} \ar[d]^{\pi_{U \times F, (U \times F)/G}} \\
	   {(\pi_{\mathcal{S}, M})^{-1}(U)}  \ar[r]_{\varepsilon} & {(U \times F)/G} \\}
\end{equation*}
We also know that $\pi_{U \times F, (U \times F)/G}$ is a smooth submersion and that $\psi_P$ and $\varepsilon$ are diffeomorphisms, thereby $\pi_{P, \mathcal{S}}$ is a smooth submersion on $(\pi_{P, M})^{-1}(U)$ and also globally, as $U$ is arbitrary.

\medskip

Finally it is possible to construct a trivializing atlas for $(P, \mathcal{S}, \pi_{P, \mathcal{S}}, G)$ starting from a fixed trivializing atlas $\{ (U_\alpha,\psi_{\mathcal{G}}^\alpha) \,\,|\,\, \alpha \in \mathcal{A} \}$ of $(\mathcal{G}, M, \pi_{\mathcal{G}, M}, G)$. For $\alpha \in \mathcal{A}$, let $V_\alpha=\pi^{-1}_{\mathcal{S}, M}(U_\alpha)$ and we choose a local section $\sigma \colon V_\alpha \rightarrow \pi^{-1}_{P,\mathcal{S}}(V_\alpha)$ (which exists if we have fixed a small enough $U_\alpha$). We set:
\begin{align*}
\psi^\alpha_P \colon \pi^{-1}_{P,\mathcal{S}}(V_\alpha) \longrightarrow V_\alpha \times G \colon p \longmapsto \Big([p]_{\mathcal{G}}, g\Big)
\end{align*}
where $g \in G$ is such that $p=\Phi\Big(\sigma\Big([p]_{\mathcal{G}}\Big), \gamma\Big)$, with $\gamma=(\psi^\alpha_{\mathcal G})^{-1}(x, g) \in \mathcal{G}_x$, where $\pi_{\mathcal{S}, M}\Big([p]_{\mathcal{G}}\Big)=x$. The element $g$ exists uniquely since the element $\gamma$ exists uniquely, because:
\begin{align*}
\pi_{P, \mathcal{S}}(p)=[p]_{\mathcal{G}}=\pi_{P, \mathcal{S}}\Big(\sigma\Big([p]_{\mathcal{G}}\Big)\Big)
\end{align*}
and hence:
\begin{align*}
\pi_{P, M}(p)&=\Big(\pi_{\mathcal{S}, M} \circ \pi_{P, \mathcal{S}}\Big)(p)=\pi_{\mathcal{S}, M}\Big([p]_{\mathcal{G}}\Big)=x=\Big(\pi_{\mathcal{S}, M} \circ \pi_{P, \mathcal{S}}\Big)\Big(\sigma\Big([p]_{\mathcal{G}}\Big)\Big)=\pi_{P, M}\Big(\sigma\Big([p]_{\mathcal{G}}\Big)\Big)
\end{align*}
which means that $p, \sigma\Big([p]_{\mathcal{G}}\Big) \in P_x$. We have that $(P_x, \mathcal{S}_x, \pi_{P_x, \mathcal{S}_x}, \mathcal{G}_x)$ is a principal bundle, by Lemma \ref{lem 1.11}, and from standard principal bundle theory there exists a unique $\gamma \in \mathcal{G}_x$ such that $p=\Phi_x\Big(\sigma\Big([p]_{\mathcal{G}}\Big), \gamma\Big)=\Phi\Big(\sigma\Big([p]_{\mathcal{G}}\Big), \gamma\Big)$.

An inverse of $\psi^\alpha_P$ is clearly given by:
\begin{align*}
(\psi^\alpha_P)^{-1} \colon V_\alpha \times G \longrightarrow \pi^{-1}_{P,\mathcal{S}}(V_\alpha) \colon \Big([p]_{\mathcal{G}}, g\Big) \longmapsto \Phi\Big(\sigma\Big([p]_{\mathcal{G}}\Big), (\psi^\alpha_{\mathcal{G}})^{-1}\Big(\pi_{\mathcal{S}, M}\Big([p]_{\mathcal{G}}\Big), g\Big)\Big)
\end{align*}
which is smooth by composition. Using that $\pi_{P, \mathcal{S}} \colon P \rightarrow \mathcal{S}$ is a smooth submersion along the lines of the standard proof of the forth statement in Theorem $\ref{thm 1.2}$, $\psi^\alpha_P$ itself is smooth, so that $\psi^\alpha_P$ is a diffeomorphism. In conclusion, $\{ (V_\alpha, \psi_P^\alpha) \,\,|\,\, \alpha \in \mathcal{A} \}$ is a trivializing atlas for $(P, \mathcal{S}, \pi_{P, \mathcal{S}}, G)$, which means that $(P, \mathcal{S}, \pi_{P, \mathcal{S}}, G)$ is a fiber bundle. 

\medskip

The uniqueness part of the thesis is a consequence of the standard argument based on the use of local sections, which is part of the proof of the classical Quotient Manifold Theorem (see again \cite{LEE (2013)}).

\end{proof}

\begin{oss}\label{oss 1.13}
Note that using local sections of the fiber bundle $(P, \mathcal{S}, \pi_{P, \mathcal{S}}, G)$ and \eqref{7}, we can also deduce from the previous theorem that also $\pi_{\mathcal{S}, M} \colon \mathcal{S} \rightarrow M$ is a smooth map. Moreover, since $\pi_{P, M}$ and $\pi_{P, \mathcal{S}}$ are both surjective submersions, then also $\pi_{\mathcal{S}, M}$ is a surjective submersion, making $(\mathcal{S}, M, \pi_{\mathcal{S}, M})$ a fibered manifold.
\end{oss}

It holds as well that (see once more \cite{CASTRILLON LOPEZ (2023)}):
\begin{prop}\label{prop 1.14}
Considering the fiber bundle $(P, \mathcal{S}, \pi_{P, \mathcal{S}}, G)$ of Theorem \ref{thm 1.12} with the trivializing atlas $\{ (V_\alpha, \psi_P^\alpha) \,\,|\,\, \alpha \in \mathcal{A} \}$, let $p=(\psi_P^\alpha)^{-1}\Big([p]_{\mathcal G}, h\Big) \in P_x$ and $\gamma=(\psi_{\mathcal{G}}^\alpha)^{-1}(x, g) \in \mathcal{G}_x$, with $\pi_{\mathcal{S}, M}\Big([p]_{\mathcal{G}}\Big)=x \in U_\alpha$ and $g, h \in G$, then $\Phi(p, \gamma)=(\psi_P^\alpha)^{-1}\Big([p]_{\mathcal{G}}, h \cdot g\Big)$.
\end{prop}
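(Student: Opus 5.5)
The plan is to unwind the definition of the trivialization $\psi_P^\alpha$ given in the proof of Theorem \ref{thm 1.12} and then combine the two axioms of a fibered action (Definition \ref{defn 1.9}) with the fact that $\psi^\alpha_{\mathcal G}$ restricts to Lie group isomorphisms on the fibers (Definition \ref{defn 1.5}).

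Fix $\alpha$, the local section $\sigma \colon V_\alpha \rightarrow \pi^{-1}_{P,\mathcal S}(V_\alpha)$ used to build $\psi^\alpha_P$, and write $s=[p]_{\mathcal G}$, so that $x=\pi_{\mathcal S, M}(s)$. By the explicit formula for $(\psi^\alpha_P)^{-1}$, the hypothesis $p=(\psi_P^\alpha)^{-1}(s,h)$ means
\begin{align*}
p=\Phi\Big(\sigma(s), \eta\Big), \qquad \eta=(\psi^\alpha_{\mathcal G})^{-1}(x,h) \in \mathcal G_x .
\end{align*}
Since $\sigma(s), p \in P_x$ and $\eta, \gamma \in \mathcal G_x$, all relevant pairs lie in $P\times_M \mathcal G$. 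The second axiom of Definition \ref{defn 1.9} then gives
\begin{align*}
\Phi(p,\gamma)=\Phi\Big(\Phi(\sigma(s),\eta),\gamma\Big)=\Phi\Big(\sigma(s),\eta\cdot\gamma\Big).
\end{align*}

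Next, $\psi^\alpha_{\mathcal G}|_{\mathcal G_x}$ is a Lie group isomorphism onto $\{x\}\times G\equiv G$. Hence $\psi^\alpha_{\mathcal G}(\eta\cdot\gamma)=(x,h\cdot g)$, that is, $\eta\cdot\gamma=(\psi^\alpha_{\mathcal G})^{-1}(x,h\cdot g)$. The action preserves the $\mathcal G$-orbit, so $[\Phi(p,\gamma)]_{\mathcal G}=s$ and the same $\sigma(s)$ applies. Inserting this into the formula for $(\psi^\alpha_P)^{-1}$ yields $\Phi(p,\gamma)=(\psi_P^\alpha)^{-1}(s,h\cdot g)$, which is the claim.

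The only delicate point is that $\sigma$ and $\psi^\alpha_{\mathcal G}$ must be the same choices used in defining $\psi^\alpha_P$. One also needs the uniqueness of $\gamma$ noted in the proof of Theorem \ref{thm 1.12}, which comes from freeness via Lemma \ref{lem 1.11}; this ensures that the second coordinate of $\psi^\alpha_P$ is well defined. Once these are granted, the argument is direct and I expect no real obstacle.
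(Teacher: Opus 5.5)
Your proof is correct. Write $p=\Phi(\sigma(s),\eta)$ with $\eta=(\psi^\alpha_{\mathcal G})^{-1}(x,h)$, apply the second axiom of Definition~\ref{defn 1.9}, and then use that $\psi^\alpha_{\mathcal G}|_{\mathcal G_x}$ is a Lie group isomorphism. This gives $\Phi(p,\gamma)=\Phi\big(\sigma(s),(\psi^\alpha_{\mathcal G})^{-1}(x,h\cdot g)\big)=(\psi^\alpha_P)^{-1}(s,h\cdot g)$ directly from the explicit formula for the inverse.

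The paper gives no proof of its own here; it defers to Castrill\'on L\'opez and Rodr\'iguez Abella. Your computation is the same manipulation the paper performs when deriving the transition laws \eqref{11}, so nothing is missing. Your freeness/uniqueness remark is only needed to know that this formula really is the inverse of $\psi^\alpha_P$, which is already part of Theorem~\ref{thm 1.12}.
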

This means that, using the trivializing atlas $\{ (V_\alpha, \psi_P^\alpha) \,\,|\,\, \alpha \in \mathcal{A} \}$, the fibered action is locally given by the right multiplication on the Lie group $G$, as it happens also in the usual principal bundle theory.

\section{Generalized principal bundle coordinates} \label{sec 1.3}

In the same manner as we have established principal bundles in Definition \ref{defn 1.4} through the Quotient Manifold Theorem, we can now define generalized principal bundles using the generalization of the Quotient Manifold Theorem given by Theorem \ref{thm 1.12}:
\begin{defn}\label{defn 1.15}
A {\em generalized principal bundle with structure Lie group fiber bundle} $\mathcal{G}$ is a fiber bundle $(P, M, \pi_{P, M}, F)$ equipped with a smooth right fibered action of a Lie group fiber bundle $(\mathcal{G}, M, \pi_{\mathcal{G}, M}, G)$:
\begin{align*}
\Phi \colon P \times_M \mathcal{G} \longrightarrow P \colon (p, \gamma) \longmapsto \Phi(p, \gamma)
\end{align*}
such that $\Phi$ is free and proper.
\end{defn}
We eventually also say that the {\em generalized principal bundle} is the fiber bundle $(P, \mathcal{S}, \pi_{P, \mathcal{S}}, G)$ obtained from $(P, M, \pi_{P, M}, F)$ through Theorem \ref{thm 1.12} (as it is stated in \cite{CASTRILLON LOPEZ (2023)}).

\medskip

The geometric setting described until now can be summarized by the following figure:
\begin{figure}[h]
    \centering
    \boxed{\begin{tikzpicture}[scale=0.85, every node/.style={scale=0.765}]

        \tikzset{
            surface/.style={draw, ellipse, minimum width=3cm, minimum height=1cm},
            cylinder_top/.style={draw, ellipse, minimum width=2cm, minimum height=0.6cm},
            line/.style={draw, -{Stealth[length=2mm]}},
            dashed_line/.style={draw, dashed}}

        \node at (-1.3, 0) {$M$};
        \draw (0,0) ellipse (1cm and 0.3cm);
        \filldraw (0,0) circle (1pt) node[left] {\footnotesize $x$};

        \draw (-1,4) arc (180:360:1cm and 0.3cm);
        \draw[dashed] (-1,4) arc (180:0:1cm and 0.3cm);
        \draw (-1,4) -- (-1,6);
        \draw (1,4) -- (1,6);
        \draw (0,6) ellipse (1cm and 0.3cm);
        \node at (-1.3, 6) {$P$};

        \draw (-0.5,4) arc (180:360:0.5cm and 0.08cm);
        \draw[dashed] (-0.5,4) arc (180:0:0.5cm and 0.08cm);
        \draw (-0.5,4) -- (-0.5,6);
        \draw (0.5,4) -- (0.5,6);
        \draw (0,6) ellipse (0.5cm and 0.08cm);
        \node[left] at (-0.45, 5) {\footnotesize $P_x$};

        \draw[line] (0.7, 3.75) -- (0.7, 0.25) node[midway, right] {$\pi_{P, M}$};
        \draw[dashed] (-0.5,4) -- (0,3) -- (0,0);
        \draw[dashed] (0.5,4) -- (0,3);

        \begin{scope}[xshift=4.5cm]
            \node at (-1.3, 0) {$M$};
            \draw (0,0) ellipse (1cm and 0.3cm);
            \filldraw (0,0) circle (1pt) node[left] {\footnotesize $x$};
            
            \draw (-1,4) arc (180:360:1cm and 0.3cm);
            \draw[dashed] (-1,4) arc (180:0:1cm and 0.3cm);
            \draw (-1,4) -- (-1,6);
            \draw (1,4) -- (1,6);
            \draw (0,6) ellipse (1cm and 0.3cm);
            \node at (-1.3, 6) {$\mathcal{G}$};

            \draw (0, 6.0) -- (0, 4.0);
            \filldraw (0, 4.0) circle (1.0pt);
            \filldraw (0, 6.0) circle (1.0pt);
            \node[left] at (0.05, 5) {\footnotesize $\mathcal{G}_x$};
            
            \draw[line] (0.7, 3.75) -- (0.7, 0.25) node[midway, right] {$\pi_{\mathcal{G}, M}$};
            \draw[dashed] (0,4) -- (0,0);
        \end{scope}
        
        \node at (2.3, 5) {\Large $\times_M$};

        \draw[line, dashed] (4.5, 5.1) to[bend right=30] node[above] {$\Phi_x$} (0.5, 5.1);

        \begin{scope}[xshift=10cm]
            \node at (-1.3, 0) {$M$};
            \draw (0,0) ellipse (1cm and 0.3cm);
            \filldraw (0,0) circle (1pt) node[left] {\footnotesize $x$};

            \draw (0,2) ellipse (1cm and 0.3cm);
            \fill[gray!20] (-0.5,2) arc (180:360:0.5cm and 0.08cm) -- (0.5,2) arc (0:180:0.5cm and 0.08cm) -- cycle;
            \draw (-0.5,2) arc (180:360:0.5cm and 0.08cm);
            \draw (-0.5,2) arc (180:0:0.5cm and 0.08cm);
            \node at (-1.3, 2) {$\mathcal{S}$};
            \filldraw (0,2) circle (1.0pt) node[left] {};
            \node[left] at (-0.4, 2) {\footnotesize $\mathcal{S}_x$};
        
             \fill[gray!20] (-0.5,4) arc (180:360:0.5cm and 0.08cm) -- (0.5,6) arc (0:180:0.5cm and 0.08cm) -- cycle;
             \draw (-0.5,4) arc (180:360:0.5cm and 0.08cm);
             \draw[dashed] (-0.5,4) arc (180:0:0.5cm and 0.08cm);
             \draw (-0.5,4) -- (-0.5,6);
             \draw (0.5,4) -- (0.5,6);
             \draw (0,6) ellipse (0.5cm and 0.08cm);
             \node[left] at (-0.45, 5) {\footnotesize $P_x$};
        
             \draw (0, 6.0) -- (0, 4.0);
             \filldraw (0, 4.0) circle (1.0pt);
             \filldraw (0, 6.0) circle (1.0pt);
             \node[left] at (0.05, 5) {\footnotesize $\mathcal{G}_x$};
        
             \draw (-1,4) arc (180:360:1cm and 0.3cm);
             \draw[dashed] (-1,4) arc (180:0:1cm and 0.3cm);
             \draw (-1,4) -- (-1,6);
             \draw (1,4) -- (1,6);
             \draw (0,6) ellipse (1cm and 0.3cm);
             \node at (-1.3, 6) {$P$};

            \draw[line] (0.7, 3.75) -- (0.7, 2.24) node[midway, right] {$\pi_{P, \mathcal{S}}$};
            \draw[line] (0.7, 1.75) -- (0.7, 0.25) node[midway, right] {$\pi_{\mathcal{S}, M}$};
            \draw[dashed] (0,4) -- (0,2);
            \draw[dashed] (-0.5,4) -- (-0.5,2) -- (0,1) -- (0,0);
            \draw[dashed] (0.5,4) -- (0.5,2) -- (0,1);
            
            \draw[line] (-0.7, 3.75) .. controls (-2, 3) and (-2, 1) .. (-0.7, 0.25) node[midway, left] {$\pi_{P, M}$};
        \end{scope}

        \draw[line] (5.7, 5) -- (8.8, 5) node[midway, above] {$\Phi$};

    \end{tikzpicture}}
    \caption{Generalized principal bundle construction}
\end{figure}

\noindent In view of Lemma \ref{lem 1.11}, we could in particular say (oversimplifying) that a generalized principal bundle can be seen as a \lq\lq bundle\rq\rq\,\,of standard principal bundles.

\begin{ex}\label{ex 1.3.2}
We will see in Section \ref{sec 2.5} that a usual principal bundle is an example of generalized principal bundle. Another example of generalized principal bundle is actually given by any Lie group fiber bundle $(\mathcal{G}, M, \pi_{\mathcal{G}, M}, G)$. 

In fact, the multiplication map $\mathcal{M} \colon \mathcal{G} \times_M \mathcal{G} \rightarrow \mathcal{G}$ defines a right fibered action of the Lie group fiber bundle $(\mathcal{G}, M, \pi_{\mathcal{G}, M}, G)$ on itself since it is a bundle morphism over the identity ${id}_M \colon M \rightarrow M$ and it possible to check that the fibered action axioms of Definition \ref{defn 1.9} are satisfied. Moreover, $\mathcal{M}$ is also free and proper as a fibered action, since:
\begin{enumerate}
\item If $\mathcal{M}(\gamma, \delta)=\gamma \cdot \delta=\gamma$, for some $(\gamma, \delta) \in \mathcal{G} \times_M \mathcal{G}$, then $\delta=e_x$, $\pi_{\mathcal{G}, M}(\gamma)=x$.
\item The map:
\begin{align*}
\Theta_{\mathcal{M}} \colon  \mathcal{G} \times_M \mathcal{G} \longrightarrow \mathcal{G} \times_M \mathcal{G} \colon (\gamma, \delta) \longmapsto \Big(\gamma, \mathcal{M}(\gamma, \delta)\Big)=(\gamma, \gamma \cdot \delta)
\end{align*}
is proper. In fact $\mathcal{G} \times_M \mathcal{G}$ is an Hausdorff space (because it is a differentiable manifold) and $\Theta_{\mathcal{M}}$ has a smooth left inverse given by:
\begin{align*}
\Psi_{\mathcal{M}} \colon \mathcal{G} \times_M \mathcal{G} \longrightarrow \mathcal{G} \times_M \mathcal{G} \colon (\gamma, \delta) \longmapsto \Big(\gamma, \mathcal{M}\Big(\iota(\gamma), \delta\Big)\Big)
\end{align*}
since:
\begin{align*}
(\Psi_{\mathcal{M}} \circ \Theta_{\mathcal{M}})(\gamma, \delta)=\Psi_{\mathcal{M}}(\gamma, \gamma \cdot \delta)=\Big(\gamma, \mathcal{M}\Big(\iota(\gamma), \gamma \cdot \delta\Big)\Big)=(\gamma, \gamma^{-1} \cdot \gamma \cdot \delta)=(\gamma, \delta)
\end{align*}
These two properties give a sufficient condition for $\Theta_{\mathcal{M}}$ to be proper (see for example \cite[Appendix A]{LEE (2013)}).
\end{enumerate}
Thanks to Theorem \ref{thm 1.12}, this means that $(\mathcal{G}, \mathcal{G}/\mathcal{G}, \pi_{\mathcal{G}, \mathcal{G}/\mathcal{G}}, G)$ is a generalized principal bundle. Furthermore the equivalence relation $\sim_{\mathcal{G}}$ on $\mathcal{G}$ defining $\mathcal{G}/\mathcal{G}$ is given by:
\begin{align*}
\gamma_1 \sim_{\mathcal{G}} \gamma_2  \, \Longleftrightarrow \, \exists \, x \in M, \, \exists \, \delta \in \mathcal{G}_x \, | \, \gamma_1, \gamma_2 \in \mathcal{G}_x, \, \mathcal{M}(\gamma_1, \delta)=\gamma_1 \cdot \delta=\gamma_2 \, \Longleftrightarrow \, \exists \, x \in M \, | \, \gamma_1, \gamma_2 \in \mathcal{G}_x
\end{align*}
In this sense $\mathcal{G}/\mathcal{G}=\{ [\gamma]_{\mathcal{G}} \equiv x \,\,|\,\, x\in M,\, \gamma \in \mathcal{G}_x \}$ and $\pi_{\mathcal{G}/\mathcal{G}, M}$ in \eqref{7} is essentially the identity which is bijective and a submersion from Remark \ref{oss 1.13}, hence a diffeomorphism. As a consequence $\mathcal{G}/\mathcal{G} \cong M$ at a differentiable manifold level and $(\mathcal{G}, M, \pi_{\mathcal{G}, M}, G) \equiv (\mathcal{G}, \mathcal{G}/\mathcal{G}, \pi_{\mathcal{G}, \mathcal{G}/\mathcal{G}}, G)$.

The conclusion is that we can always regard a Lie group fiber bundle $(\mathcal{G}, M, \pi_{\mathcal{G}, M}, G)$ as a generalized principal bundle, a feature that was not highlighted before in the context of fibered actions, to the best of our knowledge. 

\medskip

In particular since we have already noted in Remark \ref{oss 1.6} that Lie group fiber bundles are in general not principal bundles, we deduce that the notion of generalized principal bundle {\em strictly} extends the one of principal bundle. 

From Examples \ref{ex 1.2.3} and \ref{ex 1.2.4}, we deduce also that vector bundles and adjoint bundles are classes of examples of generalized principal bundles.
\end{ex}

\medskip

Following our constructive approach to the topic, it is natural at this point to give fibered coordinates on $P$ taking into account the generalized principal bundle structure on $(P, \mathcal{S}, \pi_{P, \mathcal{S}}, G)$. 

Local coordinates techniques regarding generalized principal bundles are already present in \cite{CASTRILLON LOPEZ (2023)} and especially in \cite{CASTRILLON LOPEZ (2024)}, inside some examples and proofs. Nevertheless the treatment in both cases lacks of the study of the corresponding transformation laws which we will analyze thoroughly in the next paragraphs.

\medskip

In order to do that, consider two local trivializations $(V_\alpha, \psi_P^\alpha)$ and $(V_\beta, \psi_P^\beta)$ for $(P, \mathcal{S}, \pi_{P, \mathcal{S}}, G)$, as defined above in the proof of Theorem \ref{thm 1.12}, such that $V_\alpha \cap V_\beta=\pi^{-1}_{\mathcal{S}, M}(U_\alpha) \cap \pi^{-1}_{\mathcal{S}, M}(U_\beta) \neq \varnothing$, which implies $U_\alpha \cap U_\beta \neq \varnothing$, and compute the map:
\begin{align*}
\psi^\beta_P \circ (\psi^\alpha_P)^{-1} \colon V_\alpha \cap V_\beta \times G & \longrightarrow \pi^{-1}_{P,\mathcal{S}}(V_\alpha \cap V_\beta) \longrightarrow V_\alpha \cap V_\beta \times G
\\
\Big([p]_{\mathcal{G}}, g\Big) & \longmapsto (\psi^\alpha_P)^{-1}\Big([p]_{\mathcal{G}}, g\Big) \longmapsto \psi^\beta_P \Big((\psi^\alpha_P)^{-1}\Big([p]_{\mathcal{G}}, g\Big)\Big)
\end{align*}
which from the definitions of $(\psi^\alpha_P)^{-1}$ and $\psi_P^\beta$ provides:
\begin{align*}
\psi^\beta_P \Big((\psi^\alpha_P)^{-1}\Big([p]_{\mathcal{G}}, g\Big)\Big)&=\psi^\beta_P \Big(\Phi\Big(\sigma\Big([p]_{\mathcal{G}}\Big), (\psi^\alpha_{\mathcal{G}})^{-1}\Big(\pi_{\mathcal{S}, M}\Big([p]_{\mathcal{G}}\Big), g\Big)\Big)\Big)=\Big([p]_{\mathcal{G}}, h\Big)
\end{align*}
where $h \in G$ is the unique element (as seen in the proof of Theorem \ref{thm 1.12}) such that:
\begin{align*}
\Phi\Big(\sigma\Big([p]_{\mathcal{G}}\Big), (\psi^\alpha_{\mathcal{G}})^{-1}(x, g)\Big)&=\Phi\Big(\hat{\sigma}\Big(\Big[\Phi\Big(\sigma\Big([p]_{\mathcal{G}}\Big), (\psi^\alpha_{\mathcal{G}})^{-1}(x, g)\Big)\Big]_{\mathcal{G}}\Big), \gamma\Big)=\Phi\Big(\hat{\sigma}\Big([p]_{\mathcal{G}}\Big), \gamma\Big)
\end{align*}
with $\gamma=(\psi^\beta_{\mathcal G})^{-1}(x, h) \in \mathcal{G}_x$, once set $\pi_{\mathcal{S}, M}\Big([p]_{\mathcal{G}}\Big)=x$ and where $\hat{\sigma}$ is the fixed local section related to $\psi^\beta_P$. Note that:
\begin{align*}
\pi_{P, M}\Big(\sigma\Big([p]_{\mathcal{G}}\Big)\Big)&=\Big(\pi_{\mathcal{S}, M} \circ \pi_{P, \mathcal{S}}\Big)\Big(\sigma\Big([p]_{\mathcal{G}}\Big)\Big)=\pi_{\mathcal{S}, M}\Big([p]_{\mathcal{G}}\Big)=x
\\
\pi_{P, M}\Big(\hat{\sigma}\Big([p]_{\mathcal{G}}\Big)\Big)&=\Big(\pi_{\mathcal{S}, M} \circ \pi_{P, \mathcal{S}}\Big)\Big(\hat{\sigma}\Big([p]_{\mathcal{G}}\Big)\Big)=\pi_{\mathcal{S}, M}\Big([p]_{\mathcal{G}}\Big)=x
\end{align*}
and hence in a similar manner as in the proof of Theorem \ref{thm 1.12} there exists a unique $\varphi \in G$ such that $\sigma\Big([p]_{\mathcal{G}}\Big)=\Phi\Big(\hat{\sigma}\Big([p]_{\mathcal{G}}\Big), (\psi^\beta_{\mathcal G})^{-1}(x, \varphi)\Big)$. In this way we have that:
\begin{align*}
\Phi\Big(\sigma\Big([p]_{\mathcal{G}}\Big), (\psi^\alpha_{\mathcal{G}})^{-1}(x, g)\Big)&=\Phi\Big(\Phi\Big(\hat{\sigma}\Big([p]_{\mathcal{G}}\Big), (\psi^\beta_{\mathcal G})^{-1}(x, \varphi)\Big), (\psi^\alpha_{\mathcal{G}})^{-1}(x, g)\Big)
\\
&=\Phi\Big(\hat{\sigma}\Big([p]_{\mathcal{G}}\Big), (\psi^\beta_{\mathcal G})^{-1}(x, \varphi) \cdot (\psi^\alpha_{\mathcal{G}})^{-1}(x, g)\Big)
\\
&=\Phi\Big(\hat{\sigma}\Big([p]_{\mathcal{G}}\Big), (\psi^\beta_{\mathcal G})^{-1}(x, \varphi) \cdot \Big((\psi^\beta_{\mathcal G})^{-1} \circ \psi^\beta_{\mathcal G} \circ (\psi^\alpha_{\mathcal{G}})^{-1}\Big)(x, g)\Big)
\\
&=\Phi\Big(\hat{\sigma}\Big([p]_{\mathcal{G}}\Big), (\psi^\beta_{\mathcal G})^{-1}\Big((x, \varphi) \cdot \Big(x, G_{\beta\alpha}(x)[g]\Big)\Big)\Big)
\\
&=\Phi\Big(\hat{\sigma}\Big([p]_{\mathcal{G}}\Big), (\psi^\beta_{\mathcal G})^{-1}\Big(x, \varphi \cdot G_{\beta\alpha}(x)[g]\Big)\Big)
\end{align*}
where we used that $(\psi^\beta_{\mathcal G})^{-1}$ is fiberwise a Lie group isomorphism and where:
\begin{align*}
\psi^\beta_{\mathcal G} \circ (\psi^\alpha_{\mathcal{G}})^{-1} \colon U_\alpha \cap U_\beta \times G \longrightarrow U_\alpha \cap U_\beta \times G \colon (x, g) \longmapsto \Big(x, G_{\beta\alpha}(x)[g]\Big)
\end{align*}
with $G_{\beta\alpha}(x) \in \mathrm{Aut}(G)$ according to Remark \ref{oss 1.6}. This means that:
\begin{align*}
\psi^\beta_P \circ (\psi^\alpha_P)^{-1} \colon V_\alpha \cap V_\beta \times G \longrightarrow V_\alpha \cap V_\beta \times G \colon \Big([p]_{\mathcal{G}}, g\Big) \longmapsto \Big([p]_{\mathcal{G}}, h\Big)=\Big([p]_{\mathcal{G}}, \varphi\Big([p]_{\mathcal{G}}\Big) \cdot G_{\beta\alpha}(x)[g]\Big)
\end{align*}
where we have written $\varphi=\varphi\Big([p]_{\mathcal{G}}\Big)$, since $\varphi$ depends on $[p]_{\mathcal{G}}$.

Recall that on the arbitrary $V_\alpha=\pi^{-1}_{\mathcal{S}, M}(U_\alpha)$ we have coordinates given through the Quotient Manifold Theorem and the bijection $\varepsilon_\alpha \colon (\pi_{\mathcal{S}, M})^{-1}(U_\alpha) \rightarrow (U_\alpha \times F)/G$. Note also that from Remark \ref{oss 1.13}, we know that $(\mathcal{S}, M, \pi_{\mathcal{S}, M})$ is a fibered manifold, hence we can in particular fix fibered coordinates $(x^\mu, \sigma^\Theta)$ on $V_\alpha$ respecting the projection $\pi_{\mathcal{S}, M}$, which means that we want the change of coordinates on $\mathcal{S}$ to satisfy:
\begin{align}\label{10}	
		\left\{
		\begin{array}{l}
			x'^\mu=x'^\mu(x)
			\\
			\sigma'^\Theta=\Sigma^\Theta(x, \sigma)
		\end{array}
		\right.	
\end{align}
where we also fix $x'^\mu=x'^\mu(x)$ as in \eqref{1}, with $\Theta \in \{1, \dots, n-l=\mathrm{dim}(F)-\mathrm{dim}(G)\}$ (observe that this whole construction implies then that we have $\mathrm{dim}(F) \geq \mathrm{dim}(G)$), because the base space of $(\mathcal{S}, M, \pi_{\mathcal{S}, M})$ is $M$ and since from the Quotient Manifold Theorem we have:
\begin{align*}
\mathrm{dim}(\mathcal{S})&=\mathrm{dim}\Big((U_\alpha \times F)/G\Big)=\mathrm{dim}(U_\alpha)+\mathrm{dim}(F)-\mathrm{dim}(G)=\mathrm{dim}(M)+\Big(\mathrm{dim}(F)-\mathrm{dim}(G)\Big)
\end{align*}
The map $\psi^\beta_P \circ (\psi^\alpha_P)^{-1}$ written in these local coordinates reads as:
\begin{align*}
(x^\mu, \sigma^\Theta, g^I) \longmapsto \Big(x'^\mu(x), \Sigma^\Theta(x, \sigma), \pi^I\Big(\varphi(x, \sigma), G(x, g)\Big)\Big)
\end{align*}
once fixed coordinates on the Lie group $G$ also as in \eqref{1}. This gives us the transformation laws of {\em generalized principal bundle coordinates} $(x^\mu, \sigma^\Theta, g^I)$ on $P$, as we shall call them, which can be summarized as:
\begin{align}\label{11}	
                 \left\{
		\begin{array}{l}
			x'^\mu=x'^\mu(x)
			\\
			\sigma'^\Theta=\Sigma^\Theta(x, \sigma)
			\\
			g'^I=\pi^I\Big(\varphi(x, \sigma), G(x, g)\Big)
		\end{array}
		\right.
\end{align}
where $\pi^I$ are the local expressions of the product in the Lie group $G$, $\varphi(x, \sigma)=\varphi^A(x, \sigma)$ are the local expressions of the function $\varphi$ (which characterizes the generalized principal bundle) and $G(x, g)=G^B(x, g)$ are the local expressions of the transition function $G_{\beta\alpha}$ (as in \eqref{1}), with $\mu \in \{1, \dots, m\}$, $\Theta \in \{1, \dots, n-l\}$, $A, B, I \in \{1, \dots, l\}$, $\mathrm{dim}(M)=m$, $\mathrm{dim}(F)=n$, $\mathrm{dim}(G)=l$.

\begin{oss}\label{oss 1.16}
From the definition of $\psi^{\alpha}_P$, it also holds that:
\begin{equation*}
\xymatrixrowsep{0.56in}
\xymatrixcolsep{0.8in}
	\xymatrix{
	   {\pi^{-1}_{P,\mathcal{S}}(V_\alpha)} \ar[r]^{\psi^{\alpha}_P} \ar[d]_{\pi_{P, \mathcal{S}}} & {V_\alpha \times G} \ar[d]^{p_1} \\
	   {V_\alpha}  \ar[r]_{id_{V_\alpha}} \ar[d]_{\pi_{\mathcal{S}, M}} & {V_\alpha} \ar[d]^{\pi_{\mathcal{S}, M}} \\
	   {U_\alpha}  \ar[r]_{id_{U_\alpha}} & {U_\alpha} \\}
\end{equation*}
is a commutative diagram, meaning that in generalized principal bundle coordinates we have:
\begin{align*}
\pi_{P, \mathcal{S}} \colon (x^\mu, \sigma^\Theta, g^I) \longmapsto (x^\mu, \sigma^\Theta), \quad \pi_{\mathcal{S}, M} \colon (x^\mu, \sigma^\Theta) \longmapsto x^\mu
\end{align*}
where we used that $(x^\mu, \sigma^\Theta)$ are fibered coordinates on $\mathcal{S}$ because of \eqref{10}. This is a consequence of the nature of the transformation laws \eqref{11} which can be described saying that generalized principal bundle coordinates are {\em stratified fibered coordinates} on the {\em composite fibered manifold} \eqref{7}.
\end{oss}

Note that when we consider a Lie group fiber bundle as a generalized principal bundle through the identification $(\mathcal{G}, M, \pi_{\mathcal{G}, M}, G) \equiv (\mathcal{G}, \mathcal{G}/\mathcal{G}, \pi_{\mathcal{G}, \mathcal{G}/\mathcal{G}}, G)$ the relation $\sigma\Big([p]_{\mathcal{G}}\Big)=\Phi\Big(\hat{\sigma}\Big([p]_{\mathcal{G}}\Big), (\psi^\beta_{\mathcal G})^{-1}(x, \varphi)\Big)$ becomes just:
\begin{align*}
e_x=\mathcal{M}\Big(e_x, (\psi^\beta_{\mathcal G})^{-1}(x, \varphi)\Big)=e_x \cdot (\psi^\beta_{\mathcal G})^{-1}(x, \varphi)=(\psi^\beta_{\mathcal G})^{-1}(x, \varphi) \, \Longrightarrow \, \varphi=e \in G
\end{align*}
In fact, remembering that $\pi_{\mathcal{G}/\mathcal{G}, M}$ is essentially the identity, we can set all local sections $\sigma$ from Theorem \ref{thm 1.12} as the restriction of the unit section $1 \colon M \rightarrow \mathcal{G}$ on the relevant open set of $M$. Moreover, since $\mathcal{G}/\mathcal{G} \cong M$, the fibered coordinates $(x^\mu, \sigma^\Theta)$ reduce just to the coordinates $x^\mu$ on $M$, so that on the whole \eqref{11} just reads as:
\begin{align}\label{1.12}	
                 \left\{
		\begin{array}{l}
			x'^\mu=x'^\mu(x)
			\\
			g'^I=\pi^I\Big(e, G(x, g)\Big)=G^I(x, g)
		\end{array}
		\right.
\end{align}
This means that the generalized principal bundle coordinates on $(\mathcal{G}, M, \pi_{\mathcal{G}, M}, G)$ are the fibered coordinates $(x^\mu, g^I)$ satisfying \eqref{1}.

\medskip

At this point, referring to \cite[Chapter 16, Section 14]{DIEUDONNE' (1972)}, to \cite[Chapter 1, Section 5]{FATIBENE (2003)} and to \cite[Chapter 3, Section 10]{KOLAR (1993)} for the general theory of associated bundles to standard principal bundles, it is moreover possible to observe that the following holds true:
\begin{prop}\label{prop 1.3.4}
Any Lie group fiber bundle $(\mathcal{G}, M, \pi_{\mathcal{G}, M}, G)$, with connected typical fiber $G$, is an associated bundle to a certain standard principal bundle.
\end{prop}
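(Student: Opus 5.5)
The plan is to exhibit $\mathcal{G}$ as an associated bundle via its transition functions, using the fiber bundle construction theorem in the spirit of the constructive approach of the paper. By Remark \ref{oss 1.6}, a trivializing atlas $\{(U_\alpha,\psi^\alpha_{\mathcal G})\}$ of $(\mathcal{G}, M, \pi_{\mathcal{G}, M}, G)$ has transition functions
\begin{align*}
G_{\beta\alpha} \colon U_\alpha \cap U_\beta \longrightarrow \mathrm{Aut}(G),
\end{align*}
and these satisfy the cocycle conditions $G_{\alpha\alpha}=id_G$ and $G_{\gamma\beta}(x)\circ G_{\beta\alpha}(x)=G_{\gamma\alpha}(x)$. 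The natural candidate structure group is $\mathrm{Aut}(G)$ itself, acting on the left on $G$ by evaluation, $(\phi, g)\mapsto \phi[g]$. The principal bundle will then be the one built from the cocycle $G_{\beta\alpha}$ with values in that group, and $\mathcal{G}$ will be its associated bundle with standard fiber $G$.

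The first key step, where connectedness enters, is to give $\mathrm{Aut}(G)$ a Lie group structure. For connected $G$, a Lie group automorphism is determined by its differential at $e$, so the map
\begin{align*}
\mathrm{Aut}(G) \longrightarrow \mathrm{Aut}(\mathfrak{g}), \qquad \phi \longmapsto T_e(\phi),
\end{align*}
is an injective group homomorphism. Its image is a closed subgroup of the Lie group $\mathrm{Aut}(\mathfrak{g})\subseteq GL(\mathfrak{g})$. To see closedness, write $G=\tilde G/D$ with $\tilde G$ the simply connected cover and $D$ a discrete central subgroup; the image consists of the automorphisms of $\mathfrak{g}\cong\mathrm{Lie}(\tilde G)$ whose lift to $\tilde G$ preserves $D$, and this is a closed condition. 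By Cartan's closed subgroup theorem, $\mathrm{Aut}(G)$ is therefore a Lie group. I also need the action map $\mathrm{Aut}(G)\times G\to G$ to be smooth. On a neighbourhood of $e$ this follows from $\phi[\exp X]=\exp(T_e(\phi)X)$. Since $G$ is connected, it is generated by such a neighbourhood, so every $g$ is a finite product of exponentials and smoothness propagates everywhere.

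The second step is to check that each $x\mapsto G_{\beta\alpha}(x)\in \mathrm{Aut}(G)$ is smooth for this Lie group structure. Via the embedding above, it suffices that $x\mapsto T_e(G_{\beta\alpha}(x))$ is smooth into $GL(\mathfrak{g})$. In coordinates this map is $\partial_g G^I(x,e)$, which is smooth because $G^I(x,g)$ is smooth in both variables. This is consistent with the transition functions of the Lie algebra fiber bundle constructed earlier in the paper.

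With this in hand, the construction theorem (\cite[Chapter 16, Section 13]{DIEUDONNE' (1972)}) applied to the cocycle $G_{\beta\alpha}$ gives a principal bundle $(P, M, \pi_{P,M}, \mathrm{Aut}(G))$ whose principal coordinates transform as in \eqref{100}, with $\varphi=G_{\beta\alpha}$. The associated bundle $P\times_{\mathrm{Aut}(G)}G$ then has fibered coordinates transforming by $g'=G_{\beta\alpha}(x)[g]$, which is exactly \eqref{1}. Bundles with the same base, fiber and transition cocycle are isomorphic, so gluing the local trivializations yields a bundle isomorphism $\mathcal{G}\cong P\times_{\mathrm{Aut}(G)}G$ over $id_M$. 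Because the transition functions act by automorphisms, the fiberwise group structures on the two sides agree, and the isomorphism also respects the Lie group fiber bundle structure.

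I expect the main obstacle to be the first step: making $\mathrm{Aut}(G)$ a Lie group acting smoothly on $G$, in particular proving that the image in $\mathrm{Aut}(\mathfrak{g})$ is closed when $G$ is not simply connected. If that proves delicate, my fallback is to take as structure group the closure of the image inside $\mathrm{Aut}(\mathfrak{g})$ acting on $\tilde G$, and then descend the resulting associated bundle to $G$.
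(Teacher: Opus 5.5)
Your proposal is correct and follows essentially the same route as the paper: $\mathrm{Aut}(G)$ as structure group, the principal bundle glued from the cocycle $G_{\beta\alpha}$ acting by left composition, the associated bundle via the evaluation action, and the identification with $\mathcal{G}$ because the transition functions coincide. The only difference is that you justify three points yourself: the Lie group structure on $\mathrm{Aut}(G)$ (as a closed subgroup of $\mathrm{Aut}(\mathfrak{g})$), the smoothness of the evaluation action, and the smoothness of $x \mapsto G_{\beta\alpha}(x)$ into $\mathrm{Aut}(G)$. The paper instead cites Hochschild for the first and takes the other two for granted, and since your closedness argument works, your fallback is not needed.
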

\begin{proof}

\medskip

\noindent As a consequence of a result that can be found in \cite{HOCHSCHILD (1952)}, the automorphism group $(\mathrm{Aut}(G), \circ)$ of a connected Lie group $(G, \cdot)$ is itself a Lie group. Considering now a trivializing atlas of $(\mathcal{G}, M, \pi_{\mathcal{G}, M}, G)$ of the form $\{ (U_\alpha,\psi_{\mathcal{G}}^\alpha) \,\,|\,\, \alpha \in \mathcal{A} \}$, we have seen above that $\psi^\beta_{\mathcal G} \circ (\psi^\alpha_{\mathcal{G}})^{-1}(x, g)=\Big(x, G_{\beta\alpha}(x)[g]\Big)$, where:
\begin{align*}
G_{\beta\alpha} \colon U_\alpha \cap U_\beta \longrightarrow \mathrm{Aut}(G) \colon x \longmapsto G_{\beta\alpha}(x)
\end{align*}
are the transition functions. Thanks to the fiber bundle construction theorem, given the open covering $\{ U_\alpha \,\,|\,\, \alpha \in \mathcal{A} \}$ of $M$ and the maps:
\begin{align*}
\varphi_{\beta\alpha} \colon U_\alpha \cap U_\beta & \longrightarrow \mathrm{Aut}(G) \longhookrightarrow \mathrm{Diff}\Big(\mathrm{Aut}(G)\Big)
\\
x & \longmapsto G_{\beta\alpha}(x) \longmapsto G_{\beta\alpha}(x) \circ (-) \colon \mathrm{Aut}(G) \rightarrow \mathrm{Aut}(G) \colon a \mapsto G_{\beta\alpha}(x) \circ a
\end{align*}
satisfying the cocycle conditions, since the transition functions $G_{\beta\alpha}$ themselves must respect the cocycle conditions (see for example \cite[Chapter 3, Section 9]{KOLAR (1993)}), we obtain a (unique up to fiber bundle isomorphism) fiber bundle $(\hat{P}, M, \pi_{\hat{P}, M}, \mathrm{Aut}(G))$ with a trivializing atlas of the form $\{ (U_\alpha,\psi_{\hat{P}}^\alpha) \,\,|\,\, \alpha \in \mathcal{A} \}$, where:
\begin{align*}
\psi^\beta_{\hat{P}} \circ (\psi^\alpha_{\hat{P}})^{-1} \colon U_\alpha \cap U_\beta \times \mathrm{Aut}(G) & \longrightarrow U_\alpha \cap U_\beta \times \mathrm{Aut}(G)
\\
(x, a) & \longmapsto \Big(x, \varphi_{\beta\alpha}(x)[a]\Big)=\Big(x, G_{\beta\alpha}(x) \circ a\Big)
\end{align*}
In particular it is a standard principal bundle thanks to the nature of its transition functions $\varphi_{\beta\alpha}$, compare again with \cite{KOLAR (1993)}.

Finally we can consider the evaluation map:
\begin{align*}
\lambda \colon \mathrm{Aut}(G) \times G \longrightarrow G \colon (a, g) \longmapsto \lambda(a, g)=a(g)
\end{align*}
which gives a smooth left action of the Lie group $\mathrm{Aut}(G)$ on $G$. The corresponding associated bundle $(\hat{P} \times_\lambda G, M, \pi_{\hat{P} \times_\lambda G, M}, G)$, to the principal bundle $(\hat{P}, M, \pi_{\hat{P}, M}, \mathrm{Aut}(G))$, has a trivializing atlas $\{ (U_\alpha, \psi_{\hat{P} \times_\lambda G}^\alpha) \,\,|\,\, \alpha \in \mathcal{A} \}$ satisfying:
\begin{align*}
\psi^\beta_{\hat{P} \times_\lambda G} \circ (\psi^\alpha_{\hat{P} \times_\lambda G})^{-1} \colon U_\alpha \cap U_\beta \times G & \longrightarrow U_\alpha \cap U_\beta \times G
\\
(x, g) & \longmapsto \Big(x, \lambda\Big(\varphi_{\beta\alpha}(x), g\Big)\Big)=\Big(x, \varphi_{\beta\alpha}(x)[g]\Big)=\Big(x, G_{\beta\alpha}(x)[g]\Big)
\end{align*}
For a proof of this property see for example again \cite{FATIBENE (2003)}.

Observing that $\psi^\beta_{\hat{P} \times_\lambda G} \circ (\psi^\alpha_{\hat{P} \times_\lambda G})^{-1}=\psi^\beta_{\mathcal G} \circ (\psi^\alpha_{\mathcal{G}})^{-1}$ and using once more the fiber bundle construction theorem, we conclude that the associated bundle $(\hat{P} \times_\lambda G, M, \pi_{\hat{P} \times_\lambda G, M}, G)$ coincides with the Lie group fiber bundle $(\mathcal{G}, M, \pi_{\mathcal{G}, M}, G)$ up to fiber bundle isomorphism. We have thus obtained the thesis.

\end{proof}

Note that the Proposition above does not contradict Remark \ref{oss 1.2.5}, since adjoint bundles are just a particular {\em example} of associated bundles to principal bundles.

\medskip

Even though we will not go into its details, an alternative proof of Proposition \ref{prop 1.3.4} could use the fact that a Lie group fiber bundle $(\mathcal{G}, M, \pi_{\mathcal{G}, M}, G)$, with connected typical fiber $G$, actually has an $\mathrm{Aut}(G)$-structure and that:
\begin{thm}\label{thm 1.3.5}
For any fiber bundle $(N, M, \pi_{N, M}, Q)$ endowed with a $G$-structure there exists a standard principal bundle $(P, M, \pi_{P, M}, G)$ such that $(N, M, \pi_{N, M}, Q)$ can be seen as an associated bundle to $(P, M, \pi_{P, M}, G)$.
\end{thm}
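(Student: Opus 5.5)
The plan is to mirror the proof of Proposition \ref{prop 1.3.4}, replacing $\mathrm{Aut}(G)$ by the structure group $G$ and the evaluation map by the given action on the standard fiber $Q$. A $G$-structure on $(N, M, \pi_{N, M}, Q)$ provides two things. First, a smooth effective left action $\lambda \colon G \times Q \rightarrow Q$. Second, a trivializing atlas $\{(U_\alpha, \psi^\alpha_N) \,|\, \alpha \in \mathcal{A}\}$ whose transition maps have the form $\psi^\beta_N \circ (\psi^\alpha_N)^{-1}(x, q) = \big(x, \lambda(g_{\beta\alpha}(x), q)\big)$, with smooth maps $g_{\beta\alpha} \colon U_\alpha \cap U_\beta \rightarrow G$.

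The first step is to check that the $g_{\beta\alpha}$ satisfy the cocycle conditions $g_{\alpha\alpha} = e$ and $g_{\gamma\beta} \cdot g_{\beta\alpha} = g_{\gamma\alpha}$ on triple overlaps. On the atlas side these hold automatically after composing with $\lambda$. Effectiveness of $\lambda$ then lets us lift them from $\mathrm{Diff}(Q)$ to $G$ itself. If the definition of $G$-structure already requires the $g_{\beta\alpha}$ to form a cocycle, this step is immediate.

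The second step invokes the fiber bundle construction theorem with fiber $G$ and the transition maps $\varphi_{\beta\alpha}(x) = g_{\beta\alpha}(x) \cdot (-) \colon G \rightarrow G$. This yields a bundle $(P, M, \pi_{P, M}, G)$ with atlas $\{(U_\alpha, \psi^\alpha_P)\}$. Because the transitions act by left multiplication, the right multiplication of $G$, read in any chart, glues to a global free right action of $G$ on $P$. This action is transitive on the fibers and its orbit space is $M$. By Proposition \ref{prop 1.3} (or directly from the local product structure), it is a standard principal bundle in the sense of Definition \ref{defn 1.4}.

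The third step forms the associated bundle $P \times_\lambda Q = (P \times Q)/G$, with $[p \cdot g, q] = [p, \lambda(g, q)]$. Its induced atlas has transitions $(x, q) \mapsto \big(x, \lambda(g_{\beta\alpha}(x), q)\big)$, exactly as computed in the proof of Proposition \ref{prop 1.3.4}. These coincide with those of $N$ on the same cover. The uniqueness part of the construction theorem then gives a bundle isomorphism $N \cong P \times_\lambda Q$ over $\mathrm{id}_M$. Explicitly, it is given chartwise by $\psi^\alpha_N(n) = (x, q) \mapsto [(\psi^\alpha_P)^{-1}(x, e), q]$, and it is well defined on overlaps precisely because the cocycles match.

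The main obstacle I expect is the first step, together with making precise which definition of $G$-structure is in force. If the definition only asks that the transition maps take values in the image $\lambda(G) \subseteq \mathrm{Diff}(Q)$, without effectiveness, then the $G$-valued lifts $g_{\beta\alpha}$ may fail to be smooth or to satisfy the cocycle condition. In that case I would first try to replace $G$ by $G/\ker\lambda$. Alternatively, I would build the definition so that it includes a choice of smooth $G$-valued cocycle, which is the situation in Proposition \ref{prop 1.3.4}: there $\mathrm{Aut}(G)$ acts effectively on $G$.
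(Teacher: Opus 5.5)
Your proposal is correct and is essentially the argument the paper relies on. The paper gives no proof of Theorem \ref{thm 1.3.5} itself and defers to Kol\'a\v{r}--Michor--Slov\'ak, whose proof, like the paper's own proof of Proposition \ref{prop 1.3.4} that you mirror, glues a principal bundle from the $G$-valued cocycle via the fiber bundle construction theorem with left-multiplication transitions, then identifies $N$ with $P \times_\lambda Q$ by matching transition functions. In that reference the smooth $G$-valued cocycle is part of the definition of a $G$-structure, so your first step is immediate and no effectiveness assumption is needed.
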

For the definition of {\em $G$-structure} on fiber bundles, where $G$ is a Lie group, and a proof of the preceding theorem see \cite[Chapter 3, Section 10]{KOLAR (1993)}.

Proposition \ref{prop 1.3.4} has been proved also in \cite{BLAZQUEZ-SANZ (2022)} thanks to the completely different approach that addresses Lie group fiber bundles as particular cases of Lie groupoids. Our proof, from this point of view, has the advantage of being built up from less elaborate techniques.

\medskip

The following table summarizes and compares standard and generalized principal bundles:
\begin{align*}
{\renewcommand{\arraystretch}{1.5} \renewcommand{\tabcolsep}{0.2cm}
\begin{tabular}{|c|c|} 
\hline
{\bfseries Principal bundles} & {\bfseries Generalized principal bundles} \\
\hline 
\hline
Smooth manifold $P$ & Fiber bundle $(P, M, \pi_{P, M}, F)$ \\ 
\hline
Lie group $G$ & Lie group fiber bundle $(\mathcal{G}, M, \pi_{\mathcal{G}, M}, G)$ \\ 
\hline
Right action $m \colon P \times G  \rightarrow P$ & Right fibered action $\Phi \colon P \times_M \mathcal{G} \rightarrow P$ \\
\hline
$m$ is free and proper & $\Phi$ is free and proper \\
\hline
$\Downarrow$ & $\Downarrow$ \\
$(P, S, \pi_{P, S}, G)$ is a fiber bundle, $S=P/G$ & $(P, \mathcal{S}, \pi_{P, \mathcal{S}}, G)$ is a fiber bundle, $\mathcal{S}=P/\mathcal{G}$ \\
\hline
\end{tabular}}
\end{align*}

\medskip

We conclude this section recalling that on a standard principal bundle $(P, M, \pi_{P, M}, G)$ one of the equivalent definitions of principal connection involves the assignment of a {\em connection form} which has a special behaviour on {\em fundamental fields}, i.e.\,\,particular vector fields defined as:
\begin{align*}
\xi^* \colon P \longrightarrow V_{\pi_{P, M}}(P) \colon p \longmapsto T_{e}(\mathcal{L}_p)(\xi)
\end{align*}
where $\xi$ belonging to $\mathfrak{g}$, the Lie algebra of $G$, is fixed, $V_{\pi_{P, M}}(P)$ is the vertical vector bundle on $P$ related to $\pi_{P, M}$, $\mathcal{L}_p$ is the map given by:
\begin{align*}
\mathcal{L}_p \colon G \longrightarrow P_{\pi_{P, M}(p)} \subseteq P \colon g \longmapsto m(p, g)
\end{align*}
for a fixed $p \in P$ and $m$ is the smooth right action of $G$ on $P$. It holds that $\mathcal{L}_p$ is a diffeomorphism.

\medskip

In \cite{CASTRILLON LOPEZ (2023)} an extension of this kind of vector fields to a generalized principal bundle $(P, \mathcal{S}, \pi_{P, \mathcal{S}}, G)$ obtained from a fiber bundle $(P, M, \pi_{P, M}, F)$ has been introduced. This was carried forward considering the maps:
\begin{align*}
L_{\gamma_0} \colon \mathcal{G}_x &\longrightarrow \mathcal{G}_x \colon \delta \longmapsto \mathcal{M}(\gamma_0, \delta)=\gamma_0 \cdot \delta, \quad R_{\gamma_0} \colon \mathcal{G}_x \longrightarrow \mathcal{G}_x \colon \delta \longmapsto \mathcal{M}(\delta, \gamma_0)=\delta \cdot \gamma_0
\\
\Phi_{p_0} \colon \mathcal{G}_x &\longrightarrow P_x \colon \gamma \longmapsto \Phi(p_0, \gamma), \quad \Phi_{\gamma_0} \colon P_x \longrightarrow P_x \colon p \longmapsto \Phi(p, \gamma_0)
\end{align*}
for $x \in M$, $p_0\in P_x$ and $\gamma_0 \in \mathcal{G}_x$ fixed, which are all well-posed from the definition of $\mathcal{M}$ and $\Phi$. Note that by Lemma \ref{lem 1.11} $(P_x, \mathcal{S}_x, \pi_{P_x, \mathcal{S}_x}, \mathcal{G}_x)$ is a principal bundle with action $\Phi_x$ and that then thanks to the standard principal bundle case we have that:
\begin{align*}
\Phi_{x, \, p_0}=\Phi_{p_0} \colon \mathcal{G}_x \longrightarrow {(P_{x})}_{\pi_{P_x, \mathcal{S}_x}(p_0)}={(P_{x})}_{[p_0]_{\mathcal{G}_x}} \subseteq P_x \colon \gamma \longmapsto \Phi(p_0, \gamma)
\end{align*}
is a diffeomorphism. Because $[p_0]_{\mathcal{G}} \equiv (x, [p_0]_{\mathcal{G}_x})$ we can write:
\begin{align*}
T_{e_x}(\Phi_{p_0}) \colon T_{e_x}(\mathcal{G}_x)=\mathfrak{a}_x \longrightarrow T_{p_0}(P_{[p_0]_{\mathcal{G}}}) \equiv V_{\pi_{P, \mathcal{S}}, \, p_0}(P)=\mathrm{Ker}\Big(T_{p_0}(\pi_{P, \mathcal{S}})\Big) \colon \xi \longmapsto T_{e_x}(\Phi_{p_0})(\xi)
\end{align*}
It is now possible to give the following definition:
\begin{defn}\label{defn 1.17}
Let $(P, \mathcal{S}, \pi_{P, \mathcal{S}}, G)$ be a generalized principal bundle. A {\em generalized fundamental field on} $P$ is a vector field of the form:
\begin{align*}
\xi^* \colon P_x \longrightarrow V_{\pi_{P, \mathcal{S}}}(P) \colon p \longmapsto \xi^*_p=T_{e_x}(\Phi_{p})(\xi) \in V_{\pi_{P, \mathcal{S}}, \, p}(P)
\end{align*}
for a fixed $x \in M$ and $\xi \in \mathfrak{a}_x$, the Lie algebra of $\mathcal{G}_x$.
\end{defn}

Note that $\xi^*_p \in V_{\pi_{P, \mathcal{S}}, \, p}(P)=\mathrm{Ker}\Big(T_{p}(\pi_{P, \mathcal{S}})\Big) \subseteq \mathrm{Ker}\Big(T_{p}(\pi_{P, M})\Big)=V_{\pi_{P, M}, \, p}(P)$, thanks to \eqref{7}.

\section{Local expressions for Lie group fiber bundle connections} \label{sec 2.1}

In order to extend the notion of principal connection to the generalized principal bundles $(P, \mathcal{S}, \pi_{P, \mathcal{S}}, G)$, Castrillón López and Rodríguez Abella needed to take a look at special kinds of (fiber bundle) connections on the underlying Lie group fiber bundle $(\mathcal{G}, M, \pi_{\mathcal{G}, M}, G)$, the {\em Lie group fiber bundle connections}.

\medskip

Recall that on a fiber bundle $(P, M, \pi_{P, M}, F)$ there exists a canonical way to define vertical tangent vectors at a point $p \in P$: one needs just to consider tangent vector to the fiber to which $p$ belongs. The same cannot be done for \lq\lq horizontal\rq\rq\,\,tangent vectors, which should be the \lq\lq liftings\rq\rq\,\,on $P$ of tangent vectors to $\pi_{P, M}(p)$ in the base space $M$, since there is no intrinsic way to compare tangent vectors at different points of $P$.

In order to overcome this problem we fix a {\em connection} on $(P, M, \pi_{P, M}, F)$, which really means we are fixing a choice for the definition of these horizontal tangent vectors for the fiber bundle.

\medskip

More precisely and working directly with our Lie group fiber bundle $(\mathcal{G}, M, \pi_{\mathcal{G}, M}, G)$, remember that (here we refer to \cite[Chapter 3]{FATIBENE (2003)}, to \cite[Chapter 1]{GIACHETTA (2009)} and to \cite[Chapter 2]{KOBAYASHI (1963)}) assigning a {\em connection} on $(\mathcal{G}, M, \pi_{\mathcal{G}, M}, G)$ corresponds to define for all points $\gamma \in \mathcal{G}$ a vector subspace $H_{\gamma}(\mathcal{G})$ of $T_{\gamma}(\mathcal{G})$ such that:
\begin{enumerate}
\item $\forall \, \gamma \in \mathcal{G}$, $H_{\gamma}(\mathcal{G}) \oplus V_{\gamma}(\mathcal{G})=T_{\gamma}(\mathcal{G})$, where $V_{\gamma}(\mathcal{G})=V_{\pi_{\mathcal{G}, M}, \, \gamma}(\mathcal{G})=\mathrm{Ker}\Big(T_{\gamma}(\pi_{\mathcal{G}, M})\Big)$.
\item $H(\mathcal{G})=\coprod_{\gamma \in \mathcal{G}} H_{\gamma}(\mathcal{G})$ is a sub-bundle of $T(\mathcal{G})$.
\end{enumerate}
As a consequence, it holds true that the map ${T_{\gamma}(\pi_{\mathcal{G}, M})}_{|H_{\gamma}(\mathcal{G})} \colon H_{\gamma}(\mathcal{G}) \rightarrow T_{x}(M)$ is an isomorphism of vector spaces, where $\pi_{\mathcal{G}, M}(\gamma)=x$. The inverse of this map, which is of the form $\eta_{\gamma} \colon T_{x}(M) \rightarrow H_{\gamma}(\mathcal{G})$ is called {\em horizontal lift} at $\gamma \in \mathcal{G}$. Actually, it holds also that an {\em equivalent} way to assign a connection on $(\mathcal{G}, M, \pi_{\mathcal{G}, M}, G)$ is to give for all $x \in M$, for all $\gamma \in \mathcal{G}$, such that $\pi_{\mathcal{G}, M}(\gamma)=x$, a linear map $\eta_{\gamma} \colon T_{x}(M) \rightarrow T_{\gamma}(\mathcal{G})$ with $T_{\gamma}(\pi_{\mathcal{G}, M}) \circ \eta_{\gamma}=id_{T_{x}(M)}$, depending smoothly on $\gamma$. In fact, one can then define $H_{\gamma}(\mathcal{G})=\eta_{\gamma}\Big(T_{x}(M)\Big)$.

Moreover, once one fixes a connection on $(\mathcal{G}, M, \pi_{\mathcal{G}, M}, G)$, we obtain two linear maps for all $\gamma \in \mathcal{G}$:
\begin{align*}
h_{\gamma} \colon T_{\gamma}(\mathcal{G}) \longrightarrow H_{\gamma}(\mathcal{G}) \colon v \longmapsto h_{\gamma}(v), \quad \nu_{\gamma} \colon T_{\gamma}(\mathcal{G}) \longrightarrow V_{\gamma}(\mathcal{G}) \colon v \longmapsto \nu_{\gamma}(v)
\end{align*}
the horizontal and vertical projections, respectively, obeying for all $v \in T_{\gamma}(\mathcal{G})$, $h_{\gamma}(v)+\nu_{\gamma}(v)=v$ and for all $\gamma \in \mathcal{G}$, ${\nu_{\gamma}|}_{V_{\gamma}(\mathcal{G})}=id_{V_{\gamma}(\mathcal{G})}$. Note that the relationship between $\eta_{\gamma}$ and $\nu_{\gamma}$ is given by:
\begin{align}\label{16}
{\nu_{\gamma}}=id_{T_{\gamma}(\mathcal{G})}-\Big(\eta_{\gamma} \circ T_{\gamma}(\pi_{\mathcal{G}, M})\Big)
\end{align}

The collection of the vertical projections $\nu_{\gamma}$ defines then a vector bundle morphism $\nu$ over the identity ${id}_\mathcal{G} \colon \mathcal{G} \rightarrow \mathcal{G}$:
\begin{equation*}
\xymatrixrowsep{0.56in}
\xymatrixcolsep{0.8in}
	\xymatrix{
	   {T(\mathcal{G})} \ar[r]^{\nu} \ar[d]_{\pi_{T(\mathcal{G}), \mathcal{G}}} & {V(\mathcal{G})} \ar[d]^{\pi_{V(\mathcal{G}), \mathcal{G}}} \\
	   {\mathcal{G}}  \ar[r]_{{id}_\mathcal{G}} & {\mathcal{G}}  \\}
\end{equation*}
such that ${\nu|}_{V(\mathcal{G})}=id_{V(\mathcal{G})}$. Again, an {\em equivalent} way to assign a connection on $(\mathcal{G}, M, \pi_{\mathcal{G}, M}, G)$ is to fix a vector bundle morphism $\nu$ having the properties above.

\medskip

\needspace{3\baselineskip}
Once recalled this framework, it is now possible to give the following definition (that we write down as stated in \cite{CASTRILLON LOPEZ (2023)}):
\begin{defn}\label{defn 2.1}
A {\em Lie group fiber bundle connection} on the Lie group fiber bundle $(\mathcal{G}, M, \pi_{\mathcal{G}, M}, G)$ is a connection $\nu \colon T(\mathcal{G}) \longrightarrow V(\mathcal{G})$ such that:
\begin{enumerate}
\item $\forall \, x \in M$, $\mathrm{Ker}(\nu_{e_x})=T_{x}(1)\Big(T_{x}(M)\Big)$, where we recall that $e_x$ is the identity element in $\mathcal{G}_{x}$ and $1$ is the unit section from Remark \ref{oss 1.7}, so that $T_{x}(1) \colon T_{x}(M) \longrightarrow T_{e_x}(\mathcal{G})$.
\item $\forall \, (\gamma, \delta) \in \mathcal{G} \times_M \mathcal{G}$ and $\forall \, (v, w) \in T_{(\gamma, \delta)}(\mathcal{G} \times_M \mathcal{G}) \equiv T_{\gamma}(\mathcal{G}) \times_{T_{x}(M)} T_{\delta}(\mathcal{G})$, that is $T_{\gamma}(\pi_{\mathcal{G}, M})(v)=T_{\delta}(\pi_{\mathcal{G}, M})(w)$, with $\pi_{\mathcal{G}, M}(\gamma)=\pi_{\mathcal{G}, M}(\delta)=x$:
\begin{align}\label{17}
\nu_{\gamma \cdot \delta}\Big(T_{(\gamma, \delta)}(\mathcal{M})(v, w)\Big)=T_{\gamma}(R_{\delta})\Big(\nu_{\gamma}(v)\Big)+T_{\delta}(L_{\gamma})\Big(\nu_{\delta}(w)\Big)
\end{align}
where $\mathcal{M} \colon \mathcal{G} \times_M \mathcal{G} \longrightarrow \mathcal{G}$ is the multiplication map as in Section \ref{sec 1.2} and $R_{\delta}$ and $L_{\gamma}$ are defined as in Section \ref{sec 1.3}.
\end{enumerate}
\end{defn}
It is then clear that a Lie group fiber bundle connection is essentially a connection $\nu \colon T(\mathcal{G}) \rightarrow V(\mathcal{G})$ endowed with a natural notion of compatibility with the algebraic structure of $\mathcal{G}$. 

We can here note that an equivalent notion to the one above has actually appeared independently before in \cite{BLAZQUEZ-SANZ (2022)} under the name of {\em group connection}.

\medskip

Our contribution to this topic is, firstly, to present a straightforward way (alternative to the one given in \cite{CASTRILLON LOPEZ (2023)}) to characterize Lie group fiber bundle connections in terms of horizontal lifts and, secondly and more crucially, to use the obtained characterization in order to answer the following question: what are the local conditions on a connection ensuring that it is a Lie group fiber bundle connection?

\medskip

Considering now the horizontal lifts $\eta_{\gamma} \colon T_{x}(M) \rightarrow H_{\gamma}(\mathcal{G})$ associated with a Lie group fiber bundle connection $\nu \colon T(\mathcal{G}) \rightarrow V(\mathcal{G})$, where $\gamma \in \mathcal{G}$ and $\pi_{\mathcal{G}, M}(\gamma)=x$, if we fix in \eqref{17}:
\begin{align*}
		\left\{
		\begin{array}{l}
			v=\eta_{\gamma}(a)
			\\
			w=\eta_{\delta}(a)
		\end{array}
		\right.	
\end{align*}
where $a \in T_{x}(M)$, for which it holds that:
\begin{align*}
T_{\gamma}(\pi_{\mathcal{G}, M})(v)=T_{\gamma}(\pi_{\mathcal{G}, M})\Big(\eta_{\gamma}(a)\Big)=id_{T_{x}(M)}(a)=T_{\delta}(\pi_{\mathcal{G}, M})\Big(\eta_{\delta}(a)\Big)=T_{\delta}(\pi_{\mathcal{G}, M})(w)
\end{align*}
we obtain:
\begin{align*}
\nu_{\gamma \cdot \delta}\Big(T_{(\gamma, \delta)}(\mathcal{M})\Big(\eta_{\gamma}(a), \eta_{\delta}(a)\Big)\Big)=T_{\gamma}(R_{\delta})\Big(\nu_{\gamma}\Big(\eta_{\gamma}(a)\Big)\Big)+T_{\delta}(L_{\gamma})\Big(\nu_{\delta}\Big(\eta_{\delta}(a)\Big)\Big)
\end{align*}
and then:
\begin{align*}
\nu_{\gamma \cdot \delta}\Big(T_{(\gamma, \delta)}(\mathcal{M})\Big(\eta_{\gamma}(a), \eta_{\delta}(a)\Big)\Big)=0
\end{align*}
since clearly $\nu_{\gamma} \circ \eta_{\gamma}=0$, for any $\gamma \in \mathcal{G}$. Now using \eqref{16} we obtain that:
\begin{align*}
T_{(\gamma, \delta)}(\mathcal{M})\Big(\eta_{\gamma}(a), \eta_{\delta}(a)\Big)=\Big(\eta_{\gamma \cdot \delta} \circ T_{\gamma \cdot \delta}(\pi_{\mathcal{G}, M})\Big)\Big(T_{(\gamma, \delta)}(\mathcal{M})\Big(\eta_{\gamma}(a), \eta_{\delta}(a)\Big)\Big)
\end{align*}
We can also observe that since $\pi_{\mathcal{G}, M} \circ \mathcal{M}=\pi_{\mathcal{G} \times_M \mathcal{G}, M}$, then:
\begin{align*}
T_{(\gamma, \delta)}(\mathcal{M})\Big(\eta_{\gamma}(a), \eta_{\delta}(a)\Big)&=\eta_{\gamma \cdot \delta}\Big(T_{(\gamma, \delta)}(\pi_{\mathcal{G} \times_M \mathcal{G}, M})\Big(\eta_{\gamma}(a), \eta_{\delta}(a)\Big)\Big)=\eta_{\gamma \cdot \delta}\Big(T_{\gamma}(\pi_{\mathcal{G}, M})\Big(\eta_{\gamma}(a)\Big)\Big)=\eta_{\gamma \cdot \delta}(a)
\end{align*}
and this is true for all $a \in T_{x}(M)$, hence:
\begin{align}\label{19}
\eta_{\gamma \cdot \delta}=T_{(\gamma, \delta)}(\mathcal{M})\Big(\eta_{\gamma}(\cdot), \eta_{\delta}(\cdot)\Big), \, \forall \, (\gamma, \delta) \in \mathcal{G} \times_M \mathcal{G}
\end{align}
Moreover, the condition $\mathrm{Ker}(\nu_{e_x})=T_{x}(1)\Big(T_{x}(M)\Big)$ implies that:
\begin{align*}
H_{e_x}(\mathcal{G})=T_{x}(1)\Big(T_{x}(M)\Big)
\end{align*}
and this means that for all $a \in T_{x}(M)$, $\eta_{e_x}(a)=T_{x}(1)(b)$, for some $b \in T_{x}(M)$, but since it must hold true that $T_{e_x}(\pi_{\mathcal{G}, M}) \circ \eta_{e_x}=id_{T_{x}(M)}$ we have:
\begin{align*}
a=\Big(T_{e_x}(\pi_{\mathcal{G}, M}) \circ \eta_{e_x}\Big)(a)=\Big(T_{e_x}(\pi_{\mathcal{G}, M}) \circ T_{x}(1)\Big)(b)=T_{x}\Big(\pi_{\mathcal{G}, M} \circ 1\Big)(b)=T_{x}(id_{M})(b)=b
\end{align*}
and since $1$ is the unit section. Then the following holds:
\begin{align}\label{20}
\eta_{e_x}=T_{x}(1), \, \forall \, x \in M
\end{align}

\medskip

Actually conditions \eqref{19} and \eqref{20} for the horizontal lifts characterize Lie group fiber bundle connections since they are {\em equivalent} to the request 1 and 2 in Definition \ref{defn 2.1}. In fact, if \eqref{20} holds then $H_{e_x}(\mathcal{G})=\eta_{e_x}\Big(T_{x}(M)\Big)=T_{x}(1)\Big(T_{x}(M)\Big)$, that is $\mathrm{Ker}(\nu_{e_x})=T_{x}(1)\Big(T_{x}(M)\Big)$, which is request 1. Moreover, if \eqref{19} holds, then for some fixed $(v, w) \in T_{\gamma}(\mathcal{G}) \times_{T_{x}(M)} T_{\delta}(\mathcal{G})$, with $T_{\gamma}(\pi_{\mathcal{G}, M})(v)=T_{\delta}(\pi_{\mathcal{G}, M})(w)=a \in T_{x}(M)$, we have:
\begin{align*}
T_{(\gamma, \delta)}(\mathcal{M})\Big(\eta_{\gamma}(a), w\Big)-\eta_{\gamma \cdot \delta}(a)=T_{(\gamma, \delta)}(\mathcal{M})\Big(0, w-\eta_{\delta}(a)\Big)=T_{(\gamma, \delta)}(\mathcal{M})\Big(0, \nu_{\delta}(w)\Big)
\end{align*}
where the last step is due to the fact that $T_{\delta}(\pi_{\mathcal{G}, M})\Big(w-\eta_{\delta}(a)\Big)=T_{\delta}(\pi_{\mathcal{G}, M})(w)-T_{\delta}(\pi_{\mathcal{G}, M})\Big(\eta_{\delta}(a)\Big)=a-a=0$ and hence $\nu_{\delta}(w)=w-\eta_{\delta}(a)$.

Considering now a smooth curve $\Big(\alpha_1(t), \alpha_2(t)\Big) \colon I \subseteq \mathbb{R} \rightarrow \mathcal{G} \times_M \mathcal{G}$ with $\Big(\alpha_1(0), \alpha_2(0)\Big)=(\gamma, \delta)$ and $\Big(\alpha'_1(0), \alpha'_2(0)\Big)=\Big(0, \nu_{\delta}(w)\Big)$, so that we can directly fix $\alpha_1(t)=\gamma$, for each $t \in I$, we then have:
\begin{align*}
T_{(\gamma, \delta)}(\mathcal{M})\Big(0, \nu_{\delta}(w)\Big)&=\left.\frac{d}{dt}\right|_{t=0}\Big(\mathcal{M} \circ \Big(\alpha_1(t), \alpha_2(t)\Big)\Big)=\left.\frac{d}{dt}\right|_{t=0}L_{\gamma}\Big(\alpha_2(t)\Big)=T_{\delta}(L_{\gamma})\Big(\nu_{\delta}(w)\Big)
\end{align*}
This implies that:
\begin{align*}
T_{(\gamma, \delta)}(\mathcal{M})\Big(\eta_{\gamma}(a), w\Big)&=\eta_{\gamma \cdot \delta}(a)+T_{\delta}(L_{\gamma})\Big(\nu_{\delta}(w)\Big)
\end{align*}
Since, with similar steps as above, we also have that $\eta_{\gamma}(a)+\nu_{\gamma}(v)=v$ and $T_{(\gamma, \delta)}(\mathcal{M})\Big(\nu_{\gamma}(v), 0\Big)=T_{\gamma}(R_{\delta})\Big(\nu_{\gamma}(v)\Big)$, we can then write:
\begin{align*}
T_{(\gamma, \delta)}(\mathcal{M})\Big(\eta_{\gamma}(a)+\nu_{\gamma}(v), w\Big)&=T_{(\gamma, \delta)}(\mathcal{M})\Big(\eta_{\gamma}(a), w\Big)+T_{(\gamma, \delta)}(\mathcal{M})\Big(\nu_{\gamma}(v), 0\Big)
\\
&=\eta_{\gamma \cdot \delta}(a)+T_{\gamma}(R_{\delta})\Big(\nu_{\gamma}(v)\Big)+T_{\delta}(L_{\gamma})\Big(\nu_{\delta}(w)\Big)
\end{align*}
and also:
\begin{align*}
\nu_{\gamma \cdot \delta}\Big(T_{(\gamma, \delta)}(\mathcal{M})(v, w)\Big)=T_{\gamma}(R_{\delta})\Big(\nu_{\gamma}(v)\Big)+T_{\delta}(L_{\gamma})\Big(\nu_{\delta}(w)\Big)
\end{align*}
This last step is due to the fact that $\nu_{\gamma \cdot \delta} \circ \eta_{\gamma \cdot \delta}=0$ and that $R_{\delta}$ and $L_{\gamma}$ are maps along the fiber $\mathcal{G}_x$ (which means that $T_{\gamma}(R_{\delta})\Big(\nu_{\gamma}(v)\Big)+T_{\delta}(L_{\gamma})\Big(\nu_{\delta}(w)\Big)$ is already a vertical tangent vector). Therefore we have found \eqref{17}, which is request 2 in Definition \ref{defn 2.1}.

\medskip

\needspace{3\baselineskip}
Summing up, we have proved the following statement: 
\begin{prop}[Horizontal lift characterization of Lie group fiber bundle connections]\label{prop 6.3}
A Lie group fiber bundle connection on the Lie group fiber bundle $(\mathcal{G}, M, \pi_{\mathcal{G}, M}, G)$ is a connection $\eta$, given by the collection of the horizontal lifts $\eta_{\gamma} \colon T_{x}(M) \longrightarrow T_{\gamma}(\mathcal{G})$, $\forall \, x \in M$, $\forall \, \gamma \in \mathcal{G}$, with $\pi_{\mathcal{G}, M}(\gamma)=x$, such that:
\begin{enumerate}
\item $\forall \, x \in M$, $\eta_{e_x}=T_{x}(1)$, where $1$ is the unit section.
\item $\forall \, (\gamma, \delta) \in \mathcal{G} \times_M \mathcal{G}$, $\eta_{\gamma \cdot \delta}=T_{(\gamma, \delta)}(\mathcal{M})\Big(\eta_{\gamma}(\cdot), \eta_{\delta}(\cdot)\Big)$, where $\mathcal{M}$ is the multiplication map.
\end{enumerate}
\end{prop}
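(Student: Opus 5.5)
The plan is to prove the two implications separately, passing back and forth between the vertical projection $\nu$ and the horizontal lifts $\eta_\gamma$ via \eqref{16}. Throughout I use the standard identities $\nu_\gamma\circ\eta_\gamma=0$, $T_\gamma(\pi_{\mathcal G,M})\circ\eta_\gamma=id_{T_x(M)}$ and $\pi_{\mathcal G,M}\circ\mathcal M=\pi_{\mathcal G\times_M\mathcal G,M}$.

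\emph{Direct implication.} Assume $\nu$ is a Lie group fiber bundle connection in the sense of Definition \ref{defn 2.1}.
\begin{enumerate}
\item To get condition 2, fix $a\in T_x(M)$ and insert $v=\eta_\gamma(a)$, $w=\eta_\delta(a)$ into \eqref{17}. This pair is admissible because both vectors project to $a$. The right-hand side of \eqref{17} then vanishes, so $T_{(\gamma,\delta)}(\mathcal M)(\eta_\gamma(a),\eta_\delta(a))$ is horizontal at $\gamma\cdot\delta$.
\item By \eqref{16}, this vector equals $\eta_{\gamma\cdot\delta}$ applied to its own projection. That projection is $a$, since $\pi_{\mathcal G,M}\circ\mathcal M$ is the projection of the fibered product. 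This gives \eqref{19}.
\item To get condition 1, note that request 1 says $H_{e_x}(\mathcal G)=T_x(1)(T_x(M))$. So $\eta_{e_x}(a)=T_x(1)(b)$ for some $b$. Applying $T_{e_x}(\pi_{\mathcal G,M})$ and using $\pi_{\mathcal G,M}\circ 1=id_M$ forces $b=a$, which is \eqref{20}.
\end{enumerate}

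\emph{Converse implication.} Assume \eqref{19} and \eqref{20}.
\begin{enumerate}
\item Request 1 is immediate: $\mathrm{Ker}(\nu_{e_x})=H_{e_x}(\mathcal G)=\eta_{e_x}(T_x(M))=T_x(1)(T_x(M))$.
\item For request 2, take $(v,w)$ with common projection $a$ and decompose $v=\eta_\gamma(a)+\nu_\gamma(v)$, $w=\eta_\delta(a)+\nu_\delta(w)$. The differential $T_{(\gamma,\delta)}(\mathcal M)$ is linear on $T_\gamma(\mathcal G)\times_{T_x(M)}T_\delta(\mathcal G)$, and each of $(\eta_\gamma(a),\eta_\delta(a))$, $(\nu_\gamma(v),0)$, $(0,\nu_\delta(w))$ lies in that space. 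Hence
\begin{align*}
T_{(\gamma,\delta)}(\mathcal M)(v,w)={}&T_{(\gamma,\delta)}(\mathcal M)\Big(\eta_\gamma(a),\eta_\delta(a)\Big)
\\
&+T_{(\gamma,\delta)}(\mathcal M)\Big(\nu_\gamma(v),0\Big)+T_{(\gamma,\delta)}(\mathcal M)\Big(0,\nu_\delta(w)\Big)
\end{align*}
\item The first term is $\eta_{\gamma\cdot\delta}(a)$ by \eqref{19}.
\item For the second and third terms, I realize the tangent vectors by curves in $\mathcal G\times_M\mathcal G$ with one component held constant. This is possible because the moving component stays in the fiber $\mathcal G_x$. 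The terms then become $T_\gamma(R_\delta)(\nu_\gamma(v))$ and $T_\delta(L_\gamma)(\nu_\delta(w))$, both vertical.
\item Applying $\nu_{\gamma\cdot\delta}$ kills the horizontal term and fixes the vertical ones, which yields \eqref{17}.
\end{enumerate}

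The main obstacle is step 4 of the converse. It requires checking carefully that the partial differentials of $\mathcal M$ along a fiber coincide with $T(R_\delta)$ and $T(L_\gamma)$. This rests on the fact that $\mathcal G_x\times\{\delta\}$ and $\{\gamma\}\times\mathcal G_x$ are submanifolds of $\mathcal G\times_M\mathcal G$, and that $\mathcal M$ restricted to them is $R_\delta$, respectively $L_\gamma$. In local fibered coordinates this is clear from $\mathcal M(x,g,h)=(x,\pi(g,h))$.
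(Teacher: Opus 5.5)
Your proposal is correct and follows essentially the same route as the paper. In the forward direction you insert $(\eta_\gamma(a),\eta_\delta(a))$ into \eqref{17} and combine \eqref{16} with $\pi_{\mathcal{G},M}\circ\mathcal{M}=\pi_{\mathcal{G}\times_M\mathcal{G},M}$; in the converse you split $(v,w)$ into horizontal and vertical parts, identify the vertical contributions with $T_\gamma(R_\delta)$ and $T_\delta(L_\gamma)$ via curves with one frozen component, and then apply $\nu_{\gamma\cdot\delta}$. The only cosmetic difference is that you split $(v,w)$ into three pieces at once, whereas the paper peels off the $w$-part and then the $v$-part in two successive steps.
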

Further geometric interpretations of the notion of Lie group fiber bundle connection can be found already in \cite{CASTRILLON LOPEZ (2023)}.

\bigskip

Working subsequently in fibered coordinates $(x^\mu, g^I)$ satisfying \eqref{1} on the Lie group fiber bundle $(\mathcal{G}, M, \pi_{\mathcal{G}, M}, G)$, we know from standard fiber bundle theory that a connection on $(\mathcal{G}, M, \pi_{\mathcal{G}, M}, G)$ given in terms of the smooth assignment of horizontal lifts $\eta_{\gamma}$, $\gamma \in \mathcal{G}$, can be locally written as:
\begin{align*}
\eta=dx^\mu \otimes \Big(\partial_{\mu}-\eta^I_\mu(x, g)\partial_{I}\Big)
\end{align*}
meaning that if $\pi_{\mathcal{G}, M}(\gamma)=x$, $a=a^\mu\partial_{\mu} \in T_{x}(M)$ and in local coordinates $\gamma$ is given by $(x^\mu, g^I)$, then:
\begin{align}\label{22}
\eta_{\gamma}(a^\mu\partial_{\mu})=a^\mu\Big(\partial_{\mu}-\eta^I_\mu(x, g)\partial_{I}\Big) \in H_{\gamma}(\mathcal{G})
\end{align}
where we set $\partial_{\mu}=\frac{\partial}{\partial x^\mu}$, $\partial_{I}=\frac{\partial}{\partial g^I}$, and $\eta^I_\mu(x, g)$ are the {\em connection coefficients}.

These coefficients define the connection and, if we are considering a Lie group fiber bundle connection, are locally free once also applied the behaviour with respect to conditions \eqref{19} and \eqref{20}, with the exception of the non-local restrictions given by the transformation laws under change of charts.

For what concerns \eqref{20}, remembering that in fibered coordinates $1 \colon x^\mu \mapsto (x^\mu, e^I)$, it is clear that:
\begin{align*}
T_{x}(1) \colon T_{x}(M) \longrightarrow T_{e_x}(\mathcal{G}) \colon a^\mu\partial_{\mu} \longmapsto a^\mu\partial_{\mu}
\end{align*}
Then working on the coordinate basis of $T_{x}(M)$ gives:
\begin{align*}
\partial_{\mu}-\eta^I_\mu(x, e)\partial_{I}=\eta_{e_x}(\partial_{\mu})=T_{x}(1)(\partial_{\mu})=\partial_{\mu} \, \Longleftrightarrow \, \eta^I_\mu(x, e)=0
\end{align*}

For what concerns \eqref{19}, we note that in fibered coordinates the multiplication map $\mathcal{M}$ takes the form:
\begin{align*}
\mathcal{M} \colon \mathcal{G} \times_M \mathcal{G} \longrightarrow \mathcal{G} \colon (x^\mu, g^I, h^J) \longmapsto \Big(x^\mu, \pi^I(g, h)\Big)
\end{align*}
where $\pi \colon G \times G \rightarrow G$ is the multiplication in the Lie group $G$. This holds true since, once taken a local trivialization $\psi \colon (\pi_{\mathcal{G}, M})^{-1}(U) \rightarrow U \times G$ for $(\mathcal{G}, M, \pi_{\mathcal{G}, M}, G)$ around $\gamma, \delta \in \mathcal{G}_{x}$, we have that:
\begin{enumerate}
\item If the local coordinates of $\gamma$ and $\delta$ through this trivialization are respectively $(x^\mu, g^I)$ and $(x^\mu, h^J)$, then the local coordinates of $(\gamma, \delta) \in \mathcal{G} \times_M \mathcal{G}$ are $(x^\mu, g^I, h^J)$, because:
\begin{align*}
\hat{\psi} \colon (\pi_{\mathcal{G} \times_M \mathcal{G}, M})^{-1}(U) \longrightarrow U \times G \times G \colon (\gamma, \delta) \longmapsto (x, g, h)
\end{align*}
is the local trivialization of $(\mathcal{G} \times_M \mathcal{G}, M, \pi_{\mathcal{G} \times_M \mathcal{G}, M}, G \times G)$ associated with $\psi$, where $\psi(\gamma)=(x, g)$ and $\psi(\delta)=(x, h)$.
\item It holds that $\psi(\gamma \cdot \delta)=\psi|_{\mathcal{G}_{x}}(\gamma \cdot \delta)=\pi\Big(\psi|_{\mathcal{G}_{x}}(\gamma), \psi|_{\mathcal{G}_{x}}(\delta)\Big)=\pi\Big(\psi(\gamma), \psi(\delta)\Big) \equiv \Big(x, \pi(g, h)\Big)$ by Definition \ref{defn 1.5}.
\end{enumerate}
From this we obtain that:
\begin{align*}
T_{(\gamma, \delta)}(\mathcal{M}) \colon T_{\gamma}(\mathcal{G}) \times_{T_{x}(M)} T_{\delta}(\mathcal{G}) \equiv T_{(\gamma, \delta)}(\mathcal{G} \times_M \mathcal{G})  & \longrightarrow T_{\gamma \cdot \delta}(\mathcal{G})
\\
(a^\mu\partial_{\mu}+b^I\partial_{I}, a^\mu\partial_{\mu}+c^I\partial_{I}) \equiv a^\mu\partial_{\mu}+b^I\partial_{I}+c^I\hat{\partial}_{I} & \longmapsto a^\mu\partial_{\mu}+\Big(b^A\partial^1_A\pi^I(g, h)+c^A\partial^2_A\pi^I(g, h)\Big)\partial_{I}
\end{align*}
where $\partial_{I}=\frac{\partial}{\partial g^I}$, $\hat{\partial}_{I}=\frac{\partial}{\partial h^I}$ and $\partial^1_A\pi^I(g, h)$, $\partial^2_A\pi^I(g, h)$ are the partial derivatives with respect to $g^A$ and $h^A$, respectively.

By combining this result with \eqref{19} and \eqref{22} on the coordinate basis of $T_{x}(M)$ we have:
\begin{align*}
&& \eta_{\gamma \cdot \delta}(\partial_{\mu})&=T_{(\gamma, \delta)}(\mathcal{M})\Big(\eta_{\gamma}(\partial_{\mu}), \eta_{\delta}(\partial_{\mu})\Big)
\\
\, \Longleftrightarrow \, &&\partial_{\mu}-\eta^I_\mu\Big(x, \pi(g, h)\Big)\partial_{I}&=T_{(\gamma, \delta)}(\mathcal{M})\Big(\partial_{\mu}-\eta^I_\mu(x, g)\partial_{I}, \partial_{\mu}-\eta^I_\mu(x, h)\partial_{I}
\Big) 
\\
&& &=T_{(\gamma, \delta)}(\mathcal{M})\Big(\partial_{\mu}-\eta^I_\mu(x, g)\partial_{I}-\eta^I_\mu(x, h)\hat{\partial}_{I}
\Big) 
\\
&& &=\partial_{\mu}+\Big(-\eta^A_\mu(x, g)\partial^1_A\pi^I(g, h)-\eta^A_\mu(x, h)\partial^2_A\pi^I(g, h)\Big)\partial_{I}
\\
\, \Longleftrightarrow \, &&\eta^I_\mu\Big(x, \pi(g, h)\Big)&=\eta^A_\mu(x, g)\partial^1_A\pi^I(g, h)+\eta^A_\mu(x, h)\partial^2_A\pi^I(g, h)
\end{align*}

We can summarize the result of the computations above with the following statement:
\begin{thm}[Local coordinate characterization of Lie group fiber bundle connections]\label{thm 6.4}
The connection coefficients $\eta^I_\mu(x, g)$ of a Lie group fiber bundle connection, that we shall call {\em Lie group fiber bundle connection coefficients}, are characterized, in any fibered coordinates $(x^\mu, g^I)$ satisfying \eqref{1}, by the conditions:
\begin{enumerate}
\item $\eta^I_\mu(x, e)=0$, where $e$ is the identity element of the Lie group $G$.
\item $\forall \, g, h \in G$, $\eta^I_\mu\Big(x, \pi(g, h)\Big)=\eta^A_\mu(x, g)\partial^1_A\pi^I(g, h)+\eta^A_\mu(x, h)\partial^2_A\pi^I(g, h)$, where $\pi^I(g, h)$ are the local expressions of the product in the Lie group $G$.
\end{enumerate}
\end{thm}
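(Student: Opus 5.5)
The plan is to reduce the statement to Proposition \ref{prop 6.3}, which characterizes Lie group fiber bundle connections by the two horizontal lift conditions \eqref{20} and \eqref{19}, and then translate each condition into fibered coordinates $(x^\mu, g^I)$ satisfying \eqref{1}. Since a connection is fully determined by its coefficients $\eta^I_\mu(x,g)$ through \eqref{22}, and both conditions of Proposition \ref{prop 6.3} are linear in $a\in T_x(M)$, it suffices to test them on the coordinate basis $\partial_\mu$. Each local condition will be shown equivalent, not just implied, so that the characterization goes both ways.

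For condition 1, I use Remark \ref{oss 1.7}: in fibered coordinates the unit section reads $1\colon x^\mu\mapsto(x^\mu,e^I)$, so $T_x(1)(\partial_\mu)=\partial_\mu$. Comparing with $\eta_{e_x}(\partial_\mu)=\partial_\mu-\eta^I_\mu(x,e)\partial_I$, the vertical components must vanish, and this gives $\eta^I_\mu(x,e)=0$, and conversely.

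For condition 2, I first write the multiplication map locally. The trivialization of $\mathcal{G}\times_M\mathcal{G}$ induced by a trivialization $\psi$ of $\mathcal{G}$ gives coordinates $(x^\mu,g^I,h^J)$. Because $\psi|_{\mathcal{G}_x}$ is a group isomorphism (Definition \ref{defn 1.5}), $\mathcal{M}$ takes the form $(x,g,h)\mapsto(x,\pi(g,h))$. Its tangent map sends $a^\mu\partial_\mu+b^I\partial_I+c^I\hat\partial_I$ to $a^\mu\partial_\mu+(b^A\partial^1_A\pi^I+c^A\partial^2_A\pi^I)\partial_I$. Inserting $\eta_\gamma(\partial_\mu)$ and $\eta_\delta(\partial_\mu)$, which have the same base component $\partial_\mu$ and so define a genuine tangent vector to the fibered product, and comparing with $\eta_{\gamma\cdot\delta}(\partial_\mu)$, yields the identity $\eta^I_\mu(x,\pi(g,h))=\eta^A_\mu(x,g)\partial^1_A\pi^I(g,h)+\eta^A_\mu(x,h)\partial^2_A\pi^I(g,h)$, again as an equivalence.

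The main point needing care is the phrase ``in any fibered coordinates satisfying \eqref{1}''. The derivation works in any trivialization that is fiberwise a group isomorphism, and every chart compatible with \eqref{1} is of this kind. So the conditions hold in one such chart if and only if they hold in all of them. This follows because they are equivalent to the intrinsic conditions of Proposition \ref{prop 6.3}. A direct check of invariance under \eqref{2}--\eqref{3}, together with the transformation law of $\eta^I_\mu$, is not needed for the proof. I would only remark that it can be verified separately (as the paper does in Appendix \ref{app A}).
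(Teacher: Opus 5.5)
Your proposal is correct and follows essentially the same route as the paper. You reduce to the horizontal-lift characterization of Proposition~\ref{prop 6.3}, read off $\eta^I_\mu(x,e)=0$ from $1\colon x^\mu\mapsto(x^\mu,e^I)$, write $\mathcal{M}$ locally as $(x,g,h)\mapsto(x,\pi(g,h))$ using that trivializations are fiberwise group isomorphisms, and compare both sides of \eqref{19} on the basis $\partial_\mu$ as an equivalence. Your remark that chart-independence follows from the intrinsic nature of the conditions, with Appendix~\ref{app A} serving only as an explicit check, matches Remark~\ref{oss 2.3}.
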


\begin{oss}\label{oss 2.3}
The fact that the conditions above are well-posed is a direct consequence of the global nature of request \eqref{19} and \eqref{20}. It is, nevertheless, very instructive to explicitly check the invariance of these local conditions under change of coordinates \eqref{1}, see Appendix \ref{app A}.
\end{oss}

\begin{ex}\label{ex 2.2.2}
Since we have seen in Example \ref{ex 1.2.3} that vector bundles are examples of Lie group fiber bundles, we can ask ourselves: what is a Lie group fiber bundle connection on a vector bundle $(E, M, \pi_{E, M}, \mathbb{R}^l)$? 

On this fiber bundle we have fibered coordinates $(x^\mu, v^I)$ satisfying:
\begin{align} \label{2.9}
		\left\{
		\begin{array}{l}
			x'^\mu=x'^\mu(x)
			\\
			v'^I=G^I_J(x)v^J
		\end{array}
		\right.	
\end{align}
where $G^I_J(x) \in \mathrm{GL}(l, \mathbb{R})$, since the transition functions take values in $\mathrm{GL}(l, \mathbb{R})$ (compare with \cite[Chapter 1, Section 3]{FATIBENE (2003)}). Note that conditions \eqref{2} and \eqref{3} are here fulfilled, since the typical fiber is the Lie group $(\mathbb{R}^l, +)$ with identity element $0$ and thus the local expressions of the product $\pi$ are:
\begin{align*}
\pi^I(v, w)=v^I+w^I
\end{align*}
In particular then $\partial^1_A\pi^I(v, w)=\frac{\partial}{\partial v^A}\pi^I(v, w)=\frac{\partial}{\partial v^A}[v^I+w^I]=\delta^I_A$ and $\partial^2_A\pi^I(v, w)=\frac{\partial}{\partial w^A}\pi^I(v, w)=\frac{\partial}{\partial w^A}[v^I+w^I]=\delta^I_A$.

From the discussion above and working in fibered coordinates $(x^\mu, v^I)$, a Lie group fiber bundle connection on $(E, M, \pi_{E, M}, \mathbb{R}^l)$ can be locally written as:
\begin{align*}
\eta=dx^\mu \otimes \Big(\partial_{\mu}-\eta^I_\mu(x, v)\partial_{I}\Big)
\end{align*}
where we set $\partial_{\mu}=\frac{\partial}{\partial x^\mu}$ and $\partial_{I}=\frac{\partial}{\partial v^I}$, with the conditions:
\begin{align*}
\eta^I_\mu(x, 0)&=0
\end{align*}
and:
\begin{align*}
&&\eta^I_\mu\Big(x, \pi(v, w)\Big)&=\eta^A_\mu(x, v)\partial^1_A\pi^I(v, w)+\eta^A_\mu(x, w)\partial^2_A\pi^I(v, w)
\\
\, \Longleftrightarrow \, &&\eta^I_\mu(x, v+w)&=\eta^A_\mu(x, v)\delta^I_A+\eta^A_\mu(x, w)\delta^I_A=\eta^I_\mu(x, v)+\eta^I_\mu(x, w)
\end{align*}
Taking into account the smoothness of $\eta$, this implies that $\eta^I_\mu(x, v)=\eta^I_{J\mu}(x)v^J$ or:
\begin{align*}
\eta=dx^\mu \otimes \Big(\partial_{\mu}-\eta^I_{J\mu}(x)v^J\partial_{I}\Big)
\end{align*}
for some real-valued smooth local functions $\eta^I_{J\mu}(x)$.

The conclusion is that the Lie group fiber bundle connections on the vector bundle $(E, M, \pi_{E, M}, \mathbb{R}^l)$ are exactly the {\em linear connections} on it, as observed without using local coordinates also in \cite{CASTRILLON LOPEZ (2023)}.
\end{ex}

\medskip

We have proved in Proposition \ref{prop 1.3.4} that any Lie group fiber bundle $(\mathcal{G}, M, \pi_{\mathcal{G}, M}, G)$ (with connected typical fiber $G$) is an associated bundle to a certain standard principal bundle. It would be then finally natural to study if:
\begin{conj}\label{conj 2.2.3}
Any Lie group fiber bundle connection $\eta$ on a Lie group fiber bundle $(\mathcal{G}, M, \pi_{\mathcal{G}, M}, G)$ (with connected typical fiber $G$) can be seen as an associated connection to a principal connection on the principal bundle $(\hat{P}, M, \pi_{\hat{P}, M}, \mathrm{Aut}(G))$ constructed in the proof of Proposition \ref{prop 1.3.4}.
\end{conj}
For the general theory of associated connections see for example \cite[Chapter 3, Section 6]{FATIBENE (2003)} and \cite[Chapter 3, Section 11]{KOLAR (1993)}. To our knowledge, this particular problem has been addressed only in \cite{BLAZQUEZ-SANZ (2022)}, through Lie groupoid arguments, with a positive answer. 

It remains to be seen if the starting hypotheses inside \cite{BLAZQUEZ-SANZ (2022)} and the present work are exactly the same, or if there is some relevant difference. Nevertheless, even though we will not give deeper insights here, it can be expected that results regarding connections on fiber bundles with $G$-structures (since we have seen that $(\mathcal{G}, M, \pi_{\mathcal{G}, M}, G)$ has an $\mathrm{Aut}(G)$-structure) and other classical theorems on the recognition of associated connections (see for example again \cite{KOLAR (1993)}) could settle the matter without bringing up Lie groupoid theory. We suspect Conjecture \ref{conj 2.2.3} to be true under our hypotheses since:
\begin{itemize}
\item From Example \ref{ex 2.2.2} we know in particular that a Lie group fiber bundle connection on a tangent bundle $(T(M), M, \pi_{T(M), M}, \mathbb{R}^m)$, which has connected typical fiber, is a linear connection.
\item Linear connections on $(T(M), M, \pi_{T(M), M}, \mathbb{R}^m)$ are associated connections, once we see the tangent bundle as an associated bundle to the {\em principal frame bundle} $(L(M), M, \pi_{L(M), M}, \mathrm{GL}(m, \mathbb{R}))$ in the standard way (see on this topic for example \cite[Chapter 1, Section 5]{FATIBENE (2003)}).
\item It can be easily checked that for the tangent bundle $(T(M), M, \pi_{T(M), M}, \mathbb{R}^m)$ the principal bundle $(\hat{P}, M, \pi_{\hat{P}, M}, \mathrm{Aut}(G))$ constructed in the proof of Proposition \ref{prop 1.3.4} is the frame bundle $(L(M), M, \pi_{L(M), M}, \mathrm{GL}(m, \mathbb{R}))$ itself.
\end{itemize}

\section{Local expressions for generalized principal connections} \label{sec 2.3}

So far it has been given a new perspective to the necessary background in order to treat {\em generalized principal connections}, which will be now studied following the same line of approach. 

As above, we are considering a generalized principal bundle $(P, \mathcal{S}, \pi_{P, \mathcal{S}}, G)$ obtained from a free and proper right fibered action $\Phi \colon P \times_M \mathcal{G} \rightarrow P$ of the Lie group fiber bundle $(\mathcal{G}, M, \pi_{\mathcal{G}, M}, G)$ on the fiber bundle $(P, M, \pi_{P, M}, F)$.

\medskip

Similarly to what done in Section \ref{sec 2.1} for the Lie group fiber bundle $(\mathcal{G}, M, \pi_{\mathcal{G}, M}, G)$, a connection on $(P, \mathcal{S}, \pi_{P, \mathcal{S}}, G)$ (taking into account just its fiber bundle structure) can be seen as a vector bundle morphism $\omega$ over the identity ${id}_P \colon P \rightarrow P$:
\begin{equation*}
\xymatrixrowsep{0.56in}
\xymatrixcolsep{0.8in}
	\xymatrix{
	   {T(P)} \ar[r]^{\omega} \ar[d]_{\pi_{T(P), P}} & {V_{\pi_{P, \mathcal{S}}}(P)} \ar[d]^{\pi_{V_{\pi_{P, \mathcal{S}}}(P), P}} \\
	   {P}  \ar[r]_{{id}_P} & {P}  \\}
\end{equation*}
such that ${\omega|}_{V_{\pi_{P, \mathcal{S}}}(P)}=id_{V_{\pi_{P, \mathcal{S}}}(P)}$.

\medskip

As already announced, generalized principal connections on $(P, \mathcal{S}, \pi_{P, \mathcal{S}}, G)$ are established on the previous choice of a ({\em a priori} not necessarily Lie group fiber bundle) connection on the Lie group fiber bundle $(\mathcal{G}, M, \pi_{\mathcal{G}, M}, G)$ and this happens in the following particular manner (that we write down as stated in \cite{CASTRILLON LOPEZ (2023)}):
\begin{defn}\label{defn 2.6}
Let $\nu \colon T(\mathcal{G}) \longrightarrow V(\mathcal{G})$ be a fixed connection on the relevant Lie group fiber bundle $(\mathcal{G}, M, \pi_{\mathcal{G}, M}, G)$. A {\em generalized principal connection} on the generalized principal bundle $(P, \mathcal{S}, \pi_{P, \mathcal{S}}, G)$ associated to $\nu$ is a connection $\omega \colon T(P) \longrightarrow V_{\pi_{P, \mathcal{S}}}(P)$ such that:
\begin{align}\label{26}
\omega_{\Phi(p, \gamma)}\Big(T_{(p, \gamma)}(\Phi)(z, v)\Big)=T_{p}(\Phi _{\gamma})\Big(\omega_p(z)\Big)+\Big(T_{\gamma}(L_{\gamma^{-1}})\Big(\nu_{\gamma}(v)\Big)\Big)^*_{\Phi(p, \gamma)}
\end{align}
$\forall \, (p, \gamma) \in P \times_M \mathcal{G}$ and  $\forall \, (z, v) \in T_{(p, \gamma)}(P \times_M \mathcal{G}) \equiv T_{p}(P) \times_{T_{x}(M)} T_{\gamma}(\mathcal{G})$, that is $T_{p}(\pi_{P, M})(z)=T_{\gamma}(\pi_{\mathcal{G}, M})(v)$, with $\pi_{P, M}(p)=\pi_{\mathcal{G}, M}(\gamma)=x$, where $\Phi _{\gamma}$ and $L_{\gamma^{-1}}$ are defined as in Section \ref{sec 1.3}.
\end{defn}

Furthermore, it is important to observe that an existence theorem for generalized principal connections holds true:
\begin{thm}\label{thm 2.8}
If $M$ is a paracompact manifold, then there exist a connection $\nu \colon T(\mathcal{G}) \longrightarrow V(\mathcal{G})$ on the Lie group fiber bundle $(\mathcal{G}, M, \pi_{\mathcal{G}, M}, G)$ and a generalized principal connection $\omega \colon T(P) \longrightarrow V_{\pi_{P, \mathcal{S}}}(P)$,  associated to $\nu$, on the generalized principal bundle $(P, \mathcal{S}, \pi_{P, \mathcal{S}}, G)$.
\end{thm}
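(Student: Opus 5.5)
The plan is a local construction followed by gluing with a partition of unity, after a preliminary check that condition \eqref{26} is preserved under suitable convex combinations. The key observation is that, for fixed $\nu$, the set of connections $\omega$ satisfying \eqref{26} is an affine space. Indeed, if $\omega_1,\omega_2$ satisfy \eqref{26} for the same $\nu$, take smooth functions $\rho_1,\rho_2$ on $P$ with $\rho_1+\rho_2=1$. Suppose in addition that $\rho_i$ are constant along the fibers of $\pi_{P,M}$, i.e.\ pulled back from $M$, so that $\rho_i(\Phi(p,\gamma))=\rho_i(p)$. Then $\rho_1\omega_1+\rho_2\omega_2$ is again a connection on $(P,\mathcal{S},\pi_{P,\mathcal{S}},G)$, and both sides of \eqref{26} are affine in $\omega$ with the $\nu$-term appearing with total weight $\rho_1+\rho_2=1$. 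Hence \eqref{26} still holds. So I will use a partition of unity $\{\rho_\alpha\}$ on the paracompact $M$, pulled back to $P$. Since $\nu$ appears affinely in \eqref{26}, the same gluing works for $\nu$ simultaneously, provided the local pairs $(\nu_\alpha,\omega_\alpha)$ satisfy \eqref{26} pairwise and the glued $\nu=\sum\rho_\alpha\nu_\alpha$, $\omega=\sum\rho_\alpha\omega_\alpha$ use the same functions.

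The local step comes first. I fix a trivializing cover $\{U_\alpha\}$ of $M$ for $\mathcal{G}$ as in the proof of Theorem \ref{thm 1.12}, with local sections $\sigma$ and the trivializations $\psi^\alpha_P$ of $\pi^{-1}_{P,\mathcal{S}}(V_\alpha)\cong V_\alpha\times G$. On $(\pi_{\mathcal{G},M})^{-1}(U_\alpha)\cong U_\alpha\times G$ I take $\nu_\alpha$ to be the trivial (flat) connection, with coefficients $\eta^I_\mu=0$; this is even a Lie group fiber bundle connection. On $V_\alpha\times G$, by Proposition \ref{prop 1.14} the action is right multiplication: $((s,h),g)\mapsto (s,h\cdot g)$. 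There I take $\omega_\alpha$ to be the standard flat principal connection of the trivial principal bundle $V_\alpha\times G$, namely the vertical projection along $T(V_\alpha)\times 0$, i.e.\ the Maurer--Cartan form $\omega_\alpha(\dot s,\dot h)=(h^{-1}\dot h)^*$. Then \eqref{26} is checked in these coordinates. A tangent vector $(z,v)$ of $P\times_M\mathcal{G}$ is $((\dot s,\dot h),(\dot x,\dot g))$ with $\dot x=T(\pi_{\mathcal{S},M})\dot s$, and $T\Phi$ maps it to $(\dot s,\dot h g+h\dot g)$. Its vertical part is $(h g)^{-1}(\dot h g+h\dot g)=\mathrm{Ad}_{g^{-1}}(h^{-1}\dot h)+g^{-1}\dot g$. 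The first term is $T_p(\Phi_\gamma)(\omega_p(z))$, and the second is the fundamental field of $T_\gamma(L_{\gamma^{-1}})(\nu_\gamma(v))$, since $\nu_\alpha(v)=\dot g$. This is the usual principal-bundle computation, performed fiberwise in $x$.

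Finally, I glue: $\nu=\sum_\alpha\rho_\alpha\nu_\alpha$ is a connection on $\mathcal{G}$ because connections form an affine bundle, and $\omega=\sum_\alpha(\rho_\alpha\circ\pi_{P,M})\omega_\alpha$ is a connection on $P\to\mathcal{S}$. By the affine argument above, applied termwise, $\omega$ satisfies \eqref{26} with respect to $\nu$. Here one uses that the map $\xi\mapsto\xi^*$ is linear and that $T_p(\Phi_\gamma)$ and $T_\gamma(L_{\gamma^{-1}})$ are linear.

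The main obstacle I expect is the bookkeeping in this gluing step. The check must show that evaluating $\rho_\alpha$ at $p$ and at $\Phi(p,\gamma)$ gives the same value, which holds precisely because the $\rho_\alpha$ are pulled back from $M$ and $\Phi$ covers $id_M$. It must also show that the local verification of \eqref{26} for $\omega_\alpha$ is independent of the auxiliary identification via $\sigma$ on overlaps. The latter is not actually needed, since each pair $(\nu_\alpha,\omega_\alpha)$ is only used on its own domain, weighted by $\rho_\alpha$. A further point is that paracompactness of $M$, rather than of $\mathcal{S}$, suffices, because the partition of unity is chosen on $M$.
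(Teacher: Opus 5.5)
\textbf{The gap: the local flat connection need not exist.} Your gluing mechanism is correct, and it is the partition-of-unity strategy the paper points to. Condition \eqref{26} is jointly linear in the pair $(\omega,\nu)$. Weights pulled back from $M$ take the same value at $p$, $\Phi(p,\gamma)$ and $\gamma$, because $\Phi$ covers $id_M$. Finally, $\sum_\alpha\rho_\alpha=1$ makes both $\sum_\alpha\rho_\alpha\omega_\alpha$ and $\sum_\alpha\rho_\alpha\nu_\alpha$ connections. Your Maurer--Cartan computation is also right, \emph{provided} a trivialization $\pi^{-1}_{P,\mathcal{S}}(V_\alpha)\cong V_\alpha\times G$ exists over the whole $V_\alpha=\pi^{-1}_{\mathcal{S},M}(U_\alpha)$.

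That proviso is where the proof breaks. Such a trivialization, equivalently a section of $P\to\mathcal{S}$ over $V_\alpha$, need not exist however small $U_\alpha$ is. A counterexample: take $P=M\times S^3$ and $\mathcal{G}=M\times U(1)$, acting by the Hopf action on the second factor. The action is free and proper, and $\mathcal{S}=M\times S^2$. Over each $\{x\}\times S^2\subseteq V_\alpha$, the bundle $P\to\mathcal{S}$ restricts to the Hopf fibration $S^3\to S^2$, which has no section. So your flat $\omega_\alpha$ does not exist in general. You inherited this from the sentence ``which exists if we have fixed a small enough $U_\alpha$'' in the proof of Theorem \ref{thm 1.12}, which has the same defect. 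Local sections of $\pi_{P,\mathcal{S}}$ are only guaranteed over small open subsets of $\mathcal{S}$, not over full preimages of open subsets of $M$.

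\textbf{The repair keeps your architecture and changes only the local ingredient.} On $Q_\alpha=\pi^{-1}_{P,M}(U_\alpha)$ set $r_g(p)=\Phi\big(p,(\psi^\alpha_{\mathcal{G}})^{-1}(\pi_{P,M}(p),g)\big)$. This is a free and proper right $G$-action with orbit space $V_\alpha$, so $Q_\alpha\to V_\alpha$ is a standard principal $G$-bundle.
\begin{itemize}
\item Take for $\omega_\alpha$ \emph{any} standard principal connection on $Q_\alpha\to V_\alpha$, not necessarily a flat one.
\item Take for $\nu_\alpha$ the trivial connection in $\psi^\alpha_{\mathcal{G}}$.
\end{itemize}
Write $\gamma\equiv(x,g)$ and $v\equiv(\dot x,\dot g)$ in this trivialization. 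Then $T_{(p,\gamma)}(\Phi)(z,v)=T_p(r_g)(z)+(g^{-1}\dot g)^*_{\Phi(p,\gamma)}$. Here the fundamental field of this action coincides with the generalized fundamental field of $T_\gamma(L_{\gamma^{-1}})(\nu_\gamma(v))$. $G$-equivariance of $\omega_\alpha$ together with $\omega_\alpha(\xi^*)=\xi^*$ then gives \eqref{26}; this is exactly the computation behind Theorem \ref{thm 2.5.1}. In the Hopf example, the fix amounts to replacing the flat connection by a genuine connection on $S^3\to S^2$. After that, your gluing over $M$ goes through unchanged.

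\textbf{Paracompactness.} The existence of $\omega_\alpha$ now uses partitions of unity on $V_\alpha\subseteq\mathcal{S}$. So your closing remark that paracompactness of $M$ alone suffices is not supported by the argument. Under the usual Hausdorff and second-countable conventions this is harmless.

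\textbf{An equivalent route.} First fix a global $\nu$, glued from local trivial connections. The conditions of Theorem \ref{thm 6.4} are linear and homogeneous in $\eta$, so they survive the gluing. Over small $W\subseteq\mathcal{S}$ that admit sections, take $A^I_\mu=\eta^I_\mu(x,g)$ and $A^I_\Theta=0$, which satisfy \eqref{35} and \eqref{36}. Then glue with a partition of unity on $\mathcal{S}$ composed with $\pi_{P,\mathcal{S}}$, which is constant along orbits.
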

Note that the proof (compare with \cite{CASTRILLON LOPEZ (2023)}) is a local construction relying on a partition of unity argument, as in the classical case of standard principal connections on standard principal bundles (for which we refer for example to \cite[Chapter 2, Section 2]{KOBAYASHI (1963)}).

\bigskip

Altogether, the peculiarity in the new equivariance formula \eqref{26} is given by the additional term containing the connection $\nu$ on the Lie group fiber bundle $(\mathcal{G}, M, \pi_{\mathcal{G}, M}, G)$, as it can be deduced from Appendix \ref{app B}. This connection must actually be a Lie group fiber bundle connection, as we will see below, even though it is not an {\em a priori} requirement.

\medskip

We will give three contributions to this topic:
\begin{enumerate}
\item Firstly, we present a straightforward way (alternative to the one given in \cite{CASTRILLON LOPEZ (2023)}) to characterize generalized principal connections in terms of horizontal lifts.
\item Secondly, we use the obtained characterization in order to answer a question with the same flavour as the one stated in Section \ref{sec 2.1}: what are the local conditions on a connection ensuring that it is a generalized principal connection?
\item Thirdly, we exhibit a new proof (derived from the aforementioned local conditions) of the fact that generalized principal connections are associated {\em only} to Lie group fiber bundle connections (the original proof of this result can be found in \cite{CASTRILLON LOPEZ (2023)}).
\end{enumerate}

Along the lines of what done in Section \ref{sec 2.1}, consider now the horizontal lifts $\tau_{p} \colon T_{[p]_{\mathcal{G}}}(\mathcal{S}) \rightarrow H_{\pi_{P, \mathcal{S}}, p}(P)$ associated with a generalized principal connection $\omega \colon T(P) \rightarrow V_{\pi_{P, \mathcal{S}}}(P)$, where $p \in P$ and $\pi_{P, \mathcal{S}}(p)=[p]_{\mathcal{G}}$, for which it holds that:
\begin{align}
T_{p}(\pi_{P, \mathcal{S}}) \circ \tau_{p}&=id_{T_{[p]_{\mathcal{G}}}(\mathcal{S})} \label{29}
\\
\omega_{p}&=id_{T_{p}(P)}-\Big(\tau_{p} \circ T_{p}(\pi_{P, \mathcal{S}})\Big) \label{30}
\end{align}
If we fix in \eqref{26}:
\begin{align*}
		\left\{
		\begin{array}{l}
			z=\tau_{p}(a)
			\\
			v=\eta_{\gamma}(b)
		\end{array}
		\right.	
\end{align*}
where $\eta_{\gamma} \colon T_{x}(M) \rightarrow H_{\gamma}(\mathcal{G})$ are the horizontal lifts associated with the connection $\nu \colon T(\mathcal{G}) \rightarrow V(\mathcal{G})$, $\pi_{P, M}(p)=\pi_{\mathcal{G}, M}(\gamma)=x$ and $a \in T_{[p]_{\mathcal{G}}}(\mathcal{S})$ and $b \in T_{x}(M)$ are such that $T_{[p]_{\mathcal{G}}}(\pi_{\mathcal{S}, M})(a)=b$, for which, thanks to \eqref{7} and \eqref{29}, the following holds true:
\begin{align*}
T_{p}(\pi_{P, M})(z)&=T_{[p]_{\mathcal{G}}}(\pi_{\mathcal{S}, M})\Big(T_{p}(\pi_{P, \mathcal{S}})\Big(\tau_{p}(a)\Big)\Big)=T_{[p]_{\mathcal{G}}}(\pi_{\mathcal{S}, M})(a)=b
\\
&=T_{\gamma}(\pi_{\mathcal{G}, M})\Big(\eta_{\gamma}(b)\Big)=T_{\gamma}(\pi_{\mathcal{G}, M})(v)
\end{align*}
we obtain:
\begin{align*}
\omega_{\Phi(p, \gamma)}\Big(T_{(p, \gamma)}(\Phi)\Big(\tau_{p}(a), \eta_{\gamma}(b)\Big)\Big)=T_{p}(\Phi _{\gamma})\Big(\omega_p\Big(\tau_{p}(a)\Big)\Big)+\Big(T_{\gamma}(L_{\gamma^{-1}})\Big(\nu_{\gamma}\Big(\eta_{\gamma}(b)\Big)\Big)\Big)^*_{\Phi(p, \gamma)}
\end{align*}
and then:
\begin{align*}
\omega_{\Phi(p, \gamma)}\Big(T_{(p, \gamma)}(\Phi)\Big(\tau_{p}(a), \eta_{\gamma}(b)\Big)\Big)=0
\end{align*}
since clearly $\omega_p \circ \tau_{p}=0$ and $\nu_{\gamma} \circ \eta_{\gamma}=0$, for any $p \in P$ and $\gamma \in \mathcal{G}$. Now using \eqref{30} we have:
\begin{align*}
T_{(p, \gamma)}(\Phi)\Big(\tau_{p}(a), \eta_{\gamma}(b)\Big)=\Big(\tau_{\Phi(p, \gamma)} \circ T_{\Phi(p, \gamma)}(\pi_{P, \mathcal{S}})\Big)\Big(T_{(p, \gamma)}(\Phi)\Big(\tau_{p}(a), \eta_{\gamma}(b)\Big)\Big)
\end{align*}
We can also observe that, by definition, for all $(p, \gamma) \in P \times_M \mathcal{G}$:
\begin{align*}
\Big(\pi_{P, \mathcal{S}} \circ \Phi\Big)(p, \gamma)=[\Phi(p, \gamma)]_{\mathcal{G}}=[p]_{\mathcal{G}}=\Big(\pi_{P, \mathcal{S}} \circ p_1\Big)(p, \gamma)
\end{align*}
where $p_1$ is the projection on the first factor, so that:
\begin{align*}
\Big(T_{\Phi(p, \gamma)}(\pi_{P, \mathcal{S}}) \circ T_{(p, \gamma)}(\Phi)\Big)\Big(\tau_{p}(a), \eta_{\gamma}(b)\Big)&=\Big(T_{p}(\pi_{P, \mathcal{S}}) \circ T_{(p, \gamma)}(p_1)\Big)\Big(\tau_{p}(a), \eta_{\gamma}(b)\Big)
\\
&=T_{p}(\pi_{P, \mathcal{S}})\Big(\tau_{p}(a)\Big)=a
\end{align*}
Then:
\begin{align*}
T_{(p, \gamma)}(\Phi)\Big(\tau_{p}(a), \eta_{\gamma}\Big(T_{[p]_{\mathcal{G}}}(\pi_{\mathcal{S}, M})(a)\Big)\Big)=T_{(p, \gamma)}(\Phi)\Big(\tau_{p}(a), \eta_{\gamma}(b)\Big)=\tau_{\Phi(p, \gamma)}(a)
\end{align*}
and this is true for all $a \in T_{[p]_{\mathcal{G}}}(\mathcal{S})$, hence:
\begin{align}\label{31}
\tau_{\Phi(p, \gamma)}=T_{(p, \gamma)}(\Phi)\Big(\tau_{p}(\cdot), \eta_{\gamma}\Big(T_{[p]_{\mathcal{G}}}(\pi_{\mathcal{S}, M})(\cdot)\Big)\Big), \, \forall \, (p, \gamma) \in P \times_M \mathcal{G}
\end{align}

\medskip

Actually condition \eqref{31} for the horizontal lifts characterizes generalized principal connections. In fact, if the condition holds true for some fixed $(z, v) \in T_{p}(P) \times_{T_{x}(M)} T_{\gamma}(\mathcal{G})$, with $T_{p}(\pi_{P, M})(z)=T_{\gamma}(\pi_{\mathcal{G}, M})(v)=b \in T_{x}(M)$ and $T_{p}(\pi_{P, \mathcal{S}})(z)=a \in T_{[p]_{\mathcal{G}}}(\mathcal{S})$, we have:
\begin{align*}
T_{(p, \gamma)}(\Phi)\Big(\tau_{p}(a), v\Big)-\tau_{\Phi(p, \gamma)}(a)&=T_{(p, \gamma)}(\Phi)\Big(\tau_{p}(a), v\Big)-T_{(p, \gamma)}(\Phi)\Big(\tau_{p}(a), \eta_{\gamma}\Big(T_{[p]_{\mathcal{G}}}(\pi_{\mathcal{S}, M})(a)\Big)\Big)
\\
&=T_{(p, \gamma)}(\Phi)\Big(0, v-\eta_{\gamma}\Big(T_{[p]_{\mathcal{G}}}(\pi_{\mathcal{S}, M})(a)\Big)\Big)=T_{(p, \gamma)}(\Phi)\Big(0, \nu_{\gamma}(v)\Big)
\end{align*}
where the last step is due to the fact that:
\begin{align*}
T_{\gamma}(\pi_{\mathcal{G}, M})\Big(v-\eta_{\gamma}\Big(T_{[p]_{\mathcal{G}}}(\pi_{\mathcal{S}, M})(a)\Big)\Big)&=T_{\gamma}(\pi_{\mathcal{G}, M})(v)-T_{\gamma}(\pi_{\mathcal{G}, M})\Big(\eta_{\gamma}\Big(T_{[p]_{\mathcal{G}}}(\pi_{\mathcal{S}, M})(a)\Big)\Big)
\\
&=b-T_{[p]_{\mathcal{G}}}(\pi_{\mathcal{S}, M})(a)=b-T_{[p]_{\mathcal{G}}}(\pi_{\mathcal{S}, M})\Big(T_{p}(\pi_{P, \mathcal{S}})(z)\Big)
\\
&=b-T_{p}(\pi_{P, M})(z)=0
\end{align*}
and hence $\nu_{\gamma}(v)=v-\eta_{\gamma}\Big(T_{[p]_{\mathcal{G}}}(\pi_{\mathcal{S}, M})(a)\Big)$. 

Considering now a smooth curve $\Big(\alpha_1(t), \alpha_2(t)\Big) \colon I \subseteq \mathbb{R} \rightarrow P \times_M \mathcal{G}$ with $\Big(\alpha_1(0), \alpha_2(0)\Big)=(p, \gamma)$ and $\Big(\alpha'_1(0), \alpha'_2(0)\Big)=\Big(0, \nu_{\gamma}(v)\Big)$, so that we can directly fix $\alpha_1(t)=p$, for each $t \in I$, we then have:
\begin{align*}
T_{(p, \gamma)}(\Phi)\Big(0, \nu_{\gamma}(v)\Big)&=\left.\frac{d}{dt}\right|_{t=0}\Big(\Phi \circ \Big(\alpha_1(t), \alpha_2(t)\Big)\Big)=\left.\frac{d}{dt}\right|_{t=0}\Phi_{p}\Big(\alpha_2(t)\Big)=T_{\gamma}(\Phi_{p})\Big(\nu_{\gamma}(v)\Big)
\end{align*}
Note also that from the definition of $\Phi$ we have:
\begin{align*}
\Big(T_{\gamma}(L_{\gamma^{-1}})\Big(\nu_{\gamma}(v)\Big)\Big)^*_{\Phi(p, \gamma)}&=T_{e_x}\Big(\Phi_{\Phi(p, \gamma)}\Big)\Big(T_{\gamma}(L_{\gamma^{-1}})\Big(\nu_{\gamma}(v)\Big)\Big)=T_{\gamma}\Big(\Phi_{\Phi(p, \gamma)} \circ L_{\gamma^{-1}}\Big)\Big(\nu_{\gamma}(v)\Big)
\\
&=T_{\gamma}(\Phi_{p})\Big(\nu_{\gamma}(v)\Big)
\end{align*}
All these observations imply that:
\begin{align*}
T_{(p, \gamma)}(\Phi)\Big(\tau_{p}(a), v\Big)=\tau_{\Phi(p, \gamma)}(a)+\Big(T_{\gamma}(L_{\gamma^{-1}})\Big(\nu_{\gamma}(v)\Big)\Big)^*_{\Phi(p, \gamma)}
\end{align*}
and because, with similar steps as above, we also have that $\tau_{p}(a)+\omega_{p}(z)=z$ and $T_{(p, \gamma)}(\Phi)\Big(\omega_{p}(z), 0\Big)=T_{p}(\Phi_{\gamma})\Big(\omega_{p}(z)\Big)$, then:
\begin{align*}
T_{(p, \gamma)}(\Phi)\Big(\tau_{p}(a)+\omega_{p}(z), v\Big)&=T_{(p, \gamma)}(\Phi)\Big(\tau_{p}(a), v\Big)+T_{(p, \gamma)}(\Phi)\Big(\omega_{p}(z), 0\Big)
\\
&=\tau_{\Phi(p, \gamma)}(a)+T_{p}(\Phi_{\gamma})\Big(\omega_{p}(z)\Big)+\Big(T_{\gamma}(L_{\gamma^{-1}})\Big(\nu_{\gamma}(v)\Big)\Big)^*_{\Phi(p, \gamma)}
\end{align*}
and therefore:
\begin{align*}
\omega_{\Phi(p, \gamma)}\Big(T_{(p, \gamma)}(\Phi)(z, v)\Big)=T_{p}(\Phi_{\gamma})\Big(\omega_{p}(z)\Big)+\Big(T_{\gamma}(L_{\gamma^{-1}})\Big(\nu_{\gamma}(v)\Big)\Big)^*_{\Phi(p, \gamma)}
\end{align*}
This last step is due to the fact that $\omega_{\Phi(p, \gamma)} \circ \tau_{\Phi(p, \gamma)}=0$ and that since Definition \ref{defn 2.6} is well-posed, as it is checked in Appendix \ref{app B} in \eqref{27}, then $T_{p}(\Phi_{\gamma})\Big(\omega_{p}(z)\Big)+\Big(T_{\gamma}(L_{\gamma^{-1}})\Big(\nu_{\gamma}(v)\Big)\Big)^*_{\Phi(p, \gamma)}$ is already a vertical tangent vector. In this way we have found \eqref{26} and, with the computations carried out, it has been established another way to write it:
\begin{align*}
\omega_{\Phi(p, \gamma)}\Big(T_{(p, \gamma)}(\Phi)(z, v)\Big)&=T_{p}(\Phi_{\gamma})\Big(\omega_{p}(z)\Big)+T_{\gamma}(\Phi_{p})\Big(\nu_{\gamma}(v)\Big)
\end{align*}

\medskip

Summing up, we have proved the following statement: 
\begin{prop}[Horizontal lift characterization of generalized principal connections]\label{prop 7.4}
Let $\eta$ be a connection on the Lie group fiber bundle $(\mathcal{G}, M, \pi_{\mathcal{G}, M}, G)$. A generalized principal connection on the generalized principal bundle $(P, \mathcal{S}, \pi_{P, \mathcal{S}}, G)$ associated to $\eta$ is a connection $\tau$, given by the collection of the horizontal lifts $\tau_{p} \colon T_{[p]_{\mathcal{G}}}(\mathcal{S}) \longrightarrow T_{p}(P)$, $\forall \, [p]_{\mathcal{G}} \in \mathcal{S}$, $\forall \, p \in P$, with $\pi_{P, \mathcal{S}}(p)=[p]_{\mathcal{G}}$, such that:
\begin{align*}
\tau_{\Phi(p, \gamma)}=T_{(p, \gamma)}(\Phi)\Big(\tau_{p}(\cdot), \eta_{\gamma}\Big(T_{[p]_{\mathcal{G}}}(\pi_{\mathcal{S}, M})(\cdot)\Big)\Big), \, \forall \, (p, \gamma) \in P \times_M \mathcal{G}
\end{align*}
where $\Phi$ is the relevant right fibered action.
\end{prop}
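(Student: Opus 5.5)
The claim is an equivalence between the equivariance formula \eqref{26} and the horizontal lift condition \eqref{31}, so I would prove the two implications separately. Throughout I would use the dictionary between $\omega$ and $\tau$ given by \eqref{29} and \eqref{30}, together with its analogue \eqref{16} relating $\nu$ and $\eta$.

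\emph{From \eqref{26} to \eqref{31}.} Fix $(p,\gamma)\in P\times_M\mathcal{G}$ and $a\in T_{[p]_{\mathcal{G}}}(\mathcal{S})$, and set $b=T_{[p]_{\mathcal{G}}}(\pi_{\mathcal{S},M})(a)$. Then $(\tau_p(a),\eta_\gamma(b))$ is an admissible pair, because by \eqref{7} and \eqref{29} both components project onto $b$. I would insert this pair into \eqref{26}. The right-hand side vanishes, since $\omega_p\circ\tau_p=0$ and $\nu_\gamma\circ\eta_\gamma=0$. Hence $T_{(p,\gamma)}(\Phi)(\tau_p(a),\eta_\gamma(b))$ lies in the kernel of $\omega_{\Phi(p,\gamma)}$, and by \eqref{30} it equals $\tau_{\Phi(p,\gamma)}$ applied to its own $\pi_{P,\mathcal{S}}$-projection. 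Differentiating $\pi_{P,\mathcal{S}}\circ\Phi=\pi_{P,\mathcal{S}}\circ p_1$ shows this projection is $a$, which gives \eqref{31}.

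\emph{From \eqref{31} to \eqref{26}.} Take an arbitrary admissible $(z,v)$, and set $a=T_p(\pi_{P,\mathcal{S}})(z)$ and $b=T_\gamma(\pi_{\mathcal{G},M})(v)=T_{[p]_{\mathcal{G}}}(\pi_{\mathcal{S},M})(a)$. I would decompose
\[
z=\tau_p(a)+\omega_p(z),\qquad v=\eta_\gamma(b)+\nu_\gamma(v),
\]
where the components $\omega_p(z)$ and $\nu_\gamma(v)$ are vertical over $M$. Using linearity of $T_{(p,\gamma)}(\Phi)$ on the fibered product, I would split its value into three terms:
\begin{enumerate}
\item $T(\Phi)(\tau_p(a),\eta_\gamma(b))=\tau_{\Phi(p,\gamma)}(a)$, by \eqref{31};
\item $T(\Phi)(\omega_p(z),0)=T_p(\Phi_\gamma)(\omega_p(z))$, computed with a curve in $P_x$ that keeps $\gamma$ fixed;
\item $T(\Phi)(0,\nu_\gamma(v))=T_\gamma(\Phi_p)(\nu_\gamma(v))$, computed with a curve that keeps $p$ fixed.
\end{enumerate}
Next I would rewrite the third term as the fundamental field. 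The action axiom gives $\Phi_{\Phi(p,\gamma)}\circ L_{\gamma^{-1}}=\Phi_p$ on $\mathcal{G}_x$, so
\[
T_\gamma(\Phi_p)(\nu_\gamma(v))=\Big(T_\gamma(L_{\gamma^{-1}})(\nu_\gamma(v))\Big)^*_{\Phi(p,\gamma)}.
\]
Finally I would apply $\omega_{\Phi(p,\gamma)}$ to the three-term sum. It kills the $\tau$ term, and it acts as the identity on the other two provided they lie in $V_{\pi_{P,\mathcal{S}}}(P)$. That yields exactly \eqref{26}.

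The step I expect to be the main obstacle is this last verticality check. The vectors $T_p(\Phi_\gamma)(\omega_p(z))$ and the fundamental field must be $\pi_{P,\mathcal{S}}$-vertical, not merely $\pi_{P,M}$-vertical. For the fundamental field this holds by the remark after Definition \ref{defn 1.17}. For the other term, $\Phi_\gamma$ preserves the $\pi_{P,\mathcal{S}}$-fibres because $\pi_{P,\mathcal{S}}\circ\Phi_\gamma=\pi_{P,\mathcal{S}}$ on $P_x$. Differentiating this identity shows that $T_p(\Phi_\gamma)$ maps $\mathrm{Ker}\,T_p(\pi_{P,\mathcal{S}})$ into $\mathrm{Ker}\,T_{\Phi(p,\gamma)}(\pi_{P,\mathcal{S}})$. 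This is precisely the well-posedness of Definition \ref{defn 2.6}, which I would record explicitly before concluding.
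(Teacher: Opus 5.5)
Your proposal is correct and follows the paper's proof essentially step by step. You use the same substitution $(\tau_p(a),\eta_\gamma(b))$ into \eqref{26} for one direction, and for the converse the same splitting via \eqref{30} and \eqref{16}, the same fixed-point curves, and the identity $\Phi_{\Phi(p,\gamma)}\circ L_{\gamma^{-1}}=\Phi_p$. The one small deviation is the $\pi_{P,\mathcal{S}}$-verticality of $T_p(\Phi_\gamma)(\omega_p(z))$: you check it directly by differentiating $\pi_{P,\mathcal{S}}\circ\Phi_\gamma=\pi_{P,\mathcal{S}}$ on $P_x$, whereas the paper appeals to \eqref{27}, which rests on the quoted Lemma~\ref{lem 1.18}, so your argument is slightly more self-contained.
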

Further geometric interpretations of the notion of generalized principal connection can be found again in \cite{CASTRILLON LOPEZ (2023)}.

\bigskip

Working subsequently in fibered generalized principal bundle coordinates $(x^\mu, \sigma^\Theta, g^I)$ satisfying \eqref{11} on the generalized principal bundle $(P, \mathcal{S}, \pi_{P, \mathcal{S}}, G)$, remembering that $(x^\mu, \sigma^\Theta)$ are fibered coordinates on $(\mathcal{S}, M, \pi_{\mathcal{S}, M})$, we know that, as previously seen in Section \ref{sec 2.1}, a connection on $(P, \mathcal{S}, \pi_{P, \mathcal{S}}, G)$ given in terms of the smooth assignment of horizontal lifts $\tau_{p}$, $p \in P$, can be locally written as:
\begin{align*}
\tau=dx^\mu \otimes \Big(\partial_{\mu}-A^I_\mu(x, \sigma, g)\partial_{I}\Big)+d\sigma^\Theta \otimes \Big(\partial_{\Theta}-A^I_\Theta(x, \sigma, g)\partial_{I}\Big)
\end{align*}
meaning that if $\pi_{P, \mathcal{S}}(p)=[p]_{\mathcal{G}}$, $a=a^\mu\partial_{\mu}+a^\Theta\partial_{\Theta} \in T_{[p]_{\mathcal{G}}}(\mathcal{S})$ and in local coordinates $p$ is given by $(x^\mu, \sigma^\Theta, g^I)$, then:
\begin{align}\label{33}
\tau_{p}(a^\mu\partial_{\mu}+a^\Theta\partial_{\Theta})=a^\mu\Big(\partial_{\mu}-A^I_\mu(x, \sigma, g)\partial_{I}\Big)+a^\Theta\Big(\partial_{\Theta}-A^I_\Theta(x, \sigma, g)\partial_{I}\Big) \in H_{\pi_{P, \mathcal{S}}, p}(P)
\end{align}
where we set $\partial_{\mu}=\frac{\partial}{\partial x^\mu}$, $\partial_{\Theta}=\frac{\partial}{\partial \sigma^\Theta}$, $\partial_{I}=\frac{\partial}{\partial g^I}$, and $\{A^I_\mu(x, \sigma, g), A^I_\Theta(x, \sigma, g)\}$ are the connection coefficients. 

Again, these coefficients define the connection and, if we are considering a generalized principal connection, are locally free once also applied the behaviour with respect to condition \eqref{31}, with the exception of the non-local restrictions given by the transformation laws under change of charts.

In Remark \ref{oss 1.16} we have already noted that, in fibered coordinates as above, $\pi_{\mathcal{S}, M} \colon (x^\mu, \sigma^\Theta) \mapsto x^\mu$, so that for all $[p]_{\mathcal{G}} \in \mathcal{S}$:
\begin{align*}
T_{[p]_{\mathcal{G}}}(\pi_{\mathcal{S}, M}) \colon T_{[p]_{\mathcal{G}}}(\mathcal{S}) \longrightarrow T_{x}(M) \colon a^\mu\partial_{\mu}+a^\Theta\partial_{\Theta} \longmapsto a^\mu\partial_{\mu}
\end{align*}
where $\pi_{\mathcal{S}, M}\Big([p]_{\mathcal{G}}\Big)=\pi_{P, M}(p)=x$.

Moreover, thanks to Proposition \ref{prop 1.14}, we have that the fibered action $\Phi$ is locally given by the right multiplication on the Lie group $G$:
\begin{align*}
\Phi \colon P \times_M \mathcal{G} \longrightarrow P \colon (x^\mu, \sigma^\Theta, g^I, h^J) \longmapsto \Big(x^\mu, \sigma^\Theta, \pi^I(g, h)\Big)
\end{align*}
where $\pi \colon G \times G \rightarrow G$ is the multiplication in the Lie group $G$, $(x^\mu, \sigma^\Theta, g^I)$ are generalized principal bundle coordinates around $p \in P_x$ and $(x^\mu, h^J)$ are coordinates around $\gamma \in \mathcal{G}_x$ satisfying \eqref{1} (similarly to what done in Section \ref{sec 2.1} for the multiplication map $\mathcal{M}$). Note that the coordinates $(x^\mu, \sigma^\Theta, g^I, h^J)$ transform collectively as:
\begin{align}\label{34}	
                 \left\{
		\begin{array}{l}
			x'^\mu=x'^\mu(x)
			\\
			\sigma'^\Theta=\Sigma^\Theta(x, \sigma)
			\\
			g'^I=\pi^I\Big(\varphi(x, \sigma), G(x, g)\Big)
			\\
			h'^J=G^J(x, h)
		\end{array}
		\right.
\end{align}
see \cite[Chapter 1, Section 4]{SAUNDERS (1989)}.

From this we obtain that:
\begin{align*}
T_{(p, \gamma)}(\Phi) \colon T_{(p, \gamma)}(P \times_M \mathcal{G})  & \longrightarrow T_{\Phi(p, \gamma)}(P)
\\
a^\mu\partial_{\mu}+b^\Theta\partial_{\Theta}+c^I\partial_{I}+d^I\hat{\partial}_{I} & \longmapsto a^\mu\partial_{\mu}+b^\Theta\partial_{\Theta}+\Big(c^A\partial^1_A\pi^I(g, h)+d^A\partial^2_A\pi^I(g, h)\Big)\partial_{I}
\end{align*}
where $\hat{\partial}_{I}=\frac{\partial}{\partial h^I}$ and $\partial^1_A\pi^I(g, h)$, $\partial^2_A\pi^I(g, h)$ are the partial derivatives with respect to $g^A$ and $h^A$, respectively. It holds also the identification:
\begin{align*}
T_{(p, \gamma)}(P \times_M \mathcal{G}) &\equiv T_{p}(P) \times_{T_{x}(M)} T_{\gamma}(\mathcal{G}) 
\\
a^\mu\partial_{\mu}+b^\Theta\partial_{\Theta}+c^I\partial_{I}+d^I\hat{\partial}_{I}  &\equiv (a^\mu\partial_{\mu}+b^\Theta\partial_{\Theta}+c^I\partial_{I}, a^\mu\partial_{\mu}+d^I\hat{\partial}_{I})
\end{align*}
Combining these results with \eqref{22}, \eqref{31} and \eqref{33} on the coordinate basis of $T_{[p]_{\mathcal{G}}}(\mathcal{S})$ we have:
\begin{align*}
&&\tau_{\Phi(p, \gamma)}(\partial_{\mu})&=T_{(p, \gamma)}(\Phi)\Big(\tau_{p}(\partial_{\mu}), \eta_{\gamma}(\partial_{\mu})\Big)
\\
\, \Longleftrightarrow \, &&\partial_{\mu}-A^I_\mu\Big(x, \sigma, \pi(g, h)\Big)\partial_{I}&=T_{(p, \gamma)}(\Phi)\Big(\partial_{\mu}-A^I_\mu(x, \sigma, g)\partial_{I},  \partial_{\mu}-\eta^I_\mu(x, h)\hat{\partial}_{I}\Big)
\\
&& &=T_{(p, \gamma)}(\Phi)\Big(\partial_{\mu}-A^I_\mu(x, \sigma, g)\partial_{I}-\eta^I_\mu(x, h)\hat{\partial}_{I}\Big)
\\
&& &=\partial_{\mu}+\Big(-A^A_\mu(x, \sigma, g)\partial^1_A\pi^I(g, h)-\eta^A_\mu(x, h)\partial^2_A\pi^I(g, h)\Big)\partial_{I}
\\
\, \Longleftrightarrow \, &&A^I_\mu\Big(x, \sigma, \pi(g, h)\Big)&=A^A_\mu(x, \sigma, g)\partial^1_A\pi^I(g, h)+\eta^A_\mu(x, h)\partial^2_A\pi^I(g, h)
\end{align*}
and:
\begin{align*}
&&\tau_{\Phi(p, \gamma)}(\partial_{\Theta})&=T_{(p, \gamma)}(\Phi)\Big(\tau_{p}(\partial_{\Theta}), \eta_{\gamma}(0)\Big)
\\
\, \Longleftrightarrow \, &&\partial_{\Theta}-A^I_\Theta\Big(x, \sigma, \pi(g, h)\Big)\partial_{I}&=T_{(p, \gamma)}(\Phi)\Big(\partial_{\Theta}-A^I_\Theta(x, \sigma, g)\partial_{I}, 0\Big)
\\
&& &=T_{(p, \gamma)}(\Phi)\Big(\partial_{\Theta}-A^I_\Theta(x, \sigma, g)\partial_{I}\Big)
\\
&& &=\partial_{\Theta}+\Big(-A^A_\Theta(x, \sigma, g)\partial^1_A\pi^I(g, h)\Big)\partial_{I}
\\
\, \Longleftrightarrow \, &&A^I_\Theta\Big(x, \sigma, \pi(g, h)\Big)&=A^A_\Theta(x, \sigma, g)\partial^1_A\pi^I(g, h)
\end{align*}

We can summarize the result of the computations above with the following statement:
\begin{thm}[Local coordinate characterization of generalized principal connections]\label{thm 7.6}
The connection coefficients $\{A^I_\mu(x, \sigma, g), A^I_\Theta(x, \sigma, g)\}$ of a generalized principal connection, that we shall call {\em generalized principal connection coefficients}, are characterized, in any generalized principal bundle coordinates $(x^\mu, \sigma^\Theta, g^I)$, by the conditions $\forall \, g, h \in G$:
\begin{enumerate}
\item $A^I_\mu\Big(x, \sigma, \pi(g, h)\Big)=A^A_\mu(x, \sigma, g)\partial^1_A\pi^I(g, h)+\eta^A_\mu(x, h)\partial^2_A\pi^I(g, h)$.
\item $A^I_\Theta\Big(x, \sigma, \pi(g, h)\Big)=A^A_\Theta(x, \sigma, g)\partial^1_A\pi^I(g, h)$.
\end{enumerate}
Here $\eta^A_\mu(x, h)$ are the connection coefficients of the fixed connection on the Lie group fiber bundle $(\mathcal{G}, M, \pi_{\mathcal{G}, M}, G)$ and $\pi^I(g, h)$ are the local expressions of the product in the Lie group $G$.
\end{thm}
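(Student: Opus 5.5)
The plan is to deduce the statement directly from the horizontal lift characterization of Proposition \ref{prop 7.4}, exactly as Theorem \ref{thm 6.4} was obtained from Proposition \ref{prop 6.3}. Since \eqref{31} is equivalent to \eqref{26}, it suffices to show that \eqref{31}, written in generalized principal bundle coordinates $(x^\mu, \sigma^\Theta, g^I)$, is equivalent to conditions 1 and 2. Both sides of \eqref{31} are linear maps on $T_{[p]_{\mathcal{G}}}(\mathcal{S})$, so it is enough to test them on the coordinate basis $\{\partial_\mu, \partial_\Theta\}$.

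The ingredients are the following. First, the local form \eqref{33} of $\tau$ and the local form \eqref{22} of $\eta$. Second, the expression $T_{[p]_{\mathcal{G}}}(\pi_{\mathcal{S}, M})(a^\mu\partial_\mu+a^\Theta\partial_\Theta)=a^\mu\partial_\mu$, which follows from Remark \ref{oss 1.16}. Third, the local form $(x,\sigma,g,h)\mapsto(x,\sigma,\pi(g,h))$ of $\Phi$, which comes from Proposition \ref{prop 1.14}, together with its tangent map. We also need the identification $T_{(p,\gamma)}(P\times_M\mathcal{G})\equiv T_p(P)\times_{T_x(M)}T_\gamma(\mathcal{G})$ in the coordinates \eqref{34}. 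The point to check here is that the pair $(\tau_p(\partial_\mu), \eta_\gamma(\partial_\mu))$ really lies in the fibered product, since both components project to $\partial_\mu$. It then corresponds to the single vector $\partial_\mu - A^I_\mu(x,\sigma,g)\partial_I - \eta^I_\mu(x,h)\hat\partial_I$.

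On $\partial_\mu$ I will push this vector forward through $T_{(p,\gamma)}(\Phi)$. This gives $\partial_\mu-(A^A_\mu\partial^1_A\pi^I+\eta^A_\mu\partial^2_A\pi^I)\partial_I$. The left-hand side of \eqref{31} is $\tau_{\Phi(p,\gamma)}(\partial_\mu)=\partial_\mu-A^I_\mu(x,\sigma,\pi(g,h))\partial_I$, because $\Phi(p,\gamma)$ has coordinates $(x,\sigma,\pi(g,h))$. Comparing $\partial_I$-components yields condition 1. On $\partial_\Theta$ the base projection is zero, so $\eta_\gamma(0)=0$ and only the $\partial^1_A\pi^I$ term survives, which yields condition 2.

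Conversely, if conditions 1 and 2 hold in a chart, the same computation read backwards shows that \eqref{31} holds on the basis, hence on all of $T_{[p]_{\mathcal{G}}}(\mathcal{S})$ by linearity. Since charts cover $P\times_M\mathcal{G}$, \eqref{31} then holds globally. Here $(x,\sigma,g)$ and $(x,h)$ must be taken in compatible trivializations sharing the same $\psi_{\mathcal{G}}^\alpha$, as in the construction of $\psi_P^\alpha$. The main, though mild, obstacle is justifying that Proposition \ref{prop 1.14} gives this local form of $\Phi$ only when the $\mathcal{G}$-coordinates come from the same trivialization $\psi_{\mathcal{G}}^\alpha$ used to build $\psi_P^\alpha$. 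This is also why the statement is phrased for generalized principal bundle coordinates rather than arbitrary fibered ones. Independence from the choice of such coordinates follows from the global nature of \eqref{31}, so no transformation-law check is needed for the proof itself.
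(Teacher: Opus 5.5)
Your proposal is correct and is essentially the paper's own argument: you evaluate the horizontal-lift condition \eqref{31} of Proposition \ref{prop 7.4} on the coordinate basis $\partial_\mu$, $\partial_\Theta$. You use \eqref{22}, \eqref{33}, Remark \ref{oss 1.16} and the local right-multiplication form of $\Phi$ from Proposition \ref{prop 1.14}, compare $\partial_I$-components, and every step is reversible. Your remark that the $\mathcal{G}$-coordinates must come from the same trivialization $\psi^\alpha_{\mathcal{G}}$ used to build $\psi^\alpha_P$ just makes explicit an assumption the paper leaves implicit.
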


Note that in the expressions above, thanks to \eqref{34}, the coordinates $(x, \sigma, g)$ transform as \eqref{11} and the coordinates $(x, h)$ as \eqref{1}.

\medskip

From this local characterization we finally derive that the fixed connection on the Lie group fiber bundle $(\mathcal{G}, M, \pi_{\mathcal{G}, M}, G)$ must actually be a Lie group fiber bundle connection. In fact, if in the first of the above conditions we fix $g=e$, we get:
\begin{align}\label{35}
A^I_\mu\Big(x, \sigma, \pi(e, h)\Big)&=A^A_\mu(x, \sigma, e)\partial^1_A\pi^I(e, h)+\eta^A_\mu(x, h)\partial^2_A\pi^I(e, h) \, \Longleftrightarrow \, \nonumber
\\
A^I_\mu(x, \sigma, h)&=A^A_\mu(x, \sigma, e)\partial^1_A\pi^I(e, h)+\eta^A_\mu(x, h)\delta^I_A=A^A_\mu(x, \sigma, e)\partial^1_A\pi^I(e, h)+\eta^I_\mu(x, h)
\end{align}
where we have used that $\pi$ is the multiplication map in $G$, that $e$ is the identity element of $G$ and that:
\begin{align*}
\partial^2_A\pi^I(e, h)=\frac{\partial}{\partial h^A}\pi^I(e, h)=\frac{\partial}{\partial h^A}h^I=\delta^I_A
\end{align*}
Fixing now also $h=e$ and observing that for any $g \in G$ similarly $\partial^1_A\pi^I(g, e)=\delta^I_A$, so that also $\partial^1_A\pi^I(e, e)=\delta^I_A$, we obtain:
\begin{align*}
A^I_\mu(x, \sigma, e)=A^A_\mu(x, \sigma, e)\partial^1_A\pi^I(e, e)+\eta^I_\mu(x, e)=A^I_\mu(x, \sigma, e)+\eta^I_\mu(x, e) \, \Longleftrightarrow \, \eta^I_\mu(x, e)=0
\end{align*}
which is the first condition in order to have a Lie group fiber bundle connection, as seen in Theorem \ref{thm 6.4}.

Moreover, substituting in \eqref{35} $h$ with $\pi(g, h)$ and subsequently using Theorem \ref{thm 7.6} and \eqref{35} itself, we get:
\begin{align*}
&&A^I_\mu\Big(x, \sigma, \pi(g, h)\Big)&=A^A_\mu(x, \sigma, e)\partial^1_A\pi^I\Big(e, \pi(g, h)\Big)+\eta^I_\mu\Big(x, \pi(g, h)\Big)
\\
\, \Longleftrightarrow \, &&\eta^I_\mu\Big(x, \pi(g, h)\Big)&=A^A_\mu(x, \sigma, g)\partial^1_A\pi^I(g, h)+\eta^A_\mu(x, h)\partial^2_A\pi^I(g, h)-A^A_\mu(x, \sigma, e)\partial^1_A\pi^I\Big(e, \pi(g, h)\Big)
\\
&& &=\Big(A^B_\mu(x, \sigma, e)\partial^1_B\pi^A(e, g)+\eta^A_\mu(x, g)\Big)\partial^1_A\pi^I(g, h)+\eta^A_\mu(x, h)\partial^2_A\pi^I(g, h)
\\
&& &\,\,\,\,\,\,\,-A^A_\mu(x, \sigma, e)\partial^1_A\pi^I\Big(e, \pi(g, h)\Big)
\\
&& &=\eta^A_\mu(x, g)\partial^1_A\pi^I(g, h)+\eta^A_\mu(x, h)\partial^2_A\pi^I(g, h)
\\
&& &\,\,\,\,\,\,\,+A^A_\mu(x, \sigma, e)\Big(\partial^1_A\pi^B(e, g)\partial^1_B\pi^I(g, h)-\partial^1_A\pi^I\Big(e, \pi(g, h)\Big)\Big)
\end{align*}
Note that thanks to the associativity of the product in $G$:
\begin{align*}
\pi^I\Big(k, \pi(g, h)\Big)=\pi^I\Big(\pi(k, g), h\Big) \, &\Longrightarrow \, \frac{\partial}{\partial k^A}\pi^I\Big(k, \pi(g, h)\Big)=\frac{\partial}{\partial k^A}\pi^I\Big(\pi(k, g), h\Big)
\\
&\Longrightarrow \, \partial^1_A\pi^I\Big(k, \pi(g, h)\Big)=\partial^1_B\pi^I\Big(\pi(k, g), h\Big)\partial^1_A\pi^B(k, g)
\end{align*}
which evaluated in $k=e$ gives:
\begin{align*}
\partial^1_A\pi^I\Big(e, \pi(g, h)\Big)&=\partial^1_B\pi^I\Big(\pi(e, g), h\Big)\partial^1_A\pi^B(e, g)=\partial^1_B\pi^I(g, h)\partial^1_A\pi^B(e, g)
\end{align*}
Inserting this result in the previous computation leads us to the second condition in order to have a Lie group fiber bundle connection, as seen in Theorem \ref{thm 6.4}. We have therefore finally proved that the fixed connection $\eta$ on the Lie group fiber bundle $(\mathcal{G}, M, \pi_{\mathcal{G}, M}, G)$ must be compatible with the algebraic structure of $\mathcal{G}$.

\medskip

If we fix $g=e$ in the second of the conditions characterizing generalized principal connections, we get:
\begin{align}\label{36}
A^I_\Theta(x, \sigma, h)=A^A_\Theta(x, \sigma, e)\partial^1_A\pi^I(e, h)
\end{align}
Actually, conditions \eqref{35} and \eqref{36} together with the assumption that $\eta$ is a Lie group fiber bundle connection are equivalent to the local characterizations of generalized principal connections, as seen in Theorem \ref{thm 7.6}, as it is easy to check using again that $\partial^1_A\pi^I\Big(e, \pi(g, h)\Big)=\partial^1_B\pi^I(g, h)\partial^1_A\pi^B(e, g)$.

Summarizing, we have a generalized principal connection environment if and only if:
\begin{align*}
A^I_\mu(x, \sigma, g)&=A^A_\mu(x, \sigma, e)\partial^1_A\pi^I(e, g)+\eta^I_\mu(x, g)
\\
A^I_\Theta(x, \sigma, g)&=A^A_\Theta(x, \sigma, e)\partial^1_A\pi^I(e, g)
\\
\eta^I_\mu\Big(x, \pi(g, h)\Big)&=\eta^A_\mu(x, g)\partial^1_A\pi^I(g, h)+\eta^A_\mu(x, h)\partial^2_A\pi^I(g, h)
\end{align*}
where we have omitted $\eta^I_\mu(x, e)=0$ since it has been shown that it is implied by the first of the above expressions. As a consequence, analogously to standard principal connections, we know all the generalized principal connections coefficients once we know both their behaviour in $(x, \sigma, e)$ and the Lie group fiber bundle connection $\eta$, which is the true novelty with respect to the usual theory, as already noted before. 

\medskip

As a side note, we point out that it would be worthwhile to understand if there is a relationship between generalized principal connections and {\em fibered connections} on fibered manifolds, as defined in \cite{FERRARIS (1984)}. In fact, both notions rely on the interplay of two distinct connections on some different fiber bundles (or fibered manifolds).

\begin{oss}\label{oss 2.9}
Again, the fact that the conditions above are well-posed is a direct consequence of the global nature of request \eqref{31}. It is, nevertheless, again instructive to explicitly check the invariance of the local conditions \eqref{35} and \eqref{36} under change of coordinates \eqref{1} and \eqref{11} (the invariance of the third conditions is checked in Appendix \ref{app A}), see Appendix \ref{app C}. 
\end{oss}

\begin{ex}\label{ex 2.4.2}
Since we have seen through Examples \ref{ex 1.2.3} and \ref{ex 1.3.2} that vector bundles are examples of generalized principal bundles arising from a free and proper fibered action of the vector bundles on themselves, we might wonder: what is a generalized principal connection on a vector bundle $(E, M, \pi_{E, M}, \mathbb{R}^l)$ associated to a fixed Lie group fiber bundle connection on $(E, M, \pi_{E, M}, \mathbb{R}^l)$ itself?

In Example \ref{ex 2.2.2} we have already proved that any Lie group fiber bundle connection on the vector bundle $(E, M, \pi_{E, M}, \mathbb{R}^l)$ is a linear connection that can be locally written in fibered coordinates $(x^\mu, v^I)$ satisfying \eqref{2.9} as:
\begin{align*}
\eta=dx^\mu \otimes \Big(\partial_{\mu}-\eta^I_{J\mu}(x)v^J\partial_{I}\Big)
\end{align*}
and that $\partial^1_A\pi^I(v, w)=\delta^I_A$ and $\partial^2_A\pi^I(v, w)=\delta^I_A$, since $\pi^I(v, w)=v^I+w^I$.

From the discussion above and still working in fibered coordinates $(x^\mu, v^I)$ (which are the generalized principal bundle coordinates of this case, thanks to the reasoning in Section \ref{sec 1.3} leading to \eqref{1.12}), a generalized principal connection on $(E, M, \pi_{E, M}, \mathbb{R}^l)$ can be locally written as:
\begin{align*}
\tau=dx^\mu \otimes \Big(\partial_{\mu}-A^I_\mu(x, v)\partial_{I}\Big)
\end{align*}
where we set $\partial_{\mu}=\frac{\partial}{\partial x^\mu}$ and $\partial_{I}=\frac{\partial}{\partial v^I}$, with the condition:
\begin{align*}
&&A^I_\mu\Big(x, \pi(v, w)\Big)&=A^A_\mu(x, v)\partial^1_A\pi^I(v, w)+\eta^A_\mu(x, w)\partial^2_A\pi^I(v, w)
\\
\, \Longleftrightarrow \, &&A^I_\mu(x, v+w)&=A^A_\mu(x, v)\delta^I_A+\eta^A_\mu(x, w)\delta^I_A=A^I_\mu(x, v)+\eta^I_\mu(x, w)
\end{align*}
where $\eta^I_\mu(x, w)=\eta^I_{J\mu}(x)w^J$ are the connection coefficients of the fixed Lie group fiber bundle connection. Fixing $v=0 \in \mathbb{R}^l$ and once set $\sigma^I_\mu(x)=A^I_\mu(x, 0)$, we get that:
\begin{align*}
 A^I_\mu(x, w)=\sigma^I_\mu(x)+\eta^I_{J\mu}(x)w^J \, \Longleftrightarrow \, \tau=dx^\mu \otimes \Big(\partial_{\mu}-\Big(\sigma^I_\mu(x)+\eta^I_{J\mu}(x)v^J\Big)\partial_{I}\Big)
\end{align*}

The conclusion is that the generalized principal connections on a vector bundle $(E, M, \pi_{E, M}, \mathbb{R}^l)$ are exactly the {\em affine connections} on it, where the respective underlying linear connections are the fixed Lie group fiber bundle connections on $(E, M, \pi_{E, M}, \mathbb{R}^l)$ itself.

This is an analogous result to one that can be found again in \cite{CASTRILLON LOPEZ (2023)}, nonetheless we proved it here in a completely different way.

\medskip

For the general theory of linear and affine connections we refer for example to \cite[Chapter 3]{KOBAYASHI (1963)}, to the whole article \cite{MODUGNO (1991)} and for a local coordinate approach to \cite[Chapter 1, Section 3]{GIACHETTA (2009)}.
\end{ex}

\section{Standard principal bundles as generalized principal bundles}\label{sec 2.5}

In this section we will use the local coordinate approach outlined above in order to prove that the construction reviewed in this article is eventually an instance of generalization of the usual notion of principal bundle and of principal connection, another key result present in \cite{CASTRILLON LOPEZ (2023)}. This confirms a natural intuition since the new geometric structures have been assembled having in mind the standard theory, as it is clear looking at Appendix \ref{app B}. Especially it will be shown that generalized principal connections reduce to usual principal connections on standard principal bundles.

\medskip

The main advantage of the preliminary work done to achieve the relevant local conditions is that the proofs will be direct and short, while Castrillón López and Rodríguez Abella had to carry out a complete check starting from definitions, in particular for what regards the aforementioned result concerning connections. Of course, the two proof strategies highlight different aspects of the geometric structure under study and they shall be useful in complementary issues.

\medskip

Let $(P, M, \pi_{P, M}, G)$ be now a standard principal bundle with smooth right action $m \colon P \times G \rightarrow P$. Remember that $M \cong S=P/G$, as seen in Section \ref{sec 1.1}.

We can consider the {\em trivial Lie group fiber bundle} $(M \times G, M, \pi_{M \times G, M}, G)$, where the (global) trivialization is given by $id_{M \times G}$ and where:
\begin{align*}
\pi_{M \times G, M} \colon M \times G \longrightarrow M \colon (x, g) \longmapsto x
\end{align*}
so that $(M \times G)_x=\{x\} \times G$, which has $(x, e)$ as identity element. In particular this means that $G^I(x, g)=g^I$. 

\medskip

Now we can define a right fibered action (as it turns out to be) of the trivial Lie group fiber bundle $(M \times G, M, \pi_{M \times G, M}, G)$ on the standard principal bundle $(P, M, \pi_{P, M}, G)$:
\begin{align*}
\Phi_{m} \colon P \times_M (M \times G) \longrightarrow P \colon \Big(p, (x, g)\Big) \longmapsto \Phi_{m}\Big(p, (x, g)\Big)=m(p, g)
\end{align*}
Since $m$ must be free and proper, it holds as well that $\Phi_{m}$ is free and proper as a fibered action.

\medskip

Thanks to Theorem \ref{thm 1.12}, this means that $(P, P/(M \times G), \pi_{P, P/(M \times G)}, G)$ is a generalized principal bundle. Furthermore the equivalence relation $\sim_{M \times G}$ on $P$ defining $P/(M \times G)$ is given by:
\begin{align*}
p_1 \sim_{M \times G} p_2  \, &\Longleftrightarrow \, \exists \, x \in M, \, \exists \, (x, g) \in \{x\} \times G \, | \, p_1, p_2 \in P_x, \, \Phi_{m}\Big(p_1, (x, g)\Big)=p_2 
\\
&\Longleftrightarrow \, \exists \, x \in M, \, \exists \, g \in G \, | \, p_1, p_2 \in P_x, \, m(p_1, g)=p_2
\\
&\Longleftrightarrow \, \exists \, g \in G \, | \, m(p_1, g)=p_2 \, \Longleftrightarrow \, p_1 \sim_{G} p_2
\end{align*}
where $\sim_{G}$ is the equivalence relation induced by the right action $m$ on $P$. In this sense at a set level $P/(M \times G)=P/G=S$ and since $S \cong M$ then $\pi_{P/(M \times G), M}$ in \eqref{7} is essentially the identity which is bijective and a submersion from Remark \ref{oss 1.13}, hence a diffeomorphism. As a consequence $P/(M \times G)=S \cong M$ also at a differentiable manifold level and $(P, M, \pi_{P, M}, G) \equiv (P, P/(M \times G), \pi_{P, P/(M \times G)}, G)$.

The conclusion is that we can always regard a standard principal bundle $(P, M, \pi_{P, M}, G)$ as a generalized principal bundle. 

\medskip

Since $P/(M \times G) \cong M$, the fibered coordinates $(x^\mu, \sigma^\Theta)$ introduced in Section \ref{sec 1.3} reduce just to the coordinates $x^\mu$ on $M$ and hence the transformation laws \eqref{11} of generalized principal bundle coordinates $(x^\mu, g^I)$ on $P$ are simply:
\begin{align*}
                 \left\{
		\begin{array}{l}
			x'^\mu=x'^\mu(x)
			\\
			g'^I=\pi^I\Big(\varphi(x), g\Big)
		\end{array}
		\right.	
\end{align*}
We have used here that $G(x, g)=g$. These results can be rephrased saying that generalized principal bundle coordinates on a standard principal bundle are actually standard principal bundle coordinates given by \eqref{100}.

\medskip

We fix now on the Lie group fiber bundle $(M \times G, M, \pi_{M \times G, M}, G)$ the {\em trivial Lie group fiber bundle connection} $\nu_0 \colon T(M \times G) \rightarrow V(M \times G)$ given by the linear maps:
\begin{align*}
(\nu_{0})_{(x, g)} \colon T_{(x, g)}(M \times G) \longrightarrow V_{(x, g)}(M \times G) \colon (a, v) \longmapsto (\nu_{0})_{(x, g)}(a, v)=(0, v)
\end{align*}
for all $(x, g) \in M \times G$ and for all $(a, v) \in T_{(x, g)}(M \times G) \equiv T_x(M) \times T_{g}(G)$. 

\medskip

At the level of the Lie group fiber bundle connection coefficients, this means that we are fixing always $(\eta_{0})^I_\mu(x, g)=0$. In fact, from \eqref{16} we know that for the horizontal lifts $(\eta_{0})_{(x, g)}$ associated with the trivial Lie group fiber bundle connection it holds that for all $(x, g) \in M \times G$ and for all $(a, v) \in T_{(x, g)}(M \times G) \equiv T_x(M) \times T_{g}(G)$:
\begin{align*}
&&\Big((\eta_{0})_{(x, g)} \circ T_{(x, g)}(\pi_{M \times G, M})\Big)(a, v)&=\Big(id_{T_{(x, g)}(M \times G)}-(\nu_{0})_{(x, g)}\Big)(a, v) 
\\
\, \Longleftrightarrow \, &&(\eta_{0})_{(x, g)}(a)&=(a, 0)
\end{align*}
From \eqref{22}, if $a=a^\mu\partial_{\mu} \in T_{x}(M)$ and in local coordinates $(x, g)$ is given by $(x^\mu, g^I)$, we then have:
\begin{align*}
a^\mu\partial_{\mu}=(\eta_{0})_{(x, g)}(a^\mu\partial_{\mu})&=a^\mu\Big(\partial_{\mu}-(\eta_{0})^I_\mu(x, g)\partial_{I}\Big) \, \Longleftrightarrow \, (\eta_{0})^I_\mu(x, g)=0
\end{align*}
As a check, observe that this request trivially satisfies the local conditions in order to have a Lie group fiber bundle connection, depicted in Theorem \ref{thm 6.4}.

\medskip

Finally, in the following theorem we see in which sense generalized principal connections reduce to usual principal connections on standard principal bundles:
\begin{thm}\label{thm 2.5.1}
Let $(P, M, \pi_{P, M}, G)$ be a standard principal bundle. Then all standard principal connections on $(P, M, \pi_{P, M}, G)$ are generalized principal connections associated to $\nu_0$ on $(P, M, \pi_{P, M}, G)$, considered as a generalized principal bundle, and viceversa.
\end{thm}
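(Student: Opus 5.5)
The plan is to work entirely in local coordinates, using Theorem \ref{thm 7.6} on one side and the standard local characterization of principal connections on the other. Since $(P, M, \pi_{P, M}, G) \equiv (P, P/(M \times G), \pi_{P, P/(M \times G)}, G)$ and $P/(M\times G) \cong M$, the fibered coordinates $(x^\mu,\sigma^\Theta)$ on $\mathcal{S}$ reduce to $x^\mu$. There are therefore no $\sigma^\Theta$ directions, and condition 2 of Theorem \ref{thm 7.6} is vacuous. The generalized principal bundle coordinates are standard principal bundle coordinates \eqref{100}, as shown above. Moreover, since $\Phi_m(p,(x,h)) = m(p,h)$ and the trivialization of $M\times G$ is the identity, Proposition \ref{prop 1.14} says that $m$ is locally $(x^\mu, g^I, h^J)\mapsto (x^\mu,\pi^I(g,h))$.

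Any connection on $(P, M, \pi_{P,M}, G)$ is written locally as $\tau = dx^\mu\otimes(\partial_\mu - A^I_\mu(x,g)\partial_I)$. The same object can be read either as a connection on the principal bundle or as a connection on the generalized principal bundle, because the two fiber bundle structures coincide. I will then recall the classical fact that $\tau$ is a standard principal connection if and only if its horizontal distribution is $m$-invariant, i.e.\ $T_p(m_h)(\tau_p(a)) = \tau_{m(p,h)}(a)$ for all $h\in G$. In coordinates, $m_h$ is $g\mapsto \pi(g,h)$, with tangent map $\partial_I \mapsto \partial^1_I\pi^J(g,h)\partial_J$. So the invariance reads
\begin{align*}
A^I_\mu\Big(x,\pi(g,h)\Big)=A^A_\mu(x,g)\,\partial^1_A\pi^I(g,h), \quad \forall\, g,h\in G.
\end{align*}
On the generalized side, take $\eta=\eta_0$, so that $(\eta_0)^A_\mu(x,h)=0$ as computed above. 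Condition 1 of Theorem \ref{thm 7.6} then becomes exactly the same identity. The two classes of connections therefore coincide chart by chart, which gives both directions of the theorem at once.

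The main point needing care is to justify the local characterization of standard principal connections as right-invariance of horizontal lifts in coordinates \eqref{100}, which is classical. As a cross-check, one can specialize Proposition \ref{prop 7.4} directly. With $\eta_0$, the lift $(\eta_0)_{(x,h)}(b)=(b,0)$, and $T_{(p,(x,h))}(\Phi_m)(\tau_p(a),(a,0)) = T_p(m_h)(\tau_p(a))$, because $\Phi_m$ does not depend on the $M$-component of $(x,h)$ beyond fiber compatibility. Condition \eqref{31} is thus literally $\tau_{m(p,h)} = T_p(m_h)\circ\tau_p$, the defining invariance of a principal connection. This intrinsic version avoids any coordinate subtlety. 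The only step to verify carefully is that tangent map computation on $P\times_M(M\times G)\cong P\times G$, which follows from the identification $T_{(p,(x,h))}(P \times_M (M\times G)) \equiv T_p(P)\times T_h(G)$.
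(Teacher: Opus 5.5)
Your proposal is correct and follows essentially the same route as the paper's proof. Since $\mathcal{S}\cong M$, there are no $\sigma^\Theta$ coordinates and condition 2 of Theorem \ref{thm 7.6} is empty, while inserting $(\eta_0)^A_\mu=0$ into condition 1 gives $A^I_\mu(x,\pi(g,h))=A^A_\mu(x,g)\,\partial^1_A\pi^I(g,h)$ in standard principal bundle coordinates; the paper simply cites this as the local characterization of principal connections, whereas you derive it from right-invariance of the horizontal lifts. Your coordinate-free cross-check, which reads \eqref{31} with $(\eta_0)_{(x,h)}(b)=(b,0)$ as $\tau_{m(p,h)}=T_p(m_h)\circ\tau_p$, is also sound and shows the same equivalence without needing the classical local lemma.
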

\begin{proof}

\medskip

\noindent Let a connection on $(P, M, \pi_{P, M}, G)$ be locally written as:
\begin{align*}
\tau=dx^\mu \otimes \Big(\partial_{\mu}-A^I_\mu(x, \sigma, g)\partial_{I}\Big)+d\sigma^\Theta \otimes \Big(\partial_{\Theta}-A^I_\Theta(x, \sigma, g)\partial_{I}\Big)
\end{align*}
From Section \ref{sec 2.3} we know that the connection coefficients $\{A^I_\mu(x, \sigma, g), A^I_\Theta(x, \sigma, g)\}$ define a generalized principal connection associated to $\nu_0$ on $(P, M, \pi_{P, M}, G)$, considered as a generalized principal bundle, if and only if they satisfy:
\begin{align*}
A^I_\mu\Big(x, \sigma, \pi(g, h)\Big)&=A^A_\mu(x, \sigma, g)\partial^1_A\pi^I(g, h)+(\eta_{0})^A_\mu(x, h)\partial^2_A\pi^I(g, h)
\\
A^I_\Theta\Big(x, \sigma, \pi(g, h)\Big)&=A^A_\Theta(x, \sigma, g)\partial^1_A\pi^I(g, h)
\end{align*}
where $(\eta_{0})^A_\mu(x, h)$ are the connection coefficients of the trivial Lie group fiber bundle connection on the trivial Lie group fiber bundle $(M \times G, M, \pi_{M \times G, M}, G)$. In particular, since there are only coordinates $x^\mu$ on $M$, we have just the connection coefficients $A^I_\mu(x, g)$, so that:
\begin{align}\label{38}
\tau=dx^\mu \otimes \Big(\partial_{\mu}-A^I_\mu(x, g)\partial_{I}\Big)
\end{align}
and, having $(\eta_{0})^A_\mu(x, h)=0$, the coefficients $A^I_\mu(x, g)$ define a generalized principal connection associated to $\nu_0$ on $(P, M, \pi_{P, M}, G)$ if and only if:
\begin{align}\label{39}
A^I_\mu\Big(x, \pi(g, h)\Big)=A^A_\mu(x, g)\partial^1_A\pi^I(g, h)
\end{align}
We now remark again that, from the transformation laws \eqref{100}, the generalized principal bundle coordinates $(x^\mu, g^I)$ are here principal bundle coordinates. Since, in these coordinates, conditions \eqref{38} and \eqref{39} define standard principal connections (see for example \cite[Chapter 18, Section 5]{FATIBENE (2024)}), the thesis follows.

\end{proof}

\section{Conclusion}\label{sec 2.6}

At this point we are already able to see how impressive the geometric framework of generalized principal bundles and generalized principal connection is: not only we recover in it the classical notions of principal bundles and principal connection, but we have seen in Example \ref{ex 2.4.2} that it is possible to deal {\em at the same time} with vector bundles and their affine connections. Actually this framework is for sure even richer, remembering that any Lie group fiber bundle is a generalized principal bundle, thanks to Example \ref{ex 1.3.2}.

\medskip

As stated in the Introduction, the next step should be to understand how generalized principal connections might fit in the fiber bundle treatment of classical field theories, a step that is now easy to make thanks to the results presented here. In this direction, it is key to note that it is possible to construct a fiber bundle of which the sections are in a one to one correspondence with generalized principal connections on a fixed generalized principal bundle $(P, \mathcal{S}, \pi_{P, \mathcal{S}}, G)$, along the lines of the gauge-natural formalism and the usual principal connection bundle $\mathrm{Con}(P)$ (compare with \cite{ECK (1981)}, \cite{FATIBENE (2003)} and \cite{KOLAR (1993)}). 

The construction of such a {\em generalized principal connection bundle}, and its use in a variational framework, will be addressed in the forthcoming paper \cite{WINTERROTH (2026)} by studying the transformation laws for the generalized principal connection coefficients and the notion of curvature for generalized principal connections. 

Since this corresponds to the path towards Yang-Mills equations, we will analyze the possibility of introducing a generalized Yang-Mills Lagrangian on the generalized principal connection bundle leading to {\em generalized Yang-Mills equations}, i.e.\,\,to a class of generalized gauge theories.

\appendix
\section{Invariance of the local coordinate characterization of Lie group fiber bundle connections}\label{app A}

We check here the invariance, under change of coordinates \eqref{1}, of the local conditions appearing in Theorem \ref{thm 6.4}. 

Comparing with \cite[Chapter 3, Section 5]{FATIBENE (2003)}, we know that if the transformation laws for the fibered coordinates are \eqref{1}, then it holds that the connection coefficients transform as:
\begin{align}\label{23}
{\eta'}^I_\mu(x', g')={\eta'}^I_\mu\Big(x'(x), G(x, g)\Big)={\bar{J}}^{\nu}_{\mu}\Big(x'(x)\Big)\Big({J}^I_J(x, g)\eta^J_\nu(x, g)-{J}^I_\nu(x, g)\Big) 
\end{align}
where ${J}^{\nu}_{\mu}(x)=\frac{\partial x'^\nu}{\partial x^\mu}(x)$, ${J}^I_\nu(x, g)=\frac{\partial G^I}{\partial x^\nu}(x, g)$, ${J}^I_J(x, g)=\frac{\partial G^I}{\partial g^J}(x, g)$ and ${\bar{J}}^{\nu}_{\mu}(x')=\frac{\partial x^\nu}{\partial x'^\mu}(x')$ is the inverse of ${J}^{\nu}_{\mu}$. Hence:
\begin{align*}
{\eta'}^I_\mu(x', e')={\bar{J}}^{\nu}_{\mu}\Big(x'(x)\Big)\Big({J}^I_J(x, e)\eta^J_\nu(x, e)-{J}^I_\nu(x, e)\Big)=0
\end{align*}
because we are assuming that $\eta^J_\nu(x, e)=0$ and since thanks to \eqref{2} it holds that:
\begin{align*}
{J}^I_\nu(x, e)=\frac{\partial G^I}{\partial x^\nu}(x, e)=\frac{\partial e^I}{\partial x^\nu}=0
\end{align*}
Moreover since:
\begin{align*}
\frac{\partial}{\partial x^\mu}G^I\Big(x, \pi(g, h)\Big)&={J}^I_\mu\Big(x, \pi(g, h)\Big)
\\
\frac{\partial}{\partial g^J}G^I\Big(x, \pi(g, h)\Big)&={J}^I_A\Big(x, \pi(g, h)\Big)\partial^1_J\pi^A(g, h)
\\
\frac{\partial}{\partial h^J}G^I\Big(x, \pi(g, h)\Big)&={J}^I_A\Big(x, \pi(g, h)\Big)\partial^2_J\pi^A(g, h)
\\
\frac{\partial}{\partial x^\mu}\pi^I\Big(G(x, g), G(x, h)\Big)&=\partial^1_A\pi^I\Big(G(x, g), G(x, h)\Big){J}^A_\mu(x, g)+\partial^2_A\pi^I\Big(G(x, g), G(x, h)\Big){J}^A_\mu(x, h)
\\
\frac{\partial}{\partial g^J}\pi^I\Big(G(x, g), G(x, h)\Big)&=\partial^1_A\pi^I\Big(G(x, g), G(x, h)\Big){J}^A_J(x, g)
\\
\frac{\partial}{\partial h^J}\pi^I\Big(G(x, g), G(x, h)\Big)&=\partial^2_A\pi^I\Big(G(x, g), G(x, h)\Big){J}^A_J(x, h)
\end{align*}
then using \eqref{3} we have that:
\begin{align*}
{J}^I_\mu\Big(x, \pi(g, h)\Big)&=\partial^1_A\pi^I\Big(G(x, g), G(x, h)\Big){J}^A_\mu(x, g)+\partial^2_A\pi^I\Big(G(x, g), G(x, h)\Big){J}^A_\mu(x, h)
\\
{J}^I_A\Big(x, \pi(g, h)\Big)\partial^1_J\pi^A(g, h)&=\partial^1_A\pi^I\Big(G(x, g), G(x, h)\Big){J}^A_J(x, g)
\\
{J}^I_A\Big(x, \pi(g, h)\Big)\partial^2_J\pi^A(g, h)&=\partial^2_A\pi^I\Big(G(x, g), G(x, h)\Big){J}^A_J(x, h)
\end{align*}
Assuming:
\begin{align*}
\eta^I_\mu\Big(x, \pi(g, h)\Big)=\eta^A_\mu(x, g)\partial^1_A\pi^I(g, h)+\eta^A_\mu(x, h)\partial^2_A\pi^I(g, h)
\end{align*}
this implies that:
\begin{align*}
{\eta'}^I_\mu\Big(x', \pi(g, h)'\Big)&={\bar{J}}^{\nu}_{\mu}\Big(x'(x)\Big)\Big({J}^I_J\Big(x, \pi(g, h)\Big)\eta^J_\nu\Big(x, \pi(g, h)\Big)-{J}^I_\nu\Big(x, \pi(g, h)\Big)\Big) 
\\
&={\bar{J}}^{\nu}_{\mu}\Big(x'(x)\Big)\Big({J}^I_J\Big(x, \pi(g, h)\Big)\eta^A_\nu(x, g)\partial^1_A\pi^J(g, h)
\\
&\,\,\,\,\,\,\,+{J}^I_J\Big(x, \pi(g, h)\Big)\eta^A_\nu(x, h)\partial^2_A\pi^J(g, h)-{J}^I_\nu\Big(x, \pi(g, h)\Big)\Big)
\\
&={\bar{J}}^{\nu}_{\mu}\Big(x'(x)\Big)\Big(\partial^1_J\pi^I\Big(G(x, g), G(x, h)\Big){J}^J_A(x, g)\eta^A_\nu(x, g)
\\
&\,\,\,\,\,\,\,+\partial^2_J\pi^I\Big(G(x, g), G(x, h)\Big){J}^J_A(x, h)\eta^A_\nu(x, h)-\partial^1_J\pi^I\Big(G(x, g), G(x, h)\Big){J}^J_\nu(x, g)
\\
&\,\,\,\,\,\,\,-\partial^2_J\pi^I\Big(G(x, g), G(x, h)\Big){J}^J_\nu(x, h)\Big)
\\
&=\partial^1_J\pi^I\Big(G(x, g), G(x, h)\Big){\bar{J}}^{\nu}_{\mu}\Big(x'(x)\Big)\Big({J}^J_A(x, g)\eta^A_\nu(x, g)-{J}^J_\nu(x, g)\Big)
\\
&\,\,\,\,\,\,\,+\partial^2_J\pi^I\Big(G(x, g), G(x, h)\Big){\bar{J}}^{\nu}_{\mu}\Big(x'(x)\Big)\Big({J}^J_A(x, h)\eta^A_\nu(x, h)-{J}^J_\nu(x, h)\Big)
\\
&=\partial^1_J\pi^I\Big(G(x, g), G(x, h)\Big){\eta'}^J_\mu(x', g')+\partial^2_J\pi^I\Big(G(x, g), G(x, h)\Big){\eta'}^J_\mu(x', h')
\\
&={\eta'}^J_\mu(x', g')\partial^1_J\pi^I(g', h')+{\eta'}^J_\mu(x', h')\partial^2_J\pi^I(g', h')
\end{align*}
where we used several times \eqref{23}.

\section{Heuristic comparison between generalized principal connections and standard principal connections}\label{app B}

\medskip

In \cite{CASTRILLON LOPEZ (2023)} the following technical results are presented:
\begin{lem}\label{lem 1.18}
Let $(\mathfrak{a}, M, \pi_{\mathfrak{a}, M}, \mathfrak{g})$ be the Lie algebra fiber bundle attached to a Lie group fiber bundle $(\mathcal G, M, \pi_{\mathcal G, M}, G)$ acting on a fiber bundle $(P, M, \pi_{P, M}, F)$ and generating the generalized principal bundle $(P, \mathcal{S}, \pi_{P, \mathcal{S}}, G)$, where $\mathcal{S}=P/\mathcal{G}$. Then:
\begin{enumerate}
\item The following map is a vertical vector bundle isomorphism over $P$:
\begin{align*}
f \colon P \times_M \mathfrak{a} \longrightarrow V_{\pi_{P, \mathcal{S}}}(P) \colon (p, \xi) \longmapsto \xi^*_p
\end{align*}
where $\xi^*$ are the generalized fundamental fields, as defined in Section \ref{sec 1.3}.
\item Given $(\gamma, \xi) \in \mathcal{G} \times_M \mathfrak{a}$, we have that $T(\Phi_{\gamma})(\xi^*)=\Big(Ad_{\gamma^{-1}}(\xi)\Big)^*$, in the sense that $\forall \, p \in P_x$:
\begin{align}\label{13}
T(\Phi_{\gamma})(\xi^*)(p)=T_p(\Phi_{\gamma})(\xi^*_p)=\Big(Ad_{\gamma^{-1}}(\xi)\Big)^*_{\Phi(p, \gamma)}
\end{align}
where $c_{\gamma}(\delta)=\gamma \cdot \delta \cdot \gamma^{-1}$, $\forall \, (\gamma, \delta) \in \mathcal{G} \times_M \mathcal{G}$, and $Ad_{\gamma}(\xi)=T_{e_x}(c_{\gamma})(\xi) \in \mathfrak{a}_x$, with $x=\pi_{\mathcal{G}, M}(\gamma)=\pi_{\mathfrak{a}, M}(\xi)$.
\end{enumerate}
\end{lem}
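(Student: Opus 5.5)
For part 1, I would first observe that $f$ is well defined and fiberwise linear. For $(p,\xi)\in P\times_M\mathfrak{a}$ with $\pi_{P,M}(p)=x$, we have $\xi^*_p=T_{e_x}(\Phi_p)(\xi)$, and $T_{e_x}(\Phi_p)$ is linear, so $f$ is linear on each fiber $\{p\}\times\mathfrak{a}_x$. It also covers $id_P$ and lands in $V_{\pi_{P,\mathcal{S}},p}(P)$, by the remark after Definition \ref{defn 1.17}. Fiberwise bijectivity comes from Lemma \ref{lem 1.11}: $(P_x,\mathcal{S}_x,\pi_{P_x,\mathcal{S}_x},\mathcal{G}_x)$ is a standard principal bundle, so $\Phi_p\colon\mathcal{G}_x\to (P_x)_{[p]_{\mathcal{G}_x}}$ is a diffeomorphism. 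Hence $T_{e_x}(\Phi_p)\colon\mathfrak{a}_x\to T_p(P_{[p]_{\mathcal{G}}})=V_{\pi_{P,\mathcal{S}},p}(P)$ is a linear isomorphism; both spaces have dimension $l$. Smoothness I would check in coordinates. Take generalized principal bundle coordinates $(x^\mu,\sigma^\Theta,g^I)$ and the induced coordinates $\xi^A$ on $\mathfrak{a}$ coming from $\Psi=T(\psi_{\mathcal{G}})$. By Proposition \ref{prop 1.14}, $\Phi$ is right multiplication, so
\begin{align*}
f(x,\sigma,g,\xi)=\xi^A\,\partial^2_A\pi^I(g,e)\,\partial_I .
\end{align*}
This is smooth with invertible matrix $\partial^2_A\pi^I(g,e)$, since $\pi(g,\cdot)$ is a diffeomorphism. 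A smooth vector bundle morphism over $id_P$ that is bijective on fibers is a vector bundle isomorphism, which concludes part 1.

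For part 2, I would argue pointwise on the fiber $P_x$, using only the fibered action axioms of Definition \ref{defn 1.9}. Fix $\gamma\in\mathcal{G}_x$ and $p\in P_x$. For every $\delta\in\mathcal{G}_x$,
\begin{align*}
(\Phi_\gamma\circ\Phi_p)(\delta)=\Phi(\Phi(p,\delta),\gamma)=\Phi(p,\delta\cdot\gamma)=\Phi\big(\Phi(p,\gamma),\gamma^{-1}\cdot\delta\cdot\gamma\big)=\big(\Phi_{\Phi(p,\gamma)}\circ c_{\gamma^{-1}}\big)(\delta).
\end{align*}
So $\Phi_\gamma\circ\Phi_p=\Phi_{\Phi(p,\gamma)}\circ c_{\gamma^{-1}}$ as maps $\mathcal{G}_x\to P_x$. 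Differentiating at $e_x$ and using $c_{\gamma^{-1}}(e_x)=e_x$ gives
\begin{align*}
T_p(\Phi_\gamma)\big(T_{e_x}(\Phi_p)(\xi)\big)=T_{e_x}(\Phi_{\Phi(p,\gamma)})\big(Ad_{\gamma^{-1}}(\xi)\big),
\end{align*}
which is exactly \eqref{13}. One point needs care: $\Phi_\gamma$ is regarded as a map $P_x\to P_x$. Since $\xi^*_p$ is tangent to $P_x$, $T_p(\Phi_\gamma)(\xi^*_p)$ is unambiguous, and it agrees with $T_{(p,\gamma)}(\Phi)(\xi^*_p,0)$ as used elsewhere.

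I expect the main obstacle to be the smoothness and global well-definedness in part 1, namely that $P\times_M\mathfrak{a}$ and $V_{\pi_{P,\mathcal{S}}}(P)$ are genuinely vector bundles and that $f$ is smooth across charts. I would handle this with the coordinate formula above together with the transformation laws \eqref{1} and \eqref{11}. The key observation is that $f$ is intrinsically defined, so its local expressions automatically patch, and only local smoothness and invertibility need checking. Part 2 is then a purely algebraic identity.
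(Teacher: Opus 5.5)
Your proposal is correct. For part 1 it is the paper's own argument (Remark \ref{oss B.2}): $f_p=T_{e_x}(\Phi_p)$ is a linear isomorphism onto $V_{\pi_{P,\mathcal{S}},p}(P)$ because $\Phi_p$ is a diffeomorphism onto the orbit by Lemma \ref{lem 1.11}, and your coordinate formula via Proposition \ref{prop 1.14} supplies the smoothness of $f$ that the paper leaves implicit. The paper does not prove part 2 but quotes it from \cite{CASTRILLON LOPEZ (2023)}; your identity $\Phi_\gamma\circ\Phi_p=\Phi_{\Phi(p,\gamma)}\circ c_{\gamma^{-1}}$, differentiated at $e_x$, is the standard proof and is of the same kind as the relation $\Phi_{\Phi(p,\gamma)}\circ L_{\gamma^{-1}}=\Phi_p$ used in the proof of Proposition \ref{prop 7.4}.
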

\begin{oss}\label{oss B.2}
We can note here that the first point of the statement is a direct consequence of the fact that:
\begin{align*}
f_p \colon (P \times_M \mathfrak{a})_p=\mathfrak{a}_{\pi_{P, M}(p)}=\mathfrak{a}_{x} \longrightarrow V_{{\pi_{P, \mathcal{S}}}, p}(P) \colon \xi \longmapsto \xi^*_p
\end{align*}
is a linear isomorphism, which can be proved noting that $f_p=T_{e_x}(\Phi_{p})$ and that we have already observed in Section \ref{sec 1.3} that $\Phi_{p}$ is a diffeomorphism. This is an alternative proof to the one appeared in \cite{CASTRILLON LOPEZ (2023)}.
\end{oss}
\begin{lem}\label{lem 2.4}
There is a one to one correspondence between (fiber bundle) connections on a generalized principal bundle $(P, \mathcal{S}, \pi_{P, \mathcal{S}}, G)$, seen as vector bundle morphisms $\omega \colon T(P) \longrightarrow V_{\pi_{P, \mathcal{S}}}(P)$, and $1$-forms $\hat{\omega} \in \Omega^1(P, P \times_M \mathfrak{a})$ with values in $P \times_M \mathfrak{a}$, where $(\mathfrak{a}, M, \pi_{\mathfrak{a}, M}, \mathfrak{g})$ is the Lie algebra fiber bundle attached to the relevant Lie group fiber bundle $(\mathcal G, M, \pi_{\mathcal G, M}, G)$, having the property that for any $(p, \xi)\in P \times_M \mathfrak{a}$:
\begin{align}\label{24}
\hat{\omega}_p(\xi^*_p)=\xi
\end{align}
with $\hat{\omega}_p \colon T_p(P) \longrightarrow (P \times_M \mathfrak{a})_p=\mathfrak{a}_{x}$ and $\xi^*$ is the generalized fundamental field on $P$ associated with $\xi$, once set $\pi_{P, M}(p)=\pi_{\mathfrak{a}, M}(\xi)=x \in M$. Moreover, $\hat{\omega}$ satisfies:
\begin{align}\label{25}
\hat{\omega}_{\Phi(p, \gamma)}\Big(T_p(\Phi_{\gamma})(\xi^*_p)\Big)=Ad_{\gamma^{-1}}\Big(\hat{\omega}_p(\xi_p^*)\Big)    
\end{align}
for all $(p, \gamma,\xi) \in P \times_M \mathcal{G} \times_M \mathfrak{a}$.
\end{lem}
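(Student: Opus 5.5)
The plan is to prove the two halves of Lemma \ref{lem 2.4} separately: first the bijection between vertical projections $\omega$ and $\mathfrak{a}$-valued forms $\hat{\omega}$ satisfying \eqref{24}, then the equivariance property \eqref{25}. Both rest on Lemma \ref{lem 1.18}.

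For the correspondence, I would use the fibrewise linear isomorphism $f_p=T_{e_x}(\Phi_p)\colon \mathfrak{a}_x\to V_{\pi_{P,\mathcal{S}},p}(P)$ from point 1 of Lemma \ref{lem 1.18} and Remark \ref{oss B.2}. Given $\omega$, set
\begin{align*}
\hat{\omega}_p=f_p^{-1}\circ\omega_p\colon T_p(P)\longrightarrow \mathfrak{a}_x .
\end{align*}
Conversely, given $\hat{\omega}$, set $\omega_p=f_p\circ\hat{\omega}_p$. The two constructions are clearly mutually inverse.

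Smoothness follows because $f$ is a vector bundle isomorphism over $P$, so $f^{-1}$ is smooth as well. The defining property of connections corresponds exactly to \eqref{24}. Indeed, $\omega_p|_{V}=\mathrm{id}$ means $\omega_p(\xi^*_p)=\xi^*_p=f_p(\xi)$, which is equivalent to $\hat{\omega}_p(\xi^*_p)=\xi$. For the converse direction, \eqref{24} gives $\omega_p(f_p(\xi))=f_p(\xi)$, and since $f_p$ is surjective onto the vertical space, $\omega_p$ restricts to the identity there. It also lands in $V_{\pi_{P,\mathcal{S}}}(P)$ and is linear, so it is a connection in the sense of Section \ref{sec 2.3}.

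For \eqref{25}, I would compute both sides directly. By \eqref{13}, $T_p(\Phi_\gamma)(\xi^*_p)=(Ad_{\gamma^{-1}}\xi)^*_{\Phi(p,\gamma)}$. Applying \eqref{24} at the point $\Phi(p,\gamma)$, the left side of \eqref{25} equals $Ad_{\gamma^{-1}}\xi$. Applying \eqref{24} at $p$, the right side equals $Ad_{\gamma^{-1}}(\xi)$ as well, so the two sides agree.

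The only real content is therefore \eqref{13}, which is assumed from Lemma \ref{lem 1.18}. If it needed to be proved, I would fix $\gamma\in\mathcal{G}_x$ and write
\begin{align*}
\Phi_\gamma\circ\Phi_p(\delta)=\Phi(p,\delta\gamma)=\Phi\Big(\Phi(p,\gamma),\gamma^{-1}\delta\gamma\Big)=\Phi_{\Phi(p,\gamma)}\circ c_{\gamma^{-1}}(\delta).
\end{align*}
Differentiating at $e_x$ then gives \eqref{13}. The step I expect to be the main obstacle is the smoothness of $f^{-1}$. I would handle it in the generalized principal bundle coordinates of Proposition \ref{prop 1.14}, where $f$ is given by the invertible matrix $\partial^2_A\pi^I(g,e)$, which depends smoothly on $(x,\sigma,g)$.
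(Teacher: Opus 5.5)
Your proposal is correct and follows the paper's route. The paper defines the correspondence as $\omega=f\circ\hat{\omega}$ through the vector bundle isomorphism $f$ of Lemma~\ref{lem 1.18}, uses \eqref{24} to encode $\omega|_{V_{\pi_{P,\mathcal{S}}}(P)}=\mathrm{id}$, and obtains \eqref{25} from \eqref{13}. Your two extras are correct additions that the paper simply takes from Lemma~\ref{lem 1.18}: the derivation of \eqref{13} from $\Phi_\gamma\circ\Phi_p=\Phi_{\Phi(p,\gamma)}\circ c_{\gamma^{-1}}$, and the coordinate check that $f^{-1}$ is smooth.
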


The one to one correspondence in the previous lemma can be proved using that, by Lemma \ref{lem 1.18}, $f(p, \xi)=\xi^*_p$ is a vertical vector bundle isomorphism over $P$. The correspondence between vector bundle morphisms $\omega$ and $1$-forms $\hat{\omega}$ is given by the formula $\omega=f \circ \hat{\omega}$ and is characterized by property \eqref{24} if we are precisely taking into account connections on $(P, \mathcal{S}, \pi_{P, \mathcal{S}}, G)$. Relation \eqref{25} is instead implied by \eqref{13}. In particular we have that for all $p \in P$, for all $z \in T_p(P)$:
\begin{align*}
\omega_p(z)=\Big(f_p \circ \hat{\omega}_p\Big)(z)=\Big(\hat{\omega}_p(z)\Big)^*_p
\end{align*}
since $f_p(\xi)=\xi^*_p$, for all $\xi \in \mathfrak{a}_{\pi_{P, M}(p)}=\mathfrak{a}_{x}$.

It is worth noting, incidentally, that Definition \ref{defn 2.6} is well-posed since, thanks to \eqref{13}, it holds:
\begin{align}\label{27}
T_{p}(\Phi _{\gamma})\Big(\omega_p(z)\Big)=T_{p}(\Phi _{\gamma})\Big(\Big(\hat{\omega}_p(z)\Big)^*_p\Big)=\Big(Ad_{\gamma^{-1}}\Big(\hat{\omega}_p(z)\Big)\Big)^*_{\Phi(p, \gamma)} \in V_{\pi_{P, \mathcal{S}}, \Phi(p, \gamma)}(P)
\end{align}

\medskip

With the above in mind we can now consider that in order to assign a principal connection on a standard principal bundle $(P, M, \pi_{P, M}, G)$ one needs to define a {\em connection form}, that is a $1$-form $\omega \in \Omega^1(P, \mathfrak{g})$ with values in $\mathfrak{g}$, the Lie algebra of $G$, which means that one needs to define a vertical vector bundle morphism over $P$:
\begin{equation*}
\xymatrixrowsep{0.56in}
\xymatrixcolsep{0.8in}
	\xymatrix{
	   {T(P)} \ar[r]^{\omega} \ar[d]_{\pi_{T(P), P}} & {P \times \mathfrak{g}} \ar[d]^{\pi_{P \times \mathfrak{g}, P}} \\
	   {P}  \ar[r]_{{id}_P} & {P}  \\}
\end{equation*}
such that:
\begin{enumerate}
\item $\forall \, \xi \in \mathfrak{g}$, $\omega_p(\xi^*_p)=\xi$, where $\xi^*$ are the {\em standard} fundamental fields, as defined in Section \ref{sec 1.3}.
\item $\forall \, p \in P$, $\forall \, g \in G$, $\forall \, z \in T_p(P)$:
\begin{align}\label{28}
\omega_{m(p, g)}\Big(T_p(\mathcal{R}_g)(z)\Big)=Ad_{g^{-1}}\Big(\omega_p(z)\Big)
\end{align}
where $m$ is the smooth right action of $G$ on $P$, $Ad_g=T_{e}(c_g)$, $c_g=L_g\circ R_{g^{-1}}$, $L_g$ and $R_g$ are respectively the left and right multiplication in $G$ and $\mathcal{R}_g(p)=m(p, g)$.
\end{enumerate}
\needspace{3\baselineskip}
As a result, Definition \ref{defn 2.6} extends this classical notion of $Ad$-equivariance. In fact, thanks to Lemma \ref{lem 2.4} and \eqref{27}, the requirements in the new definition can be rephrased for a fixed connection $\nu$ on the Lie group fiber bundle $(\mathcal{G}, M, \pi_{\mathcal{G}, M}, G)$ in terms of $1$-forms $\hat{\omega} \in \Omega^1(P, P \times_M \mathfrak{a})$ as:
\begin{enumerate}
\item $\forall \, (p, \xi)\in P \times_M \mathfrak{a}$, $\hat{\omega}_p(\xi^*_p)=\xi$, where $\xi^*$ are the generalized fundamental fields, as defined in Section \ref{sec 1.3}.
\item $\forall \, (p, \gamma) \in P \times_M \mathcal{G}$ and  $\forall \, (z, v) \in T_{(p, \gamma)}(P \times_M \mathcal{G}) \equiv T_{p}(P) \times_{T_{x}(M)} T_{\gamma}(\mathcal{G})$, that is $T_{p}(\pi_{P, M})(z)=T_{\gamma}(\pi_{\mathcal{G}, M})(v)$, with $\pi_{P, M}(p)=\pi_{\mathcal{G}, M}(\gamma)=x$:
\begin{align*}
\hat{\omega}_{\Phi(p, \gamma)}\Big(T_{(p, \gamma)}(\Phi)(z, v)\Big)&=f^{-1}_{\Phi(p, \gamma)}\Big(\omega_{\Phi(p, \gamma)}\Big(T_{(p, \gamma)}(\Phi)(z, v)\Big)\Big)
\\
&=f^{-1}_{\Phi(p, \gamma)}\Big(T_{p}(\Phi _{\gamma})\Big(\omega_p(z)\Big)+\Big(T_{\gamma}(L_{\gamma^{-1}})\Big(\nu_{\gamma}(v)\Big)\Big)^*_{\Phi(p, \gamma)}\Big)
\\
&=f^{-1}_{\Phi(p, \gamma)}\Big(\Big(Ad_{\gamma^{-1}}\Big(\hat{\omega}_p(z)\Big)\Big)^*_{\Phi(p, \gamma)}+\Big(Ad_{\gamma^{-1}}\Big(\hat{\nu}_{\gamma}(v)\Big)\Big)^*_{\Phi(p, \gamma)}\Big)
\\
&=f^{-1}_{\Phi(p, \gamma)}\Big(\Big(Ad_{\gamma^{-1}}\Big(\hat{\omega}_p(z)+\hat{\nu}_{\gamma}(v)\Big)\Big)^*_{\Phi(p, \gamma)}\Big)
\\
&=Ad_{\gamma^{-1}}\Big(\hat{\omega}_p(z)+\hat{\nu}_{\gamma}(v)\Big)
\end{align*}
where, observing that $V_{\gamma}(\mathcal{G}) \equiv T_{\gamma}(\mathcal{G}_{\pi_{\mathcal{G}, M}(\gamma)})=T_{\gamma}(\mathcal{G}_{x})$, we have set:
\begin{align*}
\hat{\nu}_{\gamma}=T_{\gamma}(R_{\gamma^{-1}}) \circ \nu_{\gamma} \colon T_{\gamma}(\mathcal{G}) \longrightarrow T_{\gamma}(\mathcal{G}_{x}) \longrightarrow T_{e_x}(\mathcal{G}_{x})=\mathfrak{a}_x \colon v \longmapsto \nu_{\gamma}(v) \longmapsto T_{\gamma}(R_{\gamma^{-1}})\Big(\nu_{\gamma}(v)\Big)
\end{align*}
\end{enumerate}
Note now that considering a smooth curve $\Big(\alpha_1(t), \alpha_2(t)\Big) \colon I \subseteq \mathbb{R} \rightarrow P \times_M \mathcal{G}$ with $\Big(\alpha_1(0), \alpha_2(0)\Big)=(p, \gamma)$ and $\Big(\alpha'_1(0), \alpha'_2(0)\Big)=(z, 0)$, where $T_{p}(\pi_{P, M})(z)=0$, so that we can directly fix $\alpha_2(t)=\gamma$, for each $t \in I$, we then have:
\begin{align*}
T_{(p, \gamma)}(\Phi)(z, 0)&=\left.\frac{d}{dt}\right|_{t=0}\Big(\Phi \circ \Big(\alpha_1(t), \alpha_2(t)\Big)\Big)=\left.\frac{d}{dt}\right|_{t=0}\Phi_{\gamma}\Big(\alpha_1(t)\Big)=T_{p}(\Phi_{\gamma})(z)
\end{align*}
which implies:
\begin{align*}
\hat{\omega}_{\Phi(p, \gamma)}\Big(T_{p}(\Phi_{\gamma})(z)\Big)=\hat{\omega}_{\Phi(p, \gamma)}\Big(T_{(p, \gamma)}(\Phi)(z, 0)\Big)=Ad_{\gamma^{-1}}\Big(\hat{\omega}_p(z)+\hat{\nu}_{\gamma}(0)\Big)=Ad_{\gamma^{-1}}\Big(\hat{\omega}_p(z)\Big)
\end{align*}
This means that we are asking the connection to ensure that \eqref{25} holds for all $z \in \mathrm{Ker}\Big(T_{p}(\pi_{P, M})\Big)=V_{\pi_{P, M}, \, p}(P)$. We have then found eventually an algebraic expression similar to \eqref{28}. This heuristic reasoning becomes rigorous through Theorem \ref{thm 2.5.1} in Section \ref{sec 2.5}.

Summing up, a $1$-form $\hat{\omega} \in \Omega^1(P, P \times_M \mathfrak{a})$, that can be seen as a vector bundle morphism over the identity ${id}_P \colon P \rightarrow P$:
\begin{equation*}
\xymatrixrowsep{0.56in}
\xymatrixcolsep{0.8in}
	\xymatrix{
	   {T(P)} \ar[r]^{\hat{\omega}} \ar[d]_{\pi_{T(P), P}} & {P \times_M \mathfrak{a}} \ar[d]^{\pi_{P \times_M \mathfrak{a}, P}} \\
	   {P}  \ar[r]_{{id}_P} & {P}  \\}
\end{equation*}
satisfying the above conditions shall be thereby called {\em generalized connection form}.

\section{Invariance of the local coordinate characterization of generalized principal connections}\label{app C}

We check here the invariance, under change of coordinates \eqref{1} and \eqref{11}, of the local conditions \eqref{35} and \eqref{36}.

We already know that under the transformation laws \eqref{1} the Lie group fiber bundle connection coefficients behave as:
\begin{align*}
{\eta'}^I_\mu(x', g')={\eta'}^I_\mu\Big(x'(x), G(x, g)\Big)={\bar{J}}^{\nu}_{\mu}\Big(x'(x)\Big)\Big({J}^I_J(x, g)\eta^J_\nu(x, g)-{J}^I_\nu(x, g)\Big) 
\end{align*}
Similarly, comparing with \cite[Chapter 3, Section 5]{FATIBENE (2003)} and knowing that the transformation laws for the generalized principal bundle coordinates are \eqref{11}, then:
\begin{align*}
A'^I_\mu(x', \sigma', g')&=A'^I_\mu\Big(x'(x), \Sigma(x, \sigma), \pi\Big(\varphi(x, \sigma), G(x, g)\Big)\Big)
\\
&={\bar{J}}^{\nu}_{\mu}\Big(x'(x)\Big)\Big(\frac{\partial}{\partial g^J}\Big[\pi^I\Big(\varphi(x, \sigma), G(x, g)\Big)\Big]A^J_\nu(x, \sigma, g)-\frac{\partial}{\partial x^\nu}\Big[\pi^I\Big(\varphi(x, \sigma), G(x, g)\Big)\Big]\Big)
\\
&\,\,\,\,\,\,-{\bar{J}}^{\nu}_{\mu}\Big(x'(x)\Big){J}^\Xi_\nu(x, \sigma){\bar{J}}^{\Lambda}_{\Xi}\Big(x'(x), \Sigma^\Theta(x, \sigma)\Big)\Big(\frac{\partial}{\partial g^J}\Big[\pi^I\Big(\varphi(x, \sigma), G(x, g)\Big)\Big]A^J_\Lambda(x, \sigma, g)
\\
&\,\,\,\,\,\,-\frac{\partial}{\partial \sigma^\Lambda}\Big[\pi^I\Big(\varphi(x, \sigma), G(x, g)\Big)\Big]\Big)
\\
A'^I_\Theta(x', \sigma', g')&=A'^I_\Theta\Big(x'(x), \Sigma(x, \sigma), \pi\Big(\varphi(x, \sigma), G(x, g)\Big)\Big)
\\
&={\bar{J}}^{\Lambda}_{\Theta}\Big(x'(x),\Sigma^\Theta(x, \sigma)\Big)\Big(\frac{\partial}{\partial g^J}\Big[\pi^I\Big(\varphi(x, \sigma), G(x, g)\Big)\Big]A^J_\Lambda(x, \sigma, g)
\\
&\,\,\,\,\,\,-\frac{\partial}{\partial \sigma^\Lambda}\Big[\pi^I\Big(\varphi(x, \sigma), G(x, g)\Big)\Big]\Big)
\end{align*}
where ${J}^\Lambda_\mu(x, \sigma)=\frac{\partial \Sigma^\Lambda}{\partial x^\mu}(x, \sigma)$, ${J}^\Lambda_\Xi(x, \sigma)=\frac{\partial \Sigma^\Lambda}{\partial \sigma^\Xi}(x, \sigma)$ and ${\bar{J}}^{\Lambda}_{\Xi}(x', \sigma')$ is the inverse of ${J}^\Lambda_\Xi$, having used that:
\begin{align*}
\left(
\begin{matrix}
J^\nu_\mu & 0 \\
{J}^\Lambda_\mu & {J}^\Lambda_\Xi
\end{matrix}
\right)^{-1}
=
\left(
\begin{matrix}
\bar{J}^\nu_\mu & 0 \\
-\bar{J}^\Lambda_\Xi{J}^\Xi_\nu\bar{J}^\nu_\mu & \bar{J}^\Lambda_\Xi
\end{matrix}
\right)
\end{align*}
Note that:
\begin{align*}
\left.\frac{\partial}{\partial g^J}\Big[\pi^I\Big(\varphi(x, \sigma), G(x, g)\Big)\Big]\right|_{g=e}&=\partial^2_A\pi^I\Big(\varphi(x, \sigma), e\Big){J}^A_J(x, e)
\\
\left.\frac{\partial}{\partial x^\nu}\Big[\pi^I\Big(\varphi(x, \sigma), G(x, g)\Big)\Big]\right|_{g=e}&=\partial^1_A\pi^I\Big(\varphi(x, \sigma), e\Big)\partial_{\nu}\varphi^A(x, \sigma)
\\
\left.\frac{\partial}{\partial \sigma^\Lambda}\Big[\pi^I\Big(\varphi(x, \sigma), G(x, g)\Big)\Big]\right|_{g=e}&=\partial^1_A\pi^I\Big(\varphi(x, \sigma), e\Big)\partial_{\Lambda}\varphi^A(x, \sigma)
\end{align*}
since ${J}^A_\nu(x, e)=0$, by Appendix \ref{app A}. In this way we get also that:
\begin{align*}
A'^I_\mu(x', \sigma', e')&=A'^I_\mu\Big(x'(x), \Sigma(x, \sigma), \pi\Big(\varphi(x, \sigma), e\Big)\Big)=A'^I_\mu\Big(x'(x), \Sigma(x, \sigma), \varphi(x, \sigma)\Big)
\\
&={\bar{J}}^{\nu}_{\mu}\Big(x'(x)\Big)\Big(\partial^2_A\pi^I\Big(\varphi(x, \sigma), e\Big){J}^A_J(x, e)A^J_\nu(x, \sigma, e)-\partial^1_A\pi^I\Big(\varphi(x, \sigma), e\Big)\partial_{\nu}\varphi^A(x, \sigma)\Big)
\\
&\,\,\,\,\,\,-{\bar{J}}^{\nu}_{\mu}\Big(x'(x)\Big){J}^\Xi_\nu(x, \sigma){\bar{J}}^{\Lambda}_{\Xi}\Big(x'(x), \Sigma^\Theta(x, \sigma)\Big)\Big(\partial^2_A\pi^I\Big(\varphi(x, \sigma), e\Big){J}^A_J(x, e)A^J_\Lambda(x, \sigma, e)
\\
&\,\,\,\,\,\,-\partial^1_A\pi^I\Big(\varphi(x, \sigma), e\Big)\partial_{\Lambda}\varphi^A(x, \sigma)\Big)
\\
A'^I_\Theta(x', \sigma', e')&=A'^I_\Theta\Big(x'(x), \Sigma(x, \sigma), \pi\Big(\varphi(x, \sigma), e\Big)\Big)=A'^I_\Theta\Big(x'(x), \Sigma(x, \sigma), \varphi(x, \sigma)\Big)
\\
&={\bar{J}}^{\Lambda}_{\Theta}\Big(x'(x),\Sigma^\Theta(x, \sigma)\Big)\Big(\partial^2_A\pi^I\Big(\varphi(x, \sigma), e\Big){J}^A_J(x, e)A^J_\Lambda(x, \sigma, e)
\\
&\,\,\,\,\,\,-\partial^1_A\pi^I\Big(\varphi(x, \sigma), e\Big)\partial_{\Lambda}\varphi^A(x, \sigma)\Big)
\end{align*}
Having the above results in mind and assuming:
\begin{align*}
A^I_\mu(x, \sigma, g)&=A^A_\mu(x, \sigma, e)\partial^1_A\pi^I(e, g)+\eta^I_\mu(x, g)
\\
A^I_\Theta(x, \sigma, g)&=A^A_\Theta(x, \sigma, e)\partial^1_A\pi^I(e, g)
\end{align*}
we want to prove that:
\begin{align*}
A'^I_\mu(x', \sigma', g')&=A'^A_\mu(x', \sigma', e')\partial^1_A\pi^I\Big(\varphi(x, \sigma), G(x, g)\Big)+\eta'^A_\mu\Big(x', G(x, g)\Big)\partial^2_A\pi^I\Big(\varphi(x, \sigma), G(x, g)\Big)
\\
A'^I_\Theta(x', \sigma', g')&=A'^A_\Theta(x', \sigma', e')\partial^1_A\pi^I\Big(\varphi(x, \sigma), G(x, g)\Big)
\end{align*}
Indeed, the conditions:
\begin{align*}
A'^I_\mu\Big(x', \sigma', \pi(g', h')\Big)&=A'^A_\mu(x', \sigma', g')\partial^1_A\pi^I(g', h')+\eta'^A_\mu(x', h')\partial^2_A\pi^I(g', h')
\\
A'^I_\Theta\Big(x', \sigma', \pi(g', h')\Big)&=A'^A_\Theta(x', \sigma', g')\partial^1_A\pi^I(g', h')
\end{align*}
would give for $g'=e'=\varphi(x, \sigma)$ exactly the above expressions, thanks to the transformation laws \eqref{34}. With this aim, it is possible to expand:
\begin{align*}
\frac{\partial}{\partial g^J}\Big[\pi^I\Big(\varphi(x, \sigma), G(x, g)\Big)\Big]A^J_\nu(x, \sigma, g)-\frac{\partial}{\partial x^\nu}\Big[\pi^I\Big(\varphi(x, \sigma), G(x, g)\Big)\Big]
\end{align*}
after lengthy computations as:
\begin{align*}
&\partial^1_B\pi^I\Big(\varphi(x, \sigma), G(x, g)\Big)\Big(\partial^2_A\pi^B\Big(\varphi(x, \sigma), e\Big){J}^A_J(x, e)A^J_\nu(x, \sigma, e)-\partial^1_A\pi^B\Big(\varphi(x, \sigma), e\Big)\partial_{\nu}\varphi^A(x, \sigma)\Big)
\\
&+\partial^2_A\pi^I\Big(\varphi(x, \sigma), G(x, g)\Big)\Big({J}^A_J(x, g)\eta^J_\nu(x, g)-{J}^A_\nu(x, g)\Big)
\end{align*}
This can be done using that by Appendix \ref{app A} it holds:
\begin{align*}
{J}^I_A\Big(x, \pi(g, h)\Big)\partial^1_J\pi^A(g, h)&=\partial^1_A\pi^I\Big(G(x, g), G(x, h)\Big){J}^A_J(x, g)
\end{align*}
and also that it has been already proved that $\partial^1_A\pi^I\Big(k, \pi(g, h)\Big)=\partial^1_B\pi^I\Big(\pi(k, g), h\Big)\partial^1_A\pi^B(k, g)$ which implies for $k=\varphi(x, \sigma)$, $g=e$ and $h=G(x, g)$ that:
\begin{align*}
\partial^1_A\pi^I\Big(\varphi(x, \sigma), G(x, g)\Big)=\partial^1_B\pi^I\Big(\varphi(x, \sigma), G(x, g)\Big)\partial^1_A\pi^B\Big(\varphi(x, \sigma), e\Big)
\end{align*}
Moreover it is necessary to use the associativity of the product in $G$ in a different way:
\begin{align*}
\pi^I\Big(k, \pi(g, h)\Big)=\pi^I\Big(\pi(k, g), h\Big) \, &\Longrightarrow\, \frac{\partial}{\partial g^J}\pi^I\Big(k, \pi(g, h)\Big)=\frac{\partial}{\partial g^J}\pi^I\Big(\pi(k, g), h\Big)
\\
\, &\Longrightarrow \, \partial^2_A\pi^I\Big(k, \pi(g, h)\Big)\partial^1_J\pi^A(g, h)=\partial^1_A\pi^I\Big(\pi(k, g), h\Big)\partial^2_J\pi^A(k, g)
\end{align*}
so that we get as well the following relation for $k=\varphi(x, \sigma)$, $g=e$ and $h=G(x, g)$:
\begin{align*}
\partial^2_A\pi^I\Big(\varphi(x, \sigma), G(x, g)\Big)\partial^1_J\pi^A\Big(e, G(x, g)\Big)=\partial^1_A\pi^I\Big(\varphi(x, \sigma), G(x, g)\Big)\partial^2_J\pi^A\Big(\varphi(x, \sigma), e\Big)
\end{align*}
In the same fashion, we also expand:
\begin{align*}
\frac{\partial}{\partial g^J}\Big[\pi^I\Big(\varphi(x, \sigma), G(x, g)\Big)\Big]A^J_\Lambda(x, \sigma, g)-\frac{\partial}{\partial \sigma^\Lambda}\Big[\pi^I\Big(\varphi(x, \sigma), G(x, g)\Big)\Big]
\end{align*}
as:
\begin{align*}
&\partial^1_B\pi^I\Big(\varphi(x, \sigma), G(x, g)\Big)\Big(\partial^2_A\pi^B\Big(\varphi(x, \sigma), e\Big){J}^A_J(x, e)A^J_\Lambda(x, \sigma, e)-\partial^1_A\pi^B\Big(\varphi(x, \sigma), e\Big)\partial_{\Lambda}\varphi^A(x, \sigma)\Big)
\end{align*}
Connecting all these expansions we finally prove the invariance under change of coordinates of both the conditions:
\begin{align*}
A'^I_\mu(x', \sigma', g')&={\bar{J}}^{\nu}_{\mu}\Big(x'(x)\Big)\Big(\partial^2_A\pi^B\Big(\varphi(x, \sigma), e\Big){J}^A_J(x, e)A^J_\nu(x, \sigma, e)
\\
&\,\,\,\,\,\,-\partial^1_A\pi^B\Big(\varphi(x, \sigma), e\Big)\partial_{\nu}\varphi^A(x, \sigma)\Big)\partial^1_B\pi^I\Big(\varphi(x, \sigma), G(x, g)\Big)
\\
&\,\,\,\,\,\,+{\bar{J}}^{\nu}_{\mu}\Big(x'(x)\Big)\Big({J}^A_J(x, g)\eta^J_\nu(x, g)-{J}^A_\nu(x, g)\Big)\partial^2_A\pi^I\Big(\varphi(x, \sigma), G(x, g)\Big)
\\
&\,\,\,\,\,\,-{\bar{J}}^{\nu}_{\mu}\Big(x'(x)\Big){J}^\Xi_\nu(x, \sigma){\bar{J}}^{\Lambda}_{\Xi}\Big(x'(x), \Sigma^\Theta(x, \sigma)\Big)\Big(\partial^2_A\pi^B\Big(\varphi(x, \sigma), e\Big){J}^A_J(x, e)A^J_\Lambda(x, \sigma, e)
\\
&\,\,\,\,\,\,-\partial^1_A\pi^B\Big(\varphi(x, \sigma), e\Big)\partial_{\Lambda}\varphi^A(x, \sigma)\Big)\partial^1_B\pi^I\Big(\varphi(x, \sigma), G(x, g)\Big)
\\
&=A'^A_\mu(x', \sigma', e')\partial^1_A\pi^I\Big(\varphi(x, \sigma), G(x, g)\Big)+{\eta'}^A_\mu\Big(x', G(x, g)\Big)\partial^2_A\pi^I\Big(\varphi(x, \sigma), G(x, g)\Big)
\\
A'^I_\Theta(x', \sigma', g')&={\bar{J}}^{\Lambda}_{\Theta}\Big(x'(x),\Sigma^\Theta(x, \sigma)\Big)\Big(\partial^2_A\pi^B\Big(\varphi(x, \sigma), e\Big){J}^A_J(x, e)A^J_\Lambda(x, \sigma, e)
\\
&\,\,\,\,\,\,-\partial^1_A\pi^B\Big(\varphi(x, \sigma), e\Big)\partial_{\Lambda}\varphi^A(x, \sigma)\Big)\partial^1_B\pi^I\Big(\varphi(x, \sigma), G(x, g)\Big)
\\
&=A'^A_\Theta(x', \sigma', e')\partial^1_A\pi^I\Big(\varphi(x, \sigma), G(x, g)\Big)
\end{align*}

\section*{Acknowledgements} 

We would like to acknowledge the contribution of the local research project {\em Metodi Geometrici in Fisica Matematica e Applicazioni (2025)}, Department of Mathematics, University of Torino, and of INFN ({\em Iniziativa Specifica QGSKY} and {\em Iniziativa Specifica Euclid}). This paper is also supported by INdAM-GNFM. We thank Francesco Cattafi for useful comments.

\end{document}